\documentclass[letterpaper, onecolumn, notitlepage, longbibliography, floatfix, superscriptaddress, aps, unpublished, 12pt,accepted=2023-08-31]{quantumarticle}
\pdfoutput=1
\usepackage{amsmath,amsfonts, amssymb, amsthm, dsfont}
\usepackage{bm}
\usepackage{longtable, array, booktabs}
\usepackage[table]{xcolor}
\usepackage{mathrsfs}
\usepackage{graphicx}
\usepackage{verbatim}
\usepackage{hyperref}
\usepackage{wasysym}
\usepackage{xcolor}
\usepackage{tikz}
\usepackage{physics}

\usepackage[percent]{overpic}
\usepackage{floatrow}
\usepackage[caption=false,label font={bf,normalsize}]{subfig}
\floatsetup[figure]{style=plain,subcapbesideposition=top}
\captionsetup{%
    format=plain,%
    textformat=period,
    justification=justified,
    singlelinecheck=false,
}
\newtheorem{lemma}{Lemma}

\newtheorem{corollary}{Corollary}

\newcommand{\be}{\begin{equation}}
\newcommand{\ee}{\end{equation}}
\definecolor{quantumpurple}{RGB}{82, 35, 124}

\hypersetup{colorlinks=true,citecolor=quantumpurple,linkcolor=quantumpurple, urlcolor=quantumpurple}

\usepackage{empheq}
\usepackage{xcolor}
\definecolor{lightgreen}{HTML}{90EE90}
\newcommand{\boxedeq}[2]{\begin{empheq}[box={\fboxsep=6pt\fbox}]{align}\label{#1}#2\end{empheq}}

\begin{document}
\title{\textcolor{quantumpurple}{Quantum error correction with fractal topological codes}}

\author{Arpit Dua}
\thanks{adua@caltech.edu}
\affiliation{Department of Physics and Institute for Quantum Information and Matter, California Institute of Technology, Pasadena, CA 91125 USA}

\author{Tomas Jochym-O'Connor}
\thanks{tjoc@ibm.com}
\affiliation{IBM Quantum, IBM T.J. Watson Research Center, Yorktown Heights, NY 10598 USA}
\affiliation{IBM Almaden Research Center, San Jose, CA 95120 USA}

\author{Guanyu Zhu}
\thanks{guanyu.zhu@ibm.com}
\affiliation{IBM Quantum, IBM T.J. Watson Research Center, Yorktown Heights, NY 10598 USA}
\affiliation{IBM Almaden Research Center, San Jose, CA 95120 USA}

\begin{abstract}
\noindent
Recently, a class of fractal surface codes (FSCs), has been constructed on fractal lattices with Hausdorff dimension $2+\epsilon$, which admits a fault-tolerant non-Clifford CCZ gate \cite{zhu2021topological}. We investigate the performance of such FSCs as fault-tolerant quantum memories. We prove that there exist decoding strategies with non-zero thresholds for bit-flip and phase-flip errors in the FSCs with Hausdorff dimension $2+\epsilon$. For the bit-flip errors, we adapt the sweep decoder, developed for string-like syndromes in the regular 3D surface code, to the FSCs by designing suitable modifications on the boundaries of the holes in the fractal lattice. Our adaptation of the sweep decoder for the FSCs maintains its self-correcting and single-shot nature. For the phase-flip errors, we employ the minimum-weight-perfect-matching (MWPM) decoder for the point-like syndromes. We report a sustainable fault-tolerant threshold ($\sim 1.7\%$) under phenomenological noise for the sweep decoder and the code capacity threshold (lower bounded by $2.95\%$) for the MWPM decoder for a particular FSC with Hausdorff dimension $D_H\approx2.966$. The latter can be mapped to a lower bound of the critical point of a  confinement-Higgs transition on the fractal lattice, which is tunable via the Hausdorff dimension.
\end{abstract}
\maketitle

Topological stabilizer codes are a highly promising class of codes for scalable architectures of fault-tolerant quantum memories and quantum computation \cite{bravyi1998quantum,Kitaev03,Dennis_2002, Bombin_2006,Fowler_2012_large_scale}. This can be attributed to the geometrically local stabilizer terms and high fault-tolerant thresholds. However, the power of topological stabilizer codes is restricted. For instance, the Bravyi-K$\ddot{\text{o}}$nig bound provides an upper bound on the set of transversally implementable logical gates for topological stabilizer codes on $D$-dimensional lattices to be the $D$-th level of the Clifford hierarchy~\cite{Bravyi_2013_PRL}. Hence, in two-dimensional (2D) topological stabilizer codes, only the Clifford group can be directly implemented fault-tolerantly~\cite{JOKY18}; in order to have a universal logical gate set, magic-state distillation is required, leading to additional space-time overhead~\cite{bravyi2005, Fowler_2012_large_scale, litinski19}. One potential solution is resorting to the 2D non-Abelian Turaev-Viro codes, which evades the simple stabilizer formalism \cite{levin2005,Koenig_2010,schotte2020quantum, Zhu:2020_constant_depth, Lavasani2019universal, Zhu:2018CodeLong, Zhu:2017tr}. An alternate proposal is the three-dimensional (3D) topological stabilizer code for which the fault-tolerant gates can reach the third level of the Clifford hierarchy. In particular, there exists a fault-tolerant single-shot implementation of the non-Clifford CCZ gate in the 3D surface code~\cite{KYP15, VB19}, analogous to the implementation of the $T$~gate in the 3D color code~\cite{Bombin15a,Bombin15b}. Another advantage of the 3D surface code is that the bit-flip sector of syndromes is string-like for which there exists a cellular automaton-based decoding strategy with a local update rule, called the sweep decoder~\cite{kubica_ca, vasmer2020cellular}. This shows that the system is self-correcting for bit-flip errors ~\cite{Brown_ROMP_2016}, implying the single-shot error correction property \cite{Bombin15b}. This remedies the overhead associated with non-local classical communication of the usual decoder \cite{Fowler_PRL2012} in realistic scenarios and also allows hardware-efficient autonomous error correction with dissipation engineering \cite{memories_diss_2011} for bit-flip errors. 

In recent work, the authors have proved that topological order can exist on fractal lattices embedded in $D$~spatial dimensions. In particular, a family of topological stabilizer codes on fractal lattices, embedded in $D$ dimensions, with Hausdorff dimensions $D_H = D-1+\epsilon$, where $\epsilon$ could be an arbitrarily small nonzero number, is constructed~\cite{zhu2021topological}. 
We refer to this family of codes on fractal lattices as the fractal surface codes (FSCs)~\cite{zhu2021topological}. FSCs in 3D, which we simply refer to as FSCs from here on, are constructed by punching homotopically trivial holes with smooth boundaries in the 3D surface code such that the resulting lattice is a fractal. The code distance $d=L$ is preserved under such code puncturing, where $L$ is the linear system size. Moreover, the non-Clifford CCZ gate implementation of the 3D surface code is still possible up to some modifications at the hole boundaries \cite{zhu2021topological}. Surprisingly, due to this possibility, the space-overhead associated with the fault-tolerant universal gate set is reduced to $\mathcal{O}(d^{2+\epsilon})$ compared to $\mathcal{O}(d^3)$ for the 3D surface code. The FSCs still require 3D connectivity of qubits, which can be realized in architectures such as 3D integrated superconducting qubits \cite{mallek2021fabrication,2017_3DISC,IBM3D} and photonic qubits~\cite{bartolucci2021,bombin2021}.

In this work, we study the performance of the FSCs as fault-tolerant quantum memories. For FSCs with arbitrary Hausdorff dimension $D_H$$=$$2$$+$$\epsilon$, we prove that there exist decoding strategies with nonzero thresholds for both the string-like and point-like syndromes under the bit-flip and phase-flip errors respectively. Moreover, for a particular FSC with Hausdorff dimension $D_H$$=$$\ln 26/\ln 3 $$\approx $$2.966$, we report the fault-tolerant (sustainable) threshold using a variant of the sweep decoder for the string-like syndromes in the presence of pure bit-flip noise and the code capacity threshold using the minimum weight perfect matching (MWPM) decoder for the point-like syndromes in presence of pure phase-flip errors. We chose this $D_H$ due to numerical limitations; the decoders are proven to work for FSCs with lower Hausdorff dimension $D_H$$=$$2$$+$$\epsilon$ as well. The sweep decoder needs to be adapted to the presence of the holes in the fractal lattice. We modify the local update rule of the sweep rule near the hole boundaries such that the self-correcting and the single-shot properties of the decoder are maintained for any fractal dimension $D_H$$=$$2$$+$$\epsilon$. We demonstrate the sweep decoder performance for a particular FSC with $D_H$$\approx$$ 2.966$, as mentioned, by calculating logical failure rates under $N$ rounds of noisy syndrome measurements. In the limit $N$$\rightarrow$$\infty$, we obtain the so-called sustainable threshold. The intuition for why the modified sweep decoder maintains a threshold at fractal dimensions is due to the distribution and spacing of the holes in the lattice with a scale-invariant pattern. Namely, while the presence of holes may delay the speed at which a particular error is cleaned up, the holes do not affect the correctability of the various connected components of errors.

\section{Fractal surface code}
In order to define the FSC, we start with the 3D surface code. The 3D surface code is defined on the cubic lattice $\mathcal{L}$, where qubits sit on the edges of the lattice. The code is defined by the following stabilizer generators, $A_\mathsf{v}=\prod_{\mathsf{e} \ni \mathsf{v} }X_\mathsf{e}$ and $B_\mathsf{f}=\prod_{ \mathsf{e} \in \mathsf{f} }Z_\mathsf{e}$ where $\mathsf{v}$, $\mathsf{e}$ and $\mathsf{f}$ denote the vertices, edges and faces (0-, 1-, 2-cells) of $\mathcal{L}$, and $Z_\mathsf{e}$ and $X_\mathsf{e}$ are Pauli operators associated with a given qubit sitting on the edge~$\mathsf{e}$. Pictorially, the bulk stabilizer generators, $A_\mathsf{v}$ and $B_\mathsf{f}$ up to their translates are as follows,  
\be
  \includegraphics[width=0.7\columnwidth]{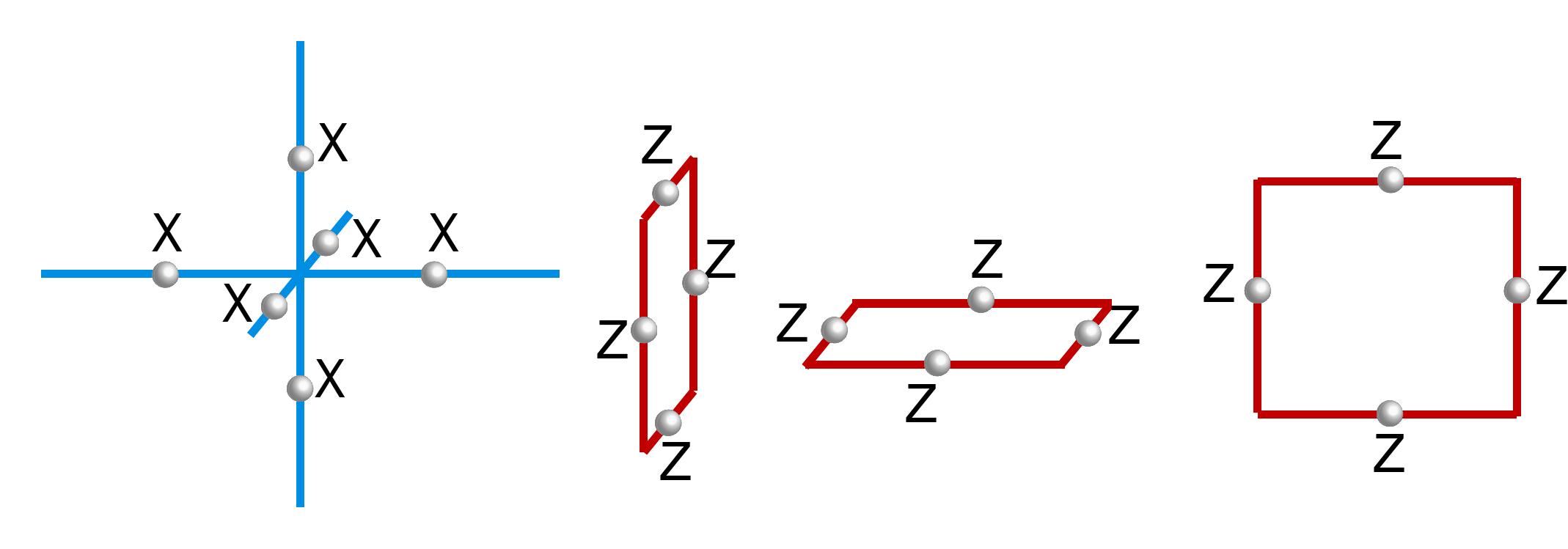}.
\ee
We also consider the dual lattice $\mathcal{L}^*$ obtained from the original lattice $\mathcal{L}$ with the following conversion of the cells: $\mathsf{e} \leftrightarrow \mathsf{f}$ and $\mathsf{v} \leftrightarrow \mathsf{c}$, where $\mathsf{c}$ denotes a cube (3-cell). See Fig.~\ref{fig:lattice_stabs}(a) for these stabilizers on both the original and dual lattice and the latter will be useful to describe the sweep decoder. Since the qubits now sit on the faces $\mathsf{f}^*$ of the dual lattice $\mathcal{L^*}$, the vertex stabilizers are supported on cubes $\mathsf{c}^*$ of $\mathcal{L^*}$ while the plaquette stabilizers are associated to the edges $\mathsf{e}^*$ of $\mathcal{L^*}$ such that the support is on faces $\mathsf{f}^*$ neighboring edge $\mathsf{e}^*$. Mathematically, $A_\mathsf{v}\equiv A_{\mathsf{c}^*}=\prod_{\mathsf{f}^*\in \mathsf{c}^*}X_{\mathsf{f}^*}$ and $B_\mathsf{f}\equiv B_{\mathsf{e}^*}=\prod_{\mathsf{f}^*\ni \mathsf{e}^*}Z_{\mathsf{f}^*}$. The violations of $X$-stabilizers $A_\mathsf{v}$ on $\mathcal{L}$ give rise to the $e$-excitations or point-like syndromes $S_0$ as boundaries of string-like $Z$ error chain $E_1$ (1-chain), \textit{i.e.}, $S_0=\partial E_1$; those of $Z$-stabilizers $B_{\mathsf{e}^*}$ on $\mathcal{L}^*$ lead to $m$-excitations or string-like syndromes $S^*_{1}$ as boundaries of membrane-like $X$ error chain $E^*_2$ (2-chain), \textit{i.e.}, $S^*_{1}=\partial E^*_2$. The associated pair of anti-commuting logical operators is given by (1) a string-like operator $\overline{Z}=\prod_{\mathsf{e} \in [c_1]} Z_\mathsf{e}$ supported on an equivalent class of 1-cycles on $\mathcal{L}$ belonging to the 1st relative homology group over $\mathbb{Z}_2$ coefficients, \textit{i.e.}, $[c_1] \in H_1(\mathcal{L}, \mathcal{B}_e; \mathbb{Z}_2)$, which describes the class of non-contractible 1-cycles including the absolute cycles (loops) and relative cycles (open strings) terminated on the \textit{rough boundaries} (with dangling edges), also called $e$-boundaries $\mathcal{B}_e$ since they condense the $e$-excitations, as illustrated in Fig.~\ref{fig:lattice_stabs}(b); (2) a membrane-like operator $\overline{X}=\prod_{\mathsf{f}^* \in [c^*_2]} X_{\mathsf{f}^*}$ supported on an equivalent class of 2-cycles on $\mathcal{L}^*$ belonging to the 2nd relative homology group, \textit{i.e.}, $[c_2^*] \in H_2(\mathcal{L}^*, \mathcal{B}^*_m; \mathbb{Z}_2)$, which describes the class of non-contractible  absolute 2-cycles (closed membranes) and relative 2-cycles (open membranes) terminated on the \textit{smooth boundaries} (without  dangling edges), also called $m$-boundaries $\mathcal{B}_m^*$ since they condense the $m$-excitations. 

\begin{figure}[H]
    \centering
 \includegraphics[scale=0.4]{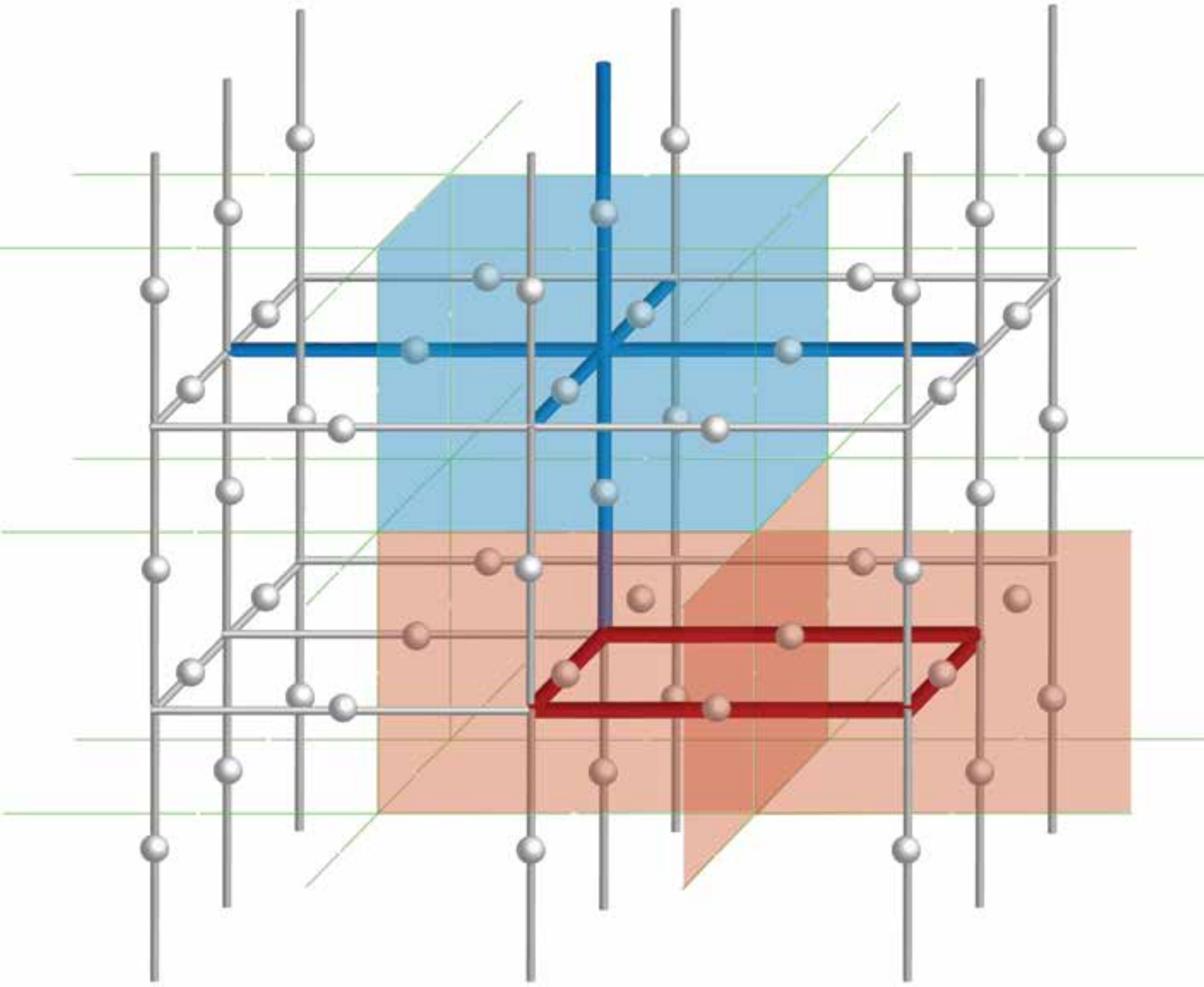}
 \includegraphics[scale=0.4]{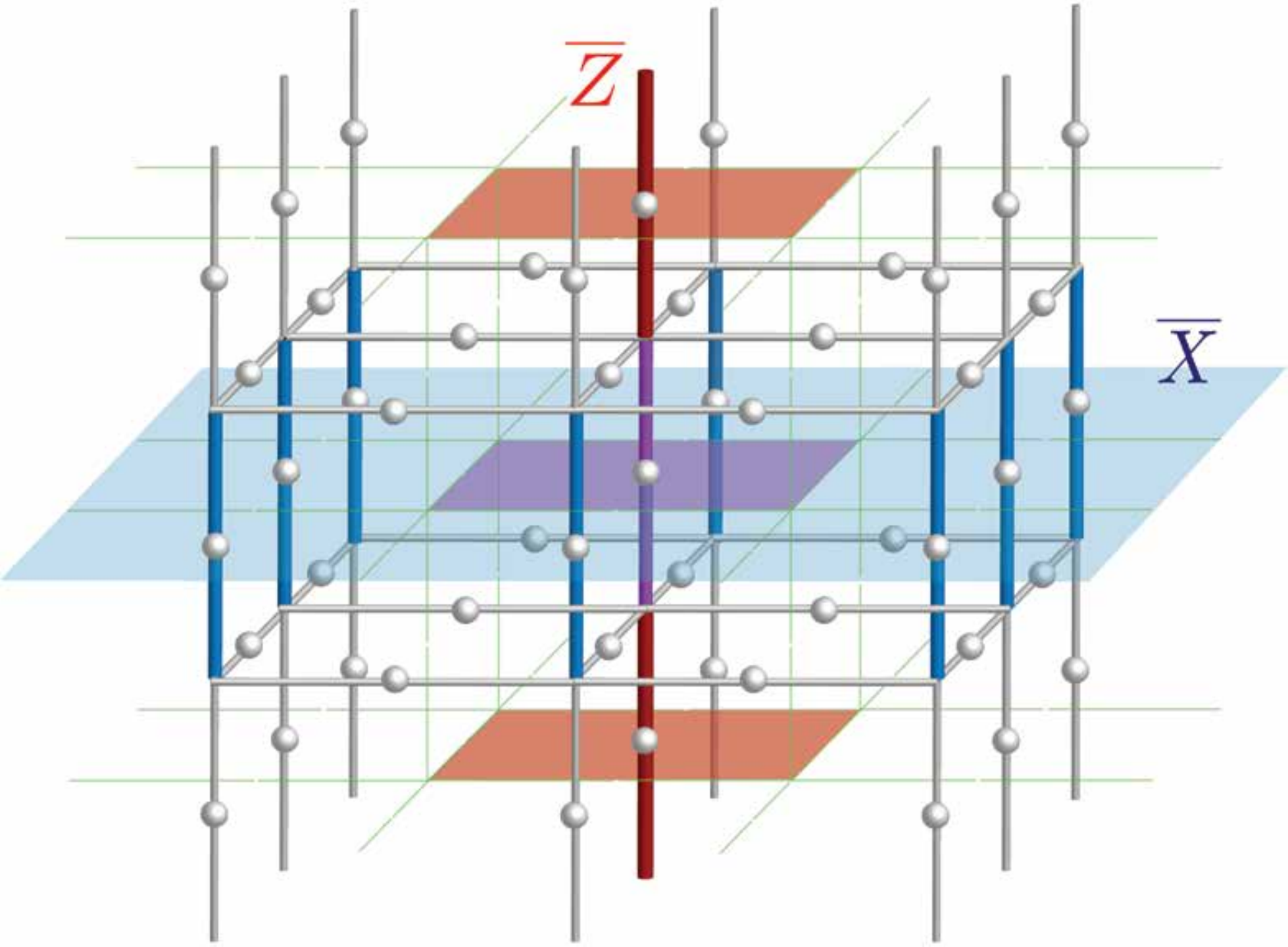}     
    \caption{Left: 3D surface code on the original lattice $\mathcal{L}$ (black) and the dual lattice $\mathcal{L}^*$ (thin green). The edges highlighted in red (red faces on $\mathcal{L}^*$) denote the action of Pauli $X$ operators, while the edges highlighted in blue (blue faces on $\mathcal{L}^*$) denote the action of Pauli $Z$ operators. The vertex stabilizer $A_\mathsf{v}=\prod_{\mathsf{e} \ni \mathsf{v} }X_\mathsf{e}$ and plaquette stabilizers $B_\mathsf{f}=\prod_{ \mathsf{e} \in \mathsf{f} }Z_\mathsf{e}$, where $e$ and $f$ refer to the edges and faces of $\mathcal{L}$, are shown in blue and red respectively. Right: The string logical operator $\overline{Z}$ and the membrane logical operator $\overline{X}$ are shown.}
    \label{fig:lattice_stabs}
\end{figure}
The FSC is obtained from the 3D surface code on a cubic lattice by punching holes with $m$-boundaries which we refer to as $m$-holes, in an iterative manner. The first iteration involves starting with the original cubic lattice of the surface code, which we call the level-$0$ cube. In the $\ell^\text{th}$ iteration, one divides each level-$\ell$ cube equally into $a \times a \times a$ level-$(\ell+1)$ cubes with linear size $1/a$ of a level-$\ell$ cube, and punch an $m$-hole in the center occupying $b \times b\times b$ cubes, and we obtain the Fractal Cube (FC) geometry for level $\ell$ as $FC(a, b, \ell)$. The fractal cube geometry is generated in the asymptotic limit, \textit{i.e.}, $FC(a, b) \equiv \lim_{\ell \rightarrow \infty} FC(a, b, \ell)$. Requiring $b<a$, we get an asymptotic fractal cube geometry with Hausdorff dimension $D_H\rightarrow2$ in the limit $b/a\rightarrow 1$~\cite{zhu2021topological}. 

\begin{figure}[H]
    \centering
   \sidesubfloat[]{\includegraphics[width=0.82\columnwidth]{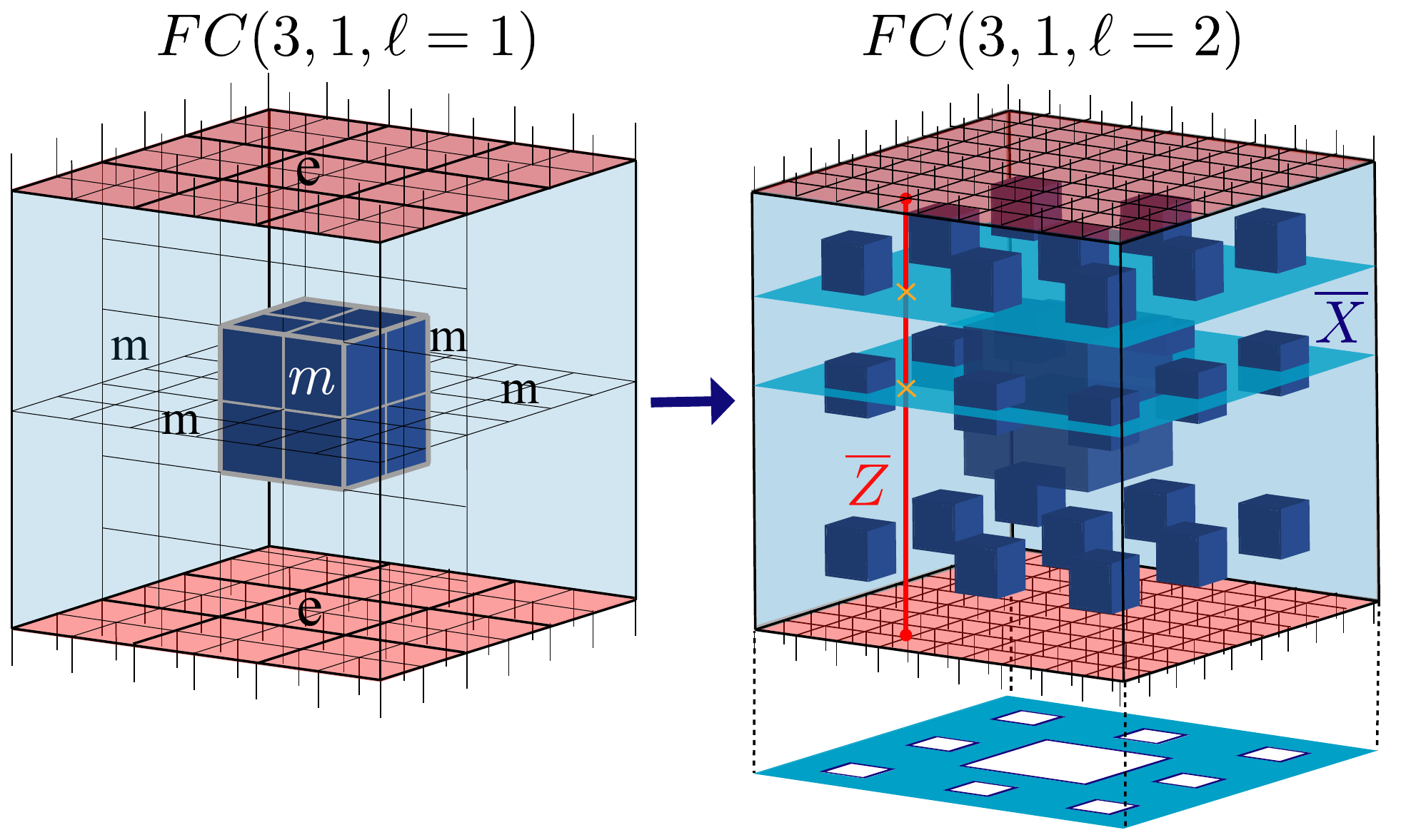}}\\
   \sidesubfloat[]{ \includegraphics[scale=0.5]{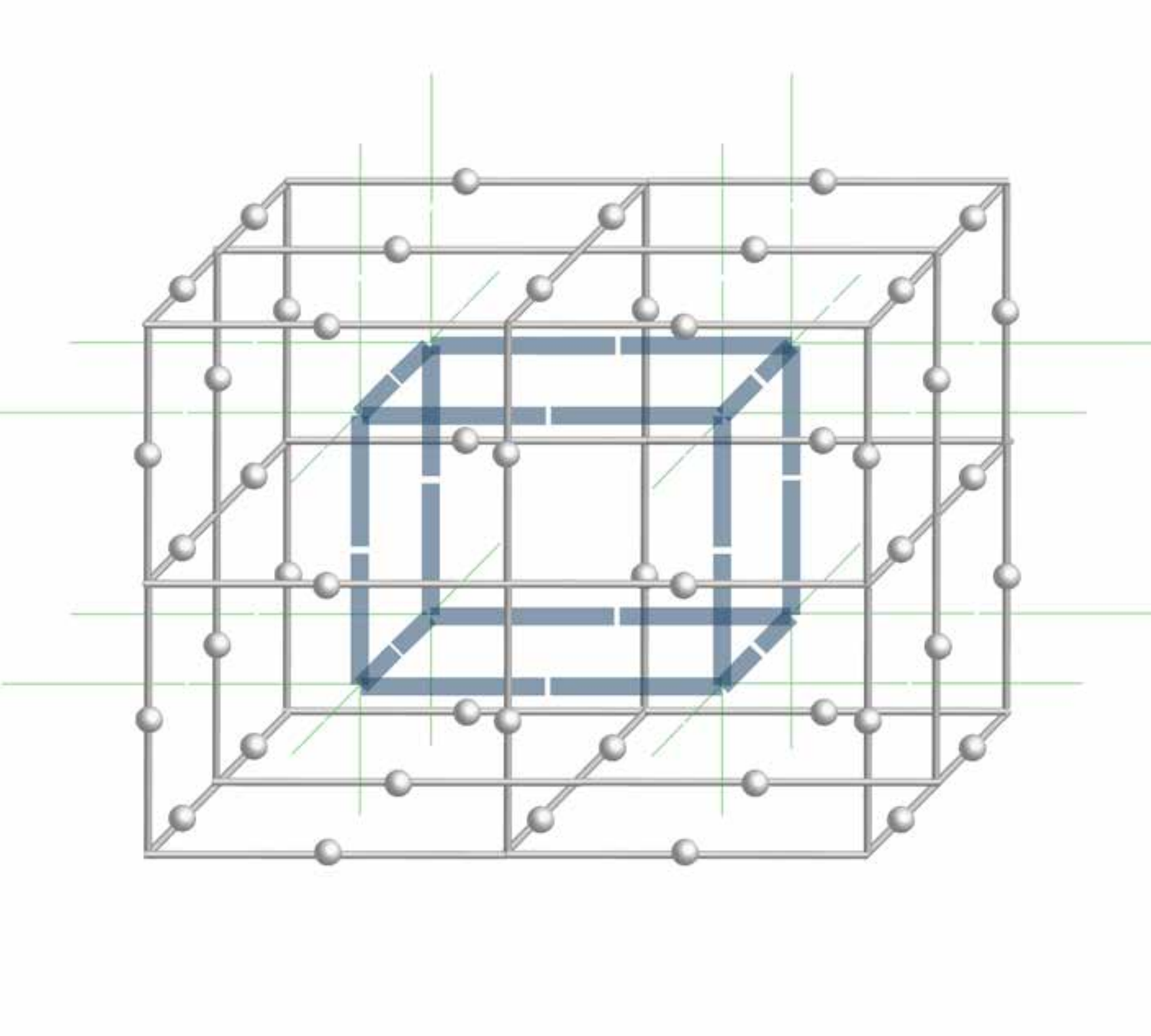}}
   \vspace{-3mm}
    \caption{(a) Levels 1 and 2 of the fractal surface code are shown as $FC(3,1,l=1)$ and $FC(3,1,l=2)$. For level 1, the single $m$-hole is shown, and for level 2, an extra iteration of punched holes is shown. The anti-commuting pair of logical operators $\overline{X}$ and $\overline{Z}$ are also shown for level 2. The minimum weight representation of membrane operator $\overline{X}$, which is interrupted by the holes and forms a Sierpinski carpet, is shown separately at the bottom.  In (b), we show a $1\times 1\times 1$ size $m$-hole. On the original lattice, the missing edges form the hole, while on the dual lattice, the $m$-hole is shown as a cube with gray-blue edges, where the missing qubits are given by the faces of the cube.}
    \label{fig:FSC_and_mholes}
\end{figure}

The code space $\mathcal{H}_\mathcal{C}$ is mathematically determined by the 1st relative homology group $H_1(\widetilde{\mathcal{L}}, \mathcal{B}_e; \mathbb{Z}_2)$, defined on the fractal lattice $\widetilde{\mathcal{L}}$. As mentioned, the $m$-holes in the FSC are chosen to be homeomorphic to 3-dimensional balls, and they have a trivial contribution to the homology group, meaning that they do not encode any additional logical qubit (see Sec. II in Supplementary for a counting argument). This can be seen by the fact that logical-$Z$ string can neither terminate on the boundaries of m-holes (due to the condensation property) nor enclose these holes since they are 3-balls. Thus, the code space can be expressed as
\be\label{eq:homology_group}
\mathcal{H}_\mathcal{C}= \mathbb{C}^{H_1(\widetilde{\mathcal{L}}, \mathcal{B}_e; \mathbb{Z}_2)} = \mathbb{C}^{\mathbb{Z}_2}= \mathbb{C}^2,
\ee
where the non-trivial contribution of  $\mathbb{Z}_2$ to $H_1$ comes from the logical-$Z$ string connecting the top and bottom $e$-boundaries circumventing any $m$-holes (see Ref.~\onlinecite{zhu2021topological} for a detailed proof). We hence have one encoded logical qubit. The unique class of the dual logical-$X$ membranes, determined by the 2nd relative homology group which is isomorphic to the above 1st relative homology group~\footnote{This is due to the Poincaré-Lefschetz duality, which is also consistent with the structure of a  CSS code.}, is hence $H_2(\widetilde{\mathcal{L}}^*, \mathcal{B}^*_m; \mathbb{Z}_2) $$\cong$$H_1(\widetilde{\mathcal{L}}, \mathcal{B}_e; \mathbb{Z}_2) $$=$$ \mathbb{Z}_2$. The minimum weight logical-$X$ membrane now terminates on the $m$-holes and is supported on a Sierpinski carpet with Hausdorff dimension $D^X_H=\ln 8/\ln 3\approx 1.893$ and corresponding $X$-distance $d_X \sim L^{1.893}$ in the case of $FC(3,1)$, as illustrated in Fig.~\ref{fig:FSC_and_mholes}. The overall distance is determined by the minimum length of the logical-$Z$ string $d$$=$$\min\{d_X,d_Z\}$$=$$d_Z$$=$$L$. Since the $X$ error is still membrane-like with string-like syndromes on its boundary, the self-correction and single-shot nature of error correction for the loop sector are still expected to survive in the case of imperfect syndrome measurements.

\subsection*{Decoding the FSC}
We study the performance of the 3D FSC under two i.i.d.~Pauli noise models, \textit{i.e.}, the bit-flip and the phase-flip error models. The first one is described by the following Pauli noise channel, 
\begin{equation}\label{eq:channel}
\rho\rightarrow \sum_\mathsf{e} \left[ p_{X} X_\mathsf{e}\rho X_\mathsf{e}+(1-p_{X})\rho \right],
\end{equation}
where $\rho$ is the density matrix describing the state of the code and $p_{X}$ is the single-qubit $X$ error rate. To get the phase-flip noise model, switch $X$ to $Z$ in Eq.~\eqref{eq:channel}.

According to the asymptotic definition of the fractal cube geometry, the threshold of the FSC, $p_{\text{th}}(FC(a, b))$ is defined as
\begin{equation}
    p_{\text{th}}(FC(a, b))= \lim_{\ell \rightarrow \infty} p_{\text{th}}(FC(a, b, \ell))
\end{equation}
where $p_{\text{th}}(FC(a, b, \ell))$ is the threshold for level $\ell$ of the FSC. In this work, we evaluate the thresholds for levels $\ell=1,2$ and estimate the threshold for the limit $\ell\rightarrow\infty$ \textit{i.e.,} for $FC(a, b)$. 

As mentioned, we use a cellular automaton-based decoder called the sweep decoder to decode the $X$ errors and the MWPM decoder to decode the $Z$ errors. However, the sweep decoder needs to be adapted to account for the $m$-holes. In Sec.~\ref{sec:sweep} below, we first describe how the sweep decoder works for the 3D surface code and then generalize to the FSC. The sweep decoder can be defined generally for any causal codes \cite{vasmer2020cellular}. However, we restrict our description of the decoder to the 3D surface code on a cubic lattice for clarity. 

\section{Sweep Decoder}
\label{sec:sweep}
The main ingredient of the sweep decoder is the sweep rule. Intuitively speaking, the sweep rule takes into input the syndrome and applies a correction operator such that a part of the syndrome is swept towards a particular direction called the sweep direction. Eventually, under enough such sweeps, the syndrome is cleaned. For perfect measurements, the syndrome consists of closed loops; hence partial sections of a given loop move towards its remaining sections and close on to them. We illustrate the sweep rule in Fig.~\ref{fig:sweep_illust}(a). 

\begin{figure}[H]
    \centering
\includegraphics[width=0.95\columnwidth]{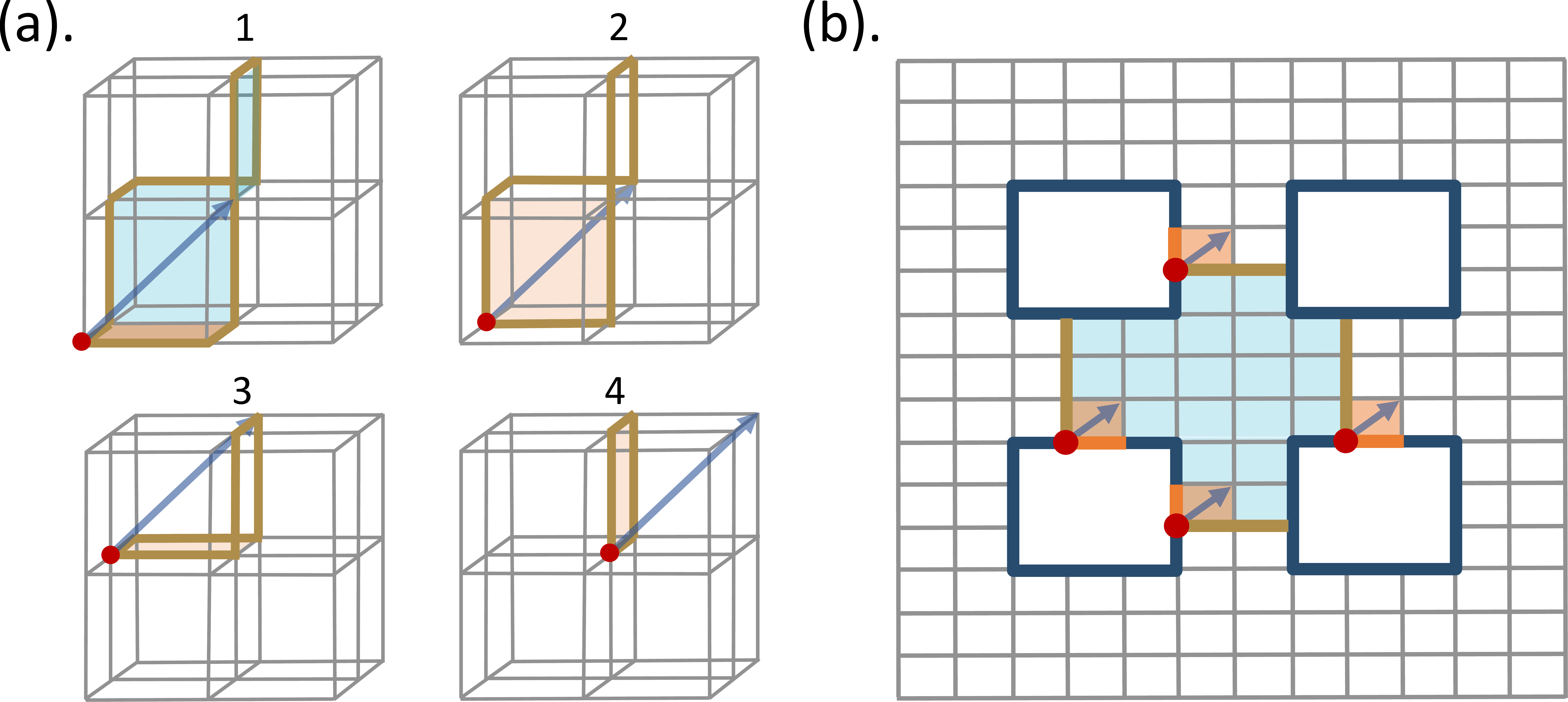}
    \caption{Illustration of the sweep rule. The sweep direction is shown using arrows next to extremal vertices in red. The plaquettes shaded in blue denote the errors that generate the syndrome, and the plaquettes shaded in orange are the ones that are swept or on which a correction is applied. (a) Sweep rule applied four times (1-4) on the syndrome (brown) that sits on the edges $\mathsf{e}^*$ of the dual cubic lattice $\mathcal{L}^*$. (b) A 2D projection of a trapped syndrome configuration (brown) between $m$-holes (dark blue). The modified sweep rule allows us to put imaginary syndromes on an edge (brown) in the future on the surface of the hole.}
    \label{fig:sweep_illust}
\end{figure}

We now state the sweep rule for the regular 3D surface code on a cubic lattice  $\mathcal{L}$ formally. To do so, it is convenient to consider the dual lattice $\mathcal{L}^*$ such that the syndromes live on the edges $\mathsf{e}^*$ and qubits on the faces $\mathsf{f}^*$. Note that from here on, we work in this dual picture. Based on the sweep direction $\hat{s}$, each edge attached to a vertex is assigned to be in the future or the past of that vertex. For example, consider a vertex $\mathsf{v}^*$ and for each edge $\mathsf{e}^*$ emanating from $\mathsf{v}^*$, associate a unit vector $\hat{\mathsf{v}}^*_{\mathsf{e}^*}$ in the direction along the edge $\mathsf{e}^*$ starting from $\mathsf{v}^*$. If the dot product of $\hat{s}$ with the $\hat{\mathsf{v}}^*_{\mathsf{e}^*}$ is non-negative, the edge is in the future of that vertex.

If all nontrivial syndrome edges emanating from the vertex are in the future of that vertex, we refer to the vertex as extremal. If a vertex $\mathsf{v}^*$ is extremal, then we apply a correction operator on on the face (qubit) $\mathsf{f}^*$ enclosed by the future syndrome edges connected to $\mathsf{v}^*$ such that the syndrome on all the edges of $\mathsf{f}^*$,  $\mathsf{e}^*\subset\mathsf{f}^*$ is flipped. This effectively sweeps the future syndrome edges connected to the vertex to the other two edges of the face. We apply this step for all extremal vertices in parallel. Under a sufficient number of such consecutive implementations of the sweep rule, most syndrome configurations clean up in a time or number of steps that scales linearly with the system size. 

We now discuss the sweep rule for the 3D FSC on a lattice with open boundaries. For the fractal lattice $\widetilde{\mathcal{L}}^*$, we need a modification to the sweep rule because configurations of syndromes joining the holes can get trapped if one uses the regular sweep rule. These configurations are resistant to the regular sweep rule because there are no extremal vertices even if the sweep direction is changed. This is similar to the configuration of a pair of parallel lines that span the lattice. We illustrate a trapped configuration between holes of the 3D FSC via a 2D projection. To relieve these trapped configurations connecting holes, the sweep rule needs to be modified on the hole boundary~\footnote{This differs from the technique of alternating sweep directions used to address syndromes trapped at the global boundary as addressed in Ref.~\cite{vasmer2020cellular}.}. 

For the fractal lattice $\widetilde{\mathcal{L}}^*$ considered in this work, even though some of the qubits are removed in the holes made in a regular lattice $\mathcal{L}^*$, our goal is to imitate the sweep decoder in the regular lattice as much as possible. The modification to the sweep rule is only on the vertices ${\mathsf{v}}^*_h$ that are at the boundary of the $m$-hole. Note that in the (original) fractal lattice ${\mathcal{L}}$, these correspond to the cubes ${\mathsf{c}}_h$ that form the outermost layer of cubes inside the $m$-hole as shown in Fig.~\ref{fig:FSC_and_mholes}(b). Even though there are no qubits on the hole faces ${\mathsf{f}}^*_h$, we include the vertices ${\mathsf{v}}^*_h$ in sweep indices, which is the set of vertices considered for application of the sweep rule. Now, we note that in the case of perfect syndrome measurements, for the bulk vertices, only an even number of syndrome edges in ${\mathsf{e}}^*$ can be incident on each vertex. This is not necessarily true for the vertices ${\mathsf{v}}^*_h$. If a vertex on the hole boundary ${\mathsf{v}}^*_h$ is connected to zero or more than one syndrome edge, then we apply the regular sweep rule on that vertex. If ${\mathsf{v}}^*_h$ has exactly one syndrome edge connected to it, then we first check whether this syndrome edge is in its future or not. If yes, then this vertex is already extremal; however, since there is only one syndrome edge, the sweep rule is unable to specify the correction. We check the future edges of this vertex that lie on the surface of the hole and, hence, have no associated stabilizer. Among these edges, we choose one randomly and mark it as an imaginary syndrome edge associated with the vertex. Once we have included this imaginary syndrome, the regular sweep rule can specify a correction operator for the extremal vertex ${\mathsf{v}}^*_h$. To summarize, the modification to the sweep rule is as follows: the vertices on the hole boundary are allowed in the sweep indices, and if a hole vertex has a single nontrivial syndrome edge connected to it, then one out of its future edges which sits on the hole boundary is assigned a syndrome. 


For the case of imperfect measurements with an error rate $q$, we implement the modified sweep rule without change. The intuition behind this is as follows: faulty measurement could lead to broken string-like syndromes due to certain missing syndrome edges, ``ghost" (false) syndromes, or, generally speaking, local deformation of the syndromes, all with small probability $q$. This leads to missing or wrong updates in certain local regions, which is effectively converted again to the phase-flip error with a small probability in the next round. Therefore, the decoder, applied as if to a memory with pure $X$ error, is still expected to have a threshold.

In Sec.~\ref{sec:algoproof} below, we prove that there exist non-zero thresholds for the bit-flip noise with the sweep decoder. The essence of the proof is that the holes can, in a sense, be treated like errors since, in the worst case, any error that attaches to a hole may take extra time that is linear in that hole size to be corrected. However, due to the scale-invariance of the fractal, the number of holes of a given size drops off exponentially. Hence, the errors that attach themselves to the holes can be cleaned up sufficiently quickly with high probability since they will be far away from other holes of similar size. More specifically, while large holes are close to many smaller holes, their distance from a hole of a similar size is on the order of the size of the hole itself. As such, any smaller error will only, at most, feel the effect of one of the larger holes (not multiple) that it connects to. Therefore, following similar arguments for the 3D surface code, error patches of large size will be increasingly improbable with growing size, and while the correction of an error may be delayed if connected to a large hole, it will not result in connecting multiple large holes. It will eventually be cleaned up by the sweep decoder in a time that is linear in the size of the region that bounds the error and the holes connected to it. In the proof, we formalize these concepts in the language of error chunks and connected components as discussed in Refs.~\cite{BravyiHaah2013_3Dcubic, kubica_ca}.

 
\subsection{Proof of sweep decoder threshold in the FSC}
\label{sec:algoproof}

Let $\mathcal{E}$ be the set of all possible Pauli~$X$ errors in the 3D fractal code. By definition, the fractal will have hole sizes at each iteration of cubic length~$H_i = \alpha D^i$, where $\alpha = b/a \in (0,1)$ is related to the $a$ and $b$ terms introduced in the definition of the fractal codes~$\text{FC}(a,b)$ and $D$ is the smallest sized cube when partitioning. We define all single-qubit errors to be level-0 chunks. All subsequent level chunks are defined recursively as follows: a level-$n$ chunk is the disjoint union of two level-$(n-1)$ chunks, $E^{[n]} = E^{[n-1]}_1 \sqcup E^{[n-1]}_2$ such that the diameter of the level-$n$ chunk is smaller or equal to $\alpha D^n / 2$, that is $\text{diam}(E[n]) \le \alpha D^n/2$.

\begin{figure}[H]
\begin{center}
\includegraphics[width=0.4\textwidth]{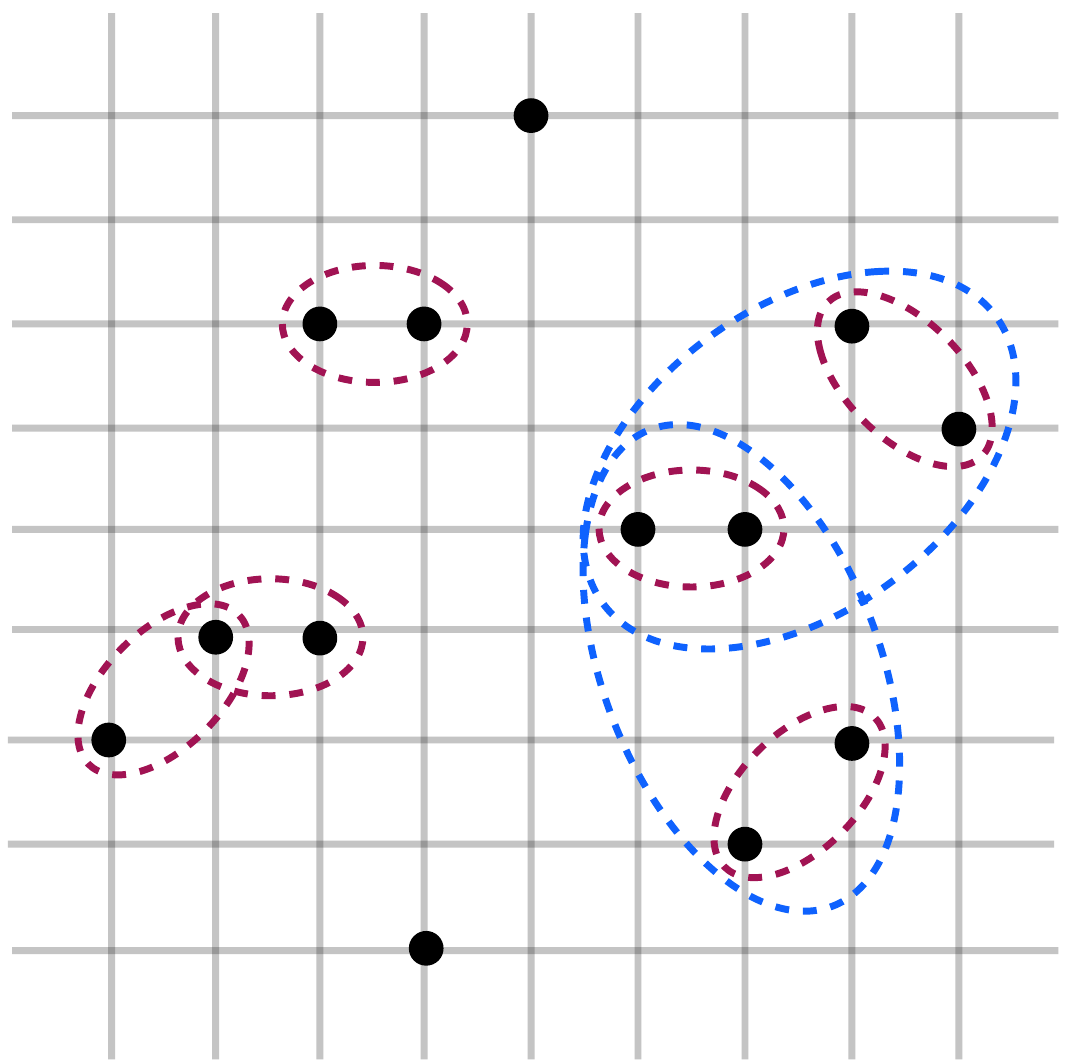}
\caption{Example of different level-$n$ chunks of errors in two dimensions. The black dots represent errors belonging to $E_0$ by definition. In this example, the level-1 chunks, dashed red lines, are pairs of errors from~$E_0$ that are distance at most two from one another (in Manhattan distance). The level-2 chunks are formed from pairs of disjoint level-1 chunks such that all errors are within distance four from one another (again, Manhattan distance). Since there are no pair of disjoint level-2 chunks, there are no level-3 chunks (and higher). All errors that are not surrounded by a dashed line will therefore be in~$F_0 = E_0 \backslash E_1$, all errors that are only surrounded by red dashed lines will be in~$F_1$, and finally, the errors surrounded by the blue dashed lines will belong to $F_2$. }
\label{fig:ErrorChunks}
\end{center}
\end{figure}

Given some error pattern $E$, we define the set of errors~$E_n$ to be the union over all level-$n$ chunks, and as such: $E = E_0 \supseteq E_1 \supseteq E_2 \supseteq \cdots \supseteq E_{m+1} = \emptyset$, where $m$ is the smallest integer such that $E_{m+1}$ is empty which will always be satisfied as the configuration space of errors~$\mathcal{E}$ is countable. Finally, by defining $F_i = E_i \backslash E_{i+1}$, every individual error from $E$ can be classified into one of the sets $F_0, \cdots, F_m$, which we call the chunk decomposition of~$E$. See Fig.~\ref{fig:ErrorChunks} for an illustration of different level-$n$ chunks. 

We would like to prove then that given an independent noise model for each of the physical qubits, the probability of generating a high-level chunk decreases exponentially with the level of the chunk, even in the presence of holes. We do so following the techniques developed in percolation theory, mirroring the procedure from Refs.~\cite{BravyiHaah2013_3Dcubic, kubica_ca}.

Consider a random error $E \in \mathcal{E}$ of independently generated single qubit Pauli~$X$ errors. Let $B_n(x)$ be a fixed box of linear size~$H_n = \alpha D^n$ centered at $x$ and $B_n^{+}(x)$ to be a box of linear size $3H_n$ centered at~$x$. We define the following probabilities:
\begin{align*}
&p_n(x) = \text{Pr} \big[ B_n(x) \text{ has a non-zero overlap with level-$n$}\text{chunk of $E$} \big] \\
&\tilde{p}_n(x) = \text{Pr} \left[ B_n^{+}(x) \text{ contains a level-$n$ chunk of $E$} \right]
\end{align*}
\begin{align*}
&q_n(x) = \text{Pr} \big[ B_n^{+}(x) \text{ contains 2 disjoint level-$(n-1)$}\text{chunks of $E$} \big] \\
&r_n(x) = \text{Pr} \left[ B_n^{+}(x) \text{ contains a level-$(n-1)$ chunk of $E$} \right]
\end{align*}

For a given~$x$, if $B_n(x)$ has a non-zero overlap with a level-$n$ chunk since the diameter of a given level-$n$ chunk is at most~$\alpha D^n/2$, that chunk will necessarily be contained within $B_n^{+}(x)$. Moreover, if $B_n^{+}(x)$ contains a level-$n$ chunk then by definition it must contain 2 disjoint level-$(n-1)$ chunks. As such, $p_n(x) \le \tilde{p}_n(x) \le q_n(x)$. Moreover, since an event that contributes to $r_n(x)$ is the result of two independent events contributing to $q_n(x)$, by the van~den~Berg and Kesten inequality, we must have $q_n(x) \le r_n(x)^2$.

Since the box $B_n^{+}(x)$ is the disjoint union of $(3D)^3$ boxes of linear size~$H_{n-1}$, we have the following:
\begin{align*}
r_n(x) \le \sum_{x'} p_{n-1} (x') \le (3D)^3 \max_{x'} p_{n-1}(x')
\end{align*}
and thus,
\begin{align*}
p_n(x) \le r_n(x)^2 \le (3D)^6 \max_{x'} p_{n-1}^2(x').
\end{align*}

Finally, if we iterate the process over all $n$:
\begin{align*}
p_n(x) \le (3D)^{-6} \left( (3D)^6 p_0 \right)^{2^n},
\end{align*}
and $p_0$ is just the bare single qubit error rate~$\epsilon$.
Thus, the probability of a given chunk is exponentially suppressed in its size. The final piece that is needed before arguing for the ability of the sweep decoder to succeed with high probability is the notion of a connected component. A set of errors $E \in \mathcal{E}$ is a $R$-connected component if it cannot be split into two disjoint non-empty sets $E = M_1 \sqcup M_2$ such that the distance $d(M_1, M_2) > R$. The critical fact is that for a given error~$E$, any set of errors belonging to~$F_n$ will necessarily belong to a $(H_n)$-connected component that is sufficiently separated away from other elements of~$E_n$. The following Lemma summarizes this.

\begin{lemma}[\cite{BravyiHaah2013_3Dcubic}]
\label{lem:connectedcomponent}
Let $H_n = \alpha D^n$, such that $D \ge 6$ and let $M \in \mathcal{E} $ be a set of errors belonging to a $H_n$-connected component of $F_n$. Then, $M$ has a diameter $\le H_n$ and $d(M, E_n \backslash M) > H_{n+1}/3$.
\end{lemma}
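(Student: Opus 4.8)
The plan is to unpack the two claims in Lemma~\ref{lem:connectedcomponent} from the chunk decomposition directly, using the recursive diameter bound built into the definition of level-$n$ chunks together with a minimality/maximality argument for why elements of $F_n$ sit inside $H_n$-connected components. Recall $F_n = E_n \backslash E_{n+1}$: every error in $F_n$ belongs to some level-$n$ chunk (since it is in $E_n$) but to no level-$(n+1)$ chunk. I would first argue that a maximal $H_n$-connected component $M$ of $F_n$ is in fact contained in a single level-$n$ chunk. The point is that the level-$n$ chunks partition $E_n$ into pieces that are pairwise far apart: if two distinct level-$n$ chunks were within distance $\le H_n$ of each other, their union would be a level-$(n+1)$ chunk provided its diameter is $\le \alpha D^{n+1}/2 = H_{n+1}/2$, which one checks from $\mathrm{diam}(E^{[n]}_i)\le \alpha D^n/2$ and the $D\ge 6$ hypothesis: $\alpha D^n/2 + H_n + \alpha D^n/2 = \alpha D^n(1 + 1/1)\cdot\tfrac12\cdot\!\ldots$ — more carefully, $\le \tfrac12 H_{n+1}$ amounts to $2\alpha D^n \le \tfrac12 \alpha D^{n+1}$, i.e. $D \ge 4$, which holds. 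So any two level-$n$ chunks that are $H_n$-close merge into a level-$(n+1)$ chunk, contradicting that the errors in question lie in $F_n$ and hence in no level-$(n+1)$ chunk. Therefore an $H_n$-connected component of $F_n$ cannot straddle two level-$n$ chunks, giving $\mathrm{diam}(M) \le \alpha D^n/2 \le H_n$, which is the first assertion (with room to spare).

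For the separation claim $d(M, E_n\backslash M) > H_{n+1}/3$, I would split $E_n \backslash M$ into the part lying in the same level-$n$ chunk as $M$ and the part lying in other level-$n$ chunks. Since $M$ is a \emph{maximal} $H_n$-connected component of $F_n$, anything in $F_n$ outside $M$ is at distance $> H_n$ from $M$; and anything in $E_n \backslash F_n = E_{n+1}$ that shares a level-$n$ chunk with $M$ — here is where I need to be a little careful — one argues it is also far from $M$, because otherwise $M$ together with that error would be $H_n$-connected and still lie in $F_n$... except $E_{n+1}$ errors are not in $F_n$. The cleaner route, following Bravyi–Haah, is: let $C$ be the level-$n$ chunk containing $M$. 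Errors in $C \cap E_{n+1}$ belong to some level-$(n+1)$ chunk $C'$. If $d(M, C') \le H_{n+1}/3$, then $\mathrm{diam}(M \cup C') \le H_n + H_{n+1}/3 + \alpha D^{n+1}/2 = H_{n+1}(1/D + 1/3 + 1/2) < H_{n+1}$ when $D \ge 6$ (since $1/6 + 1/3 + 1/2 = 1$, so strictly less for $D \ge 7$; for $D=6$ one uses the strict diameter bounds), making $M \cup C'$ a level-$(n+1)$ chunk that contains $M$, contradicting $M \subseteq F_n = E_n \backslash E_{n+1}$. Other level-$n$ chunks are $> H_n \ge H_{n+1}/3$ away by the merging argument above. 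Hence $d(M, E_n \backslash M) > H_{n+1}/3$.

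Since this is quoted as Lemma~\ref{lem:connectedcomponent} from Ref.~\cite{BravyiHaah2013_3Dcubic}, the cleanest presentation is to \emph{cite the argument} and only indicate the adaptation: the geometry here is identical to the cubic-lattice case of \cite{BravyiHaah2013_3Dcubic}, with $H_n = \alpha D^n$ playing the role of their length scale, so the proof transfers verbatim once one observes that the recursion $H_{n+1} = D\,H_n$ with $D \ge 6$ gives enough slack in all the triangle-inequality estimates. I would therefore write a short proof that (i) states the merging lemma for level-$n$ chunks, (ii) deduces $\mathrm{diam}(M) \le H_n$ from $M$ living in one chunk, (iii) runs the contradiction argument for the separation bound, and (iv) remarks that all numeric inequalities are exactly the $D\ge 6$ conditions of \cite{BravyiHaah2013_3Dcubic}.

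\textbf{The main obstacle} is bookkeeping the constants: the definition above uses diameter bound $\alpha D^n/2$ for level-$n$ chunks and boxes of size $H_n = \alpha D^n$ and $3H_n$, and one must verify that $D \ge 6$ is exactly the threshold making $\mathrm{diam}(M) + \text{(gap)} + \mathrm{diam}(C') < H_{n+1}$ fail to produce a spurious higher chunk — i.e., that the bound $H_{n+1}/3$ is the best constant compatible with $1/D + 1/3 + 1/2 \le 1$. Getting the inequalities to close at $D = 6$ rather than $D = 7$ may require invoking that chunk diameters are \emph{strictly} less than the stated bound in the relevant borderline configurations, or simply stating the lemma for $D \ge 6$ with the understanding (as in \cite{BravyiHaah2013_3Dcubic}) that the constant in the separation can be taken slightly smaller if one wants strict inequality uniformly. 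I do not expect any genuinely new idea to be needed beyond what is in \cite{BravyiHaah2013_3Dcubic}; the only FSC-specific point — that holes do not interfere — is handled later, not in this lemma, which is purely about the error chunk combinatorics on a (sub)lattice of $\mathbb{Z}^3$.
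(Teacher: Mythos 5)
The paper's proof is shorter and sharper than your proposal: it proves a single dichotomy claim by contradiction, namely that for any $m \in M \subseteq F_n$ and $p \in E_n$ one cannot have $H_n < d(m,p) \le H_{n+1}/3$, because then the level-$n$ chunks $M_n \ni m$ and $P_n \ni p$ would be disjoint (each has diameter $\le H_n/2 < d(m,p)$) and
\[
\text{diam}(M_n \cup P_n) \le H_n/2 + d(m,p) + H_n/2 \le H_n + H_{n+1}/3 \le H_{n+1}/6 + H_{n+1}/3 = H_{n+1}/2,
\]
so $M_n \sqcup P_n$ would be a level-$(n+1)$ chunk and $m \in E_{n+1}$, a contradiction. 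Both claims of the lemma fall out of this one inequality, and $D\ge 6$ is exactly what makes $H_n \le H_{n+1}/6$ close the triangle-inequality chain. Your write-up has the same ingredients (disjoint chunk merging, triangle inequality, $D\ge 6$), but the separation argument contains two genuine errors.

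First, the line ``Other level-$n$ chunks are $> H_n \ge H_{n+1}/3$ away'' has the inequality backwards: $H_{n+1}/3 = (D/3)H_n > H_n$ whenever $D > 3$, so $D \ge 6$ in fact gives $H_n < H_{n+1}/3$. Showing a separation $> H_n$ therefore does \emph{not} imply the required separation $> H_{n+1}/3$; you need the dichotomy (or equivalent) to promote a gap $> H_n$ to a gap $> H_{n+1}/3$. Second, in the case $p \in E_{n+1}$ you try to merge $M$ with a level-$(n+1)$ chunk $C'$ and conclude $M\cup C'$ is a level-$(n+1)$ chunk from $\text{diam}(M\cup C') < H_{n+1}$. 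This fails on two counts: the defining diameter bound for a level-$(n+1)$ chunk is $\alpha D^{n+1}/2 = H_{n+1}/2$, not $H_{n+1}$, so the bound you obtain is a factor of two too large; and a level-$(n+1)$ chunk must be a \emph{disjoint union of two level-$n$ chunks}, which $M\cup C'$ is not ($M$ is a subset of a level-$n$ chunk, $C'$ is already a level-$(n+1)$ chunk). The paper sidesteps both issues by always working at the granularity of level-$n$ chunks of the two individual errors $m,p$, never mixing levels in a union, and by placing the relevant quantity inside the interval $(H_n, H_{n+1}/3]$ where the chunks are forced to be disjoint before it checks the $\le H_{n+1}/2$ diameter condition. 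Your diameter-bound argument (part one) is essentially right in spirit, though ``distinct chunks at distance $\le H_n$'' should be ``disjoint chunks'' --- the paper gets disjointness for free from $d(m,p) > H_n$ and the chunk diameter bound $H_n/2$, which is cleaner than arguing about all level-$n$ chunks containing elements of $M$.

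Finally, note a typo in the paper's displayed chain: the last equality should read $H_{n+1}/2$, not $H_n/2$; your constant checks are otherwise consistent with that.
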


\begin{proof}
The claim is that for any pair of errors $m \in M \in F_n = E_n \backslash E_{n+1}$ and $p \in E_n$ we have the following: $d(m,p) \le H_n$ or $d(m,p) > H_{n+1}/3$. We prove this by contradiction, that is suppose $H_n < d(m,p) \le H_{n+1}/3$. Let $M_n, P_n$ be the level-$n$ chunk to which $m$ and $p$ belong to, respectively. Therefore, by definition, since their respective diameters are bounded by~$H_n/2$ and $H_n < d(m,p)$, they must be disjoint. However, we can also bound the overall distance between $M_n$ and $P_n$: 
\begin{align*}
d(M_n, P_n) &\le \text{diam}(M_n) + d(m,p) + \text{diam}(P_n) \\
& \le H_n/2 + H_{n+1}/3 + H_n/2 \\
&= \alpha D^n + \alpha D^{n+1}/3 \\
& \le \alpha D^{n+1}/6 + \alpha D^{n+1}/3 = H_n/2,
\end{align*}
as such the union of $M_n$ and $P_n$ form a level-$(n+1)$ chunk, which is a contradiction to $M_n \subseteq M \in F_n$.
\end{proof}

\begin{figure*}[htb]
\centering
\sidesubfloat[]{
         \includegraphics[width=0.35\textwidth, trim={10cm 6cm 10cm 6cm},clip]{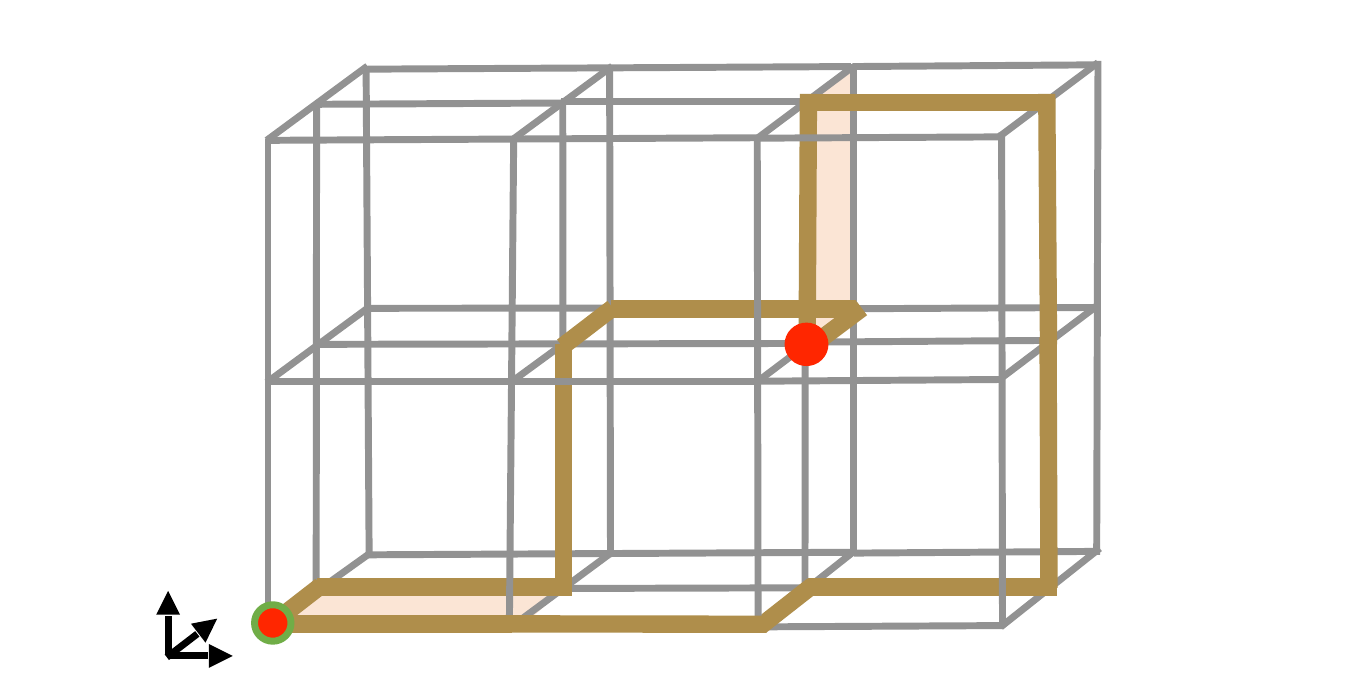}}
\sidesubfloat[]{
         \includegraphics[width=0.35\textwidth, trim={10cm 6cm 10cm 6cm},clip]{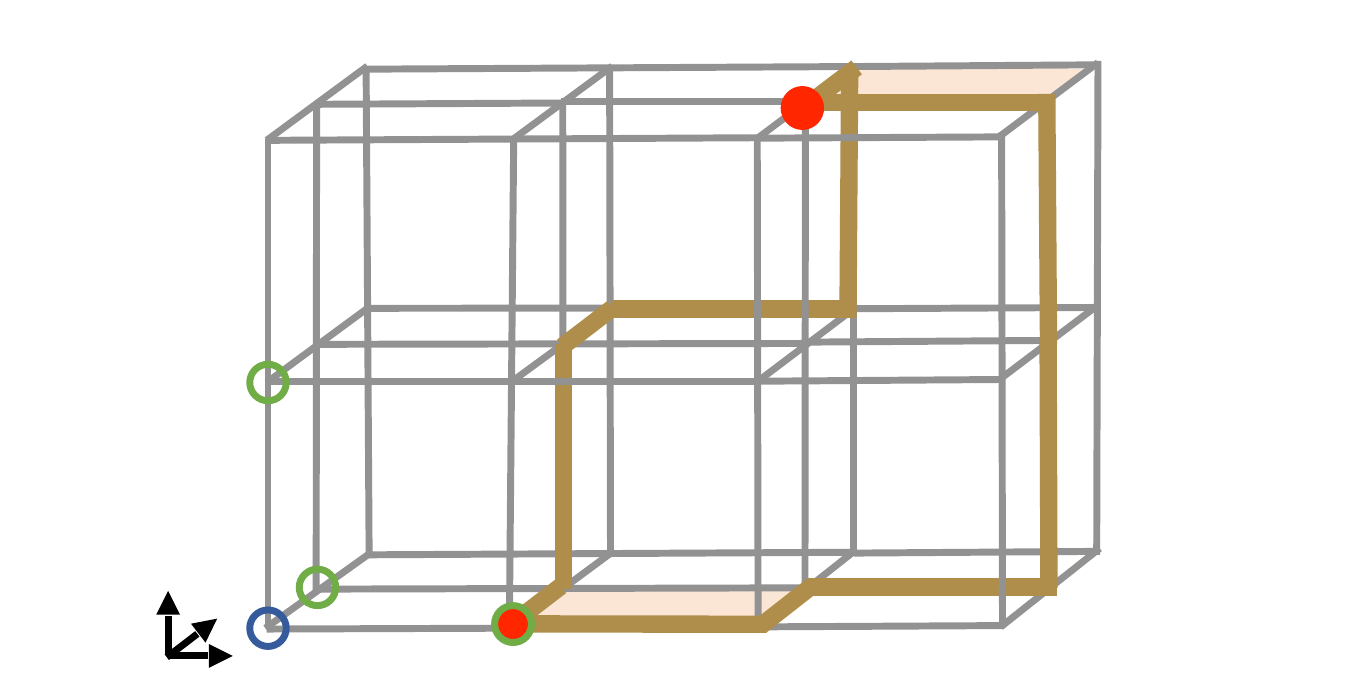}
}\\
\sidesubfloat[]{
         \includegraphics[width=0.35\textwidth, trim={10cm 6cm 10cm 6cm},clip]{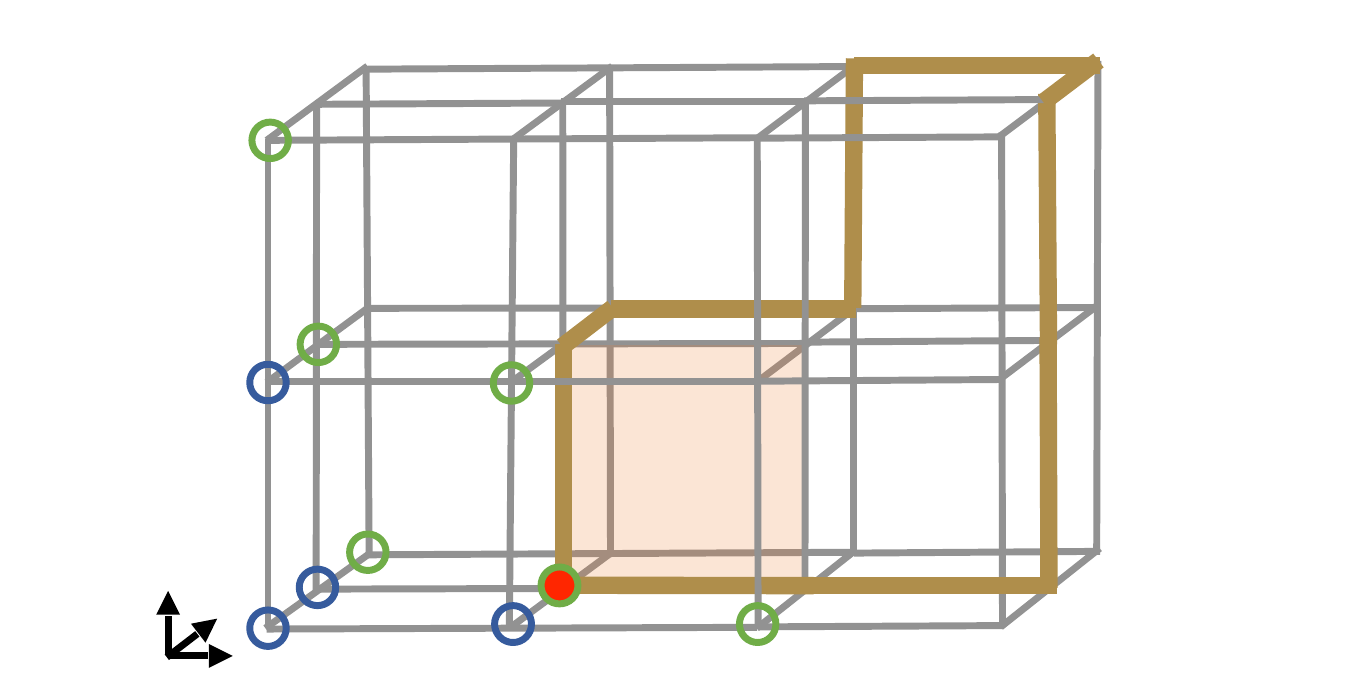}}
         \vspace{4mm}
\sidesubfloat[]{         
         \includegraphics[width=0.35\textwidth, trim={10cm 6cm 10cm 6cm},clip]{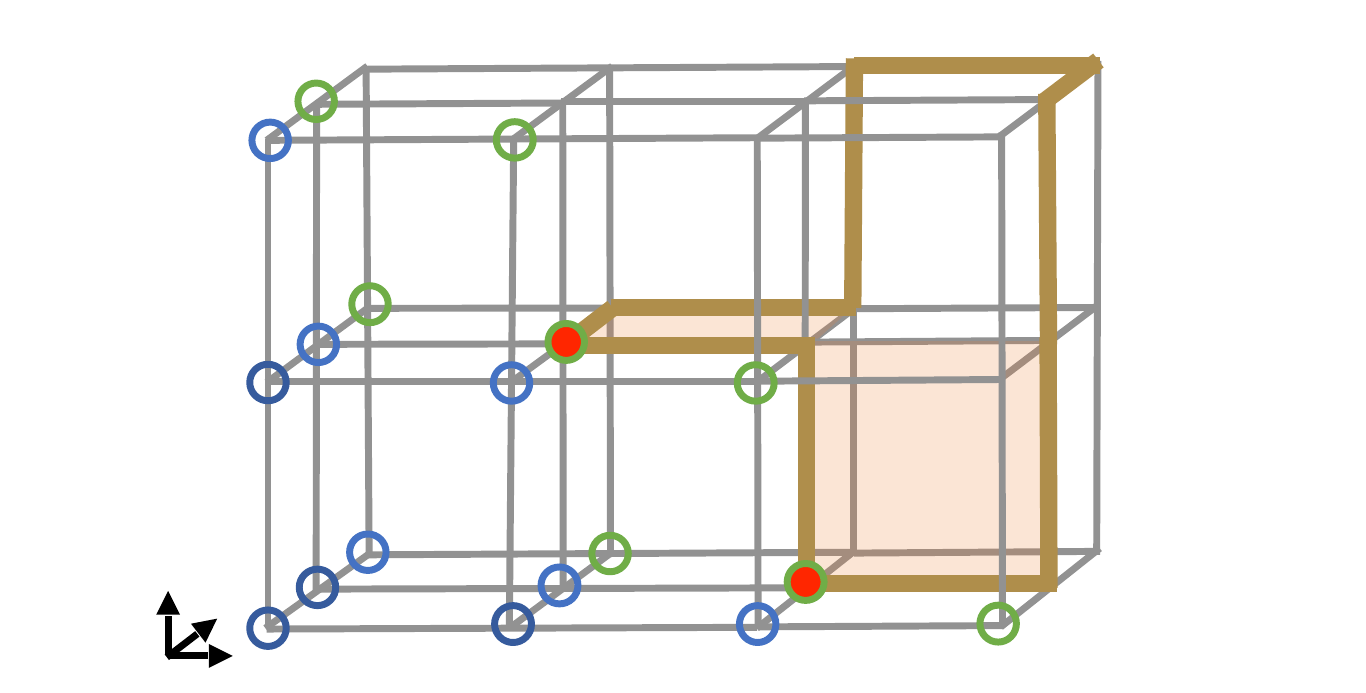}}\\
\sidesubfloat[]{
         \includegraphics[width=0.35\textwidth, trim={10cm 6cm 10cm 6cm},clip]{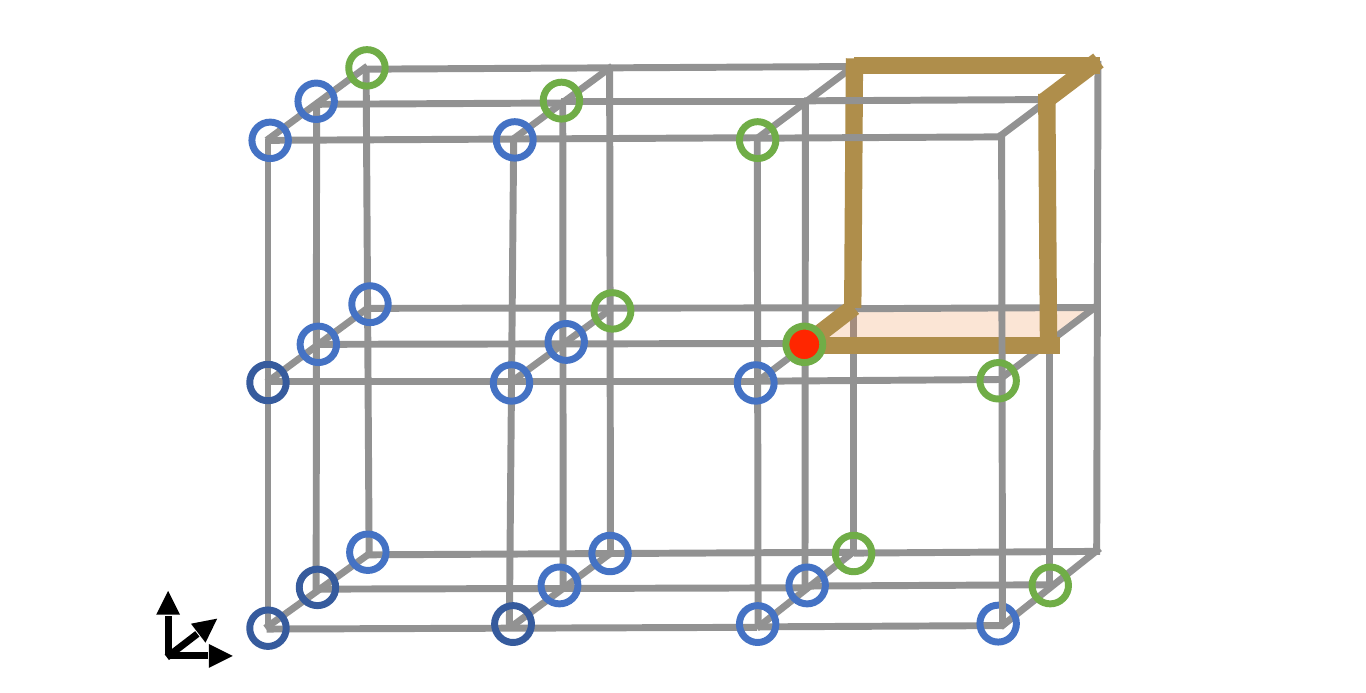}}
\sidesubfloat[]{         
         \includegraphics[width=0.35\textwidth, trim={10cm 6cm 10cm 6cm},clip]{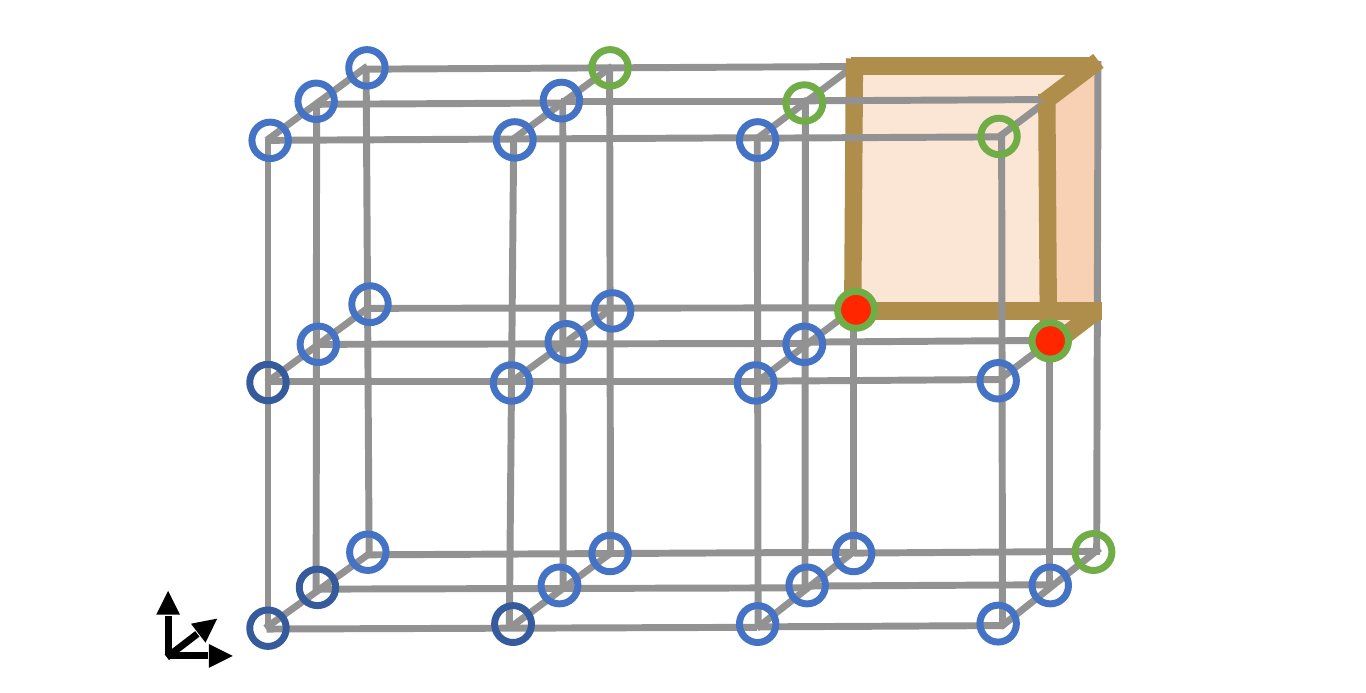}}
\caption{Example of an error sweeping. Initial error configuration is given in~(a) with sweep direction in the (+1,+1,+1) direction. At each step, we are guaranteed to sweep away from the green circles if an error is present there. Green circles represent a plane moving through space such that the syndrome is guaranteed to no longer be in the past light cone of this plane relative to the sweep direction. Blue circles represent vertices that are in the past of the plane. Extremal vertices are given by circles filled in red; notice that whenever the syndrome is present at a green circle, the corresponding vertex is extremal.} 
\label{fig:SweepEx}
    
\end{figure*}

The conclusion of this Lemma is to separate out all errors into their various connected components and to show that the decoder will successfully address each connected component separately as they are sufficiently far away from one another. The complication arises due to the presence of holes, which can delay the cleaning up of a component, namely if an $H_n$-connected component of errors (thus of size $\mathcal{O}(D^n)$) was connected to a hole of size $\mathcal{O}(D^p)$, where $p>n$, then the resulting correction would take time~$\mathcal{O}(D^p)$ (that is linear with the larger hole size). We show below that our variant of the sweep decoder can successively correct for larger and larger errors, even in the presence of holes.

\begin{figure*}[htb]
     \centering
\sidesubfloat[]{
         \includegraphics[width=0.3\textwidth, trim={10cm 5cm 10cm 5cm},clip]{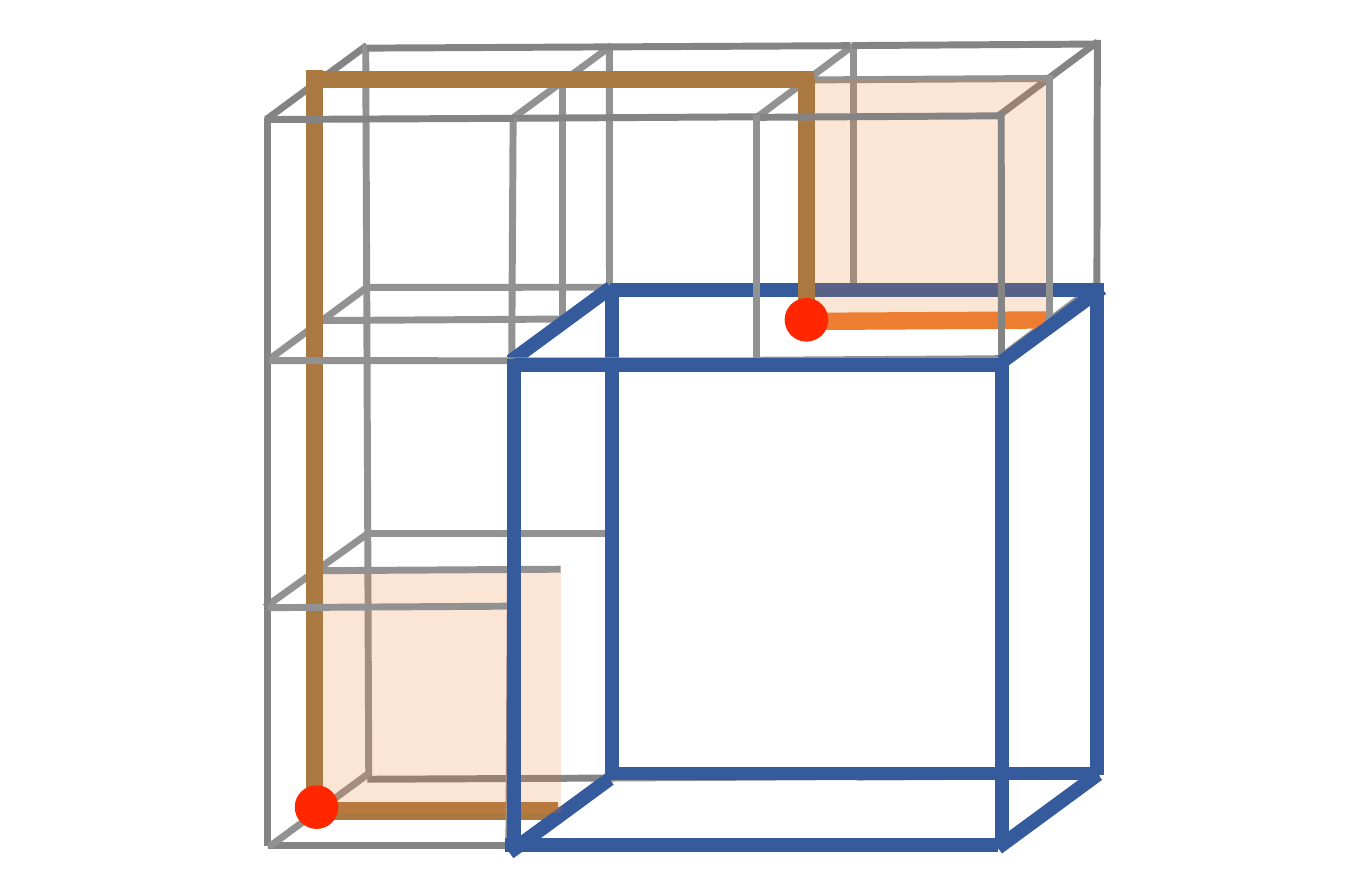}}
\sidesubfloat[]{
         \includegraphics[width=0.3\textwidth, trim={10cm 5cm 10cm 5cm},clip]{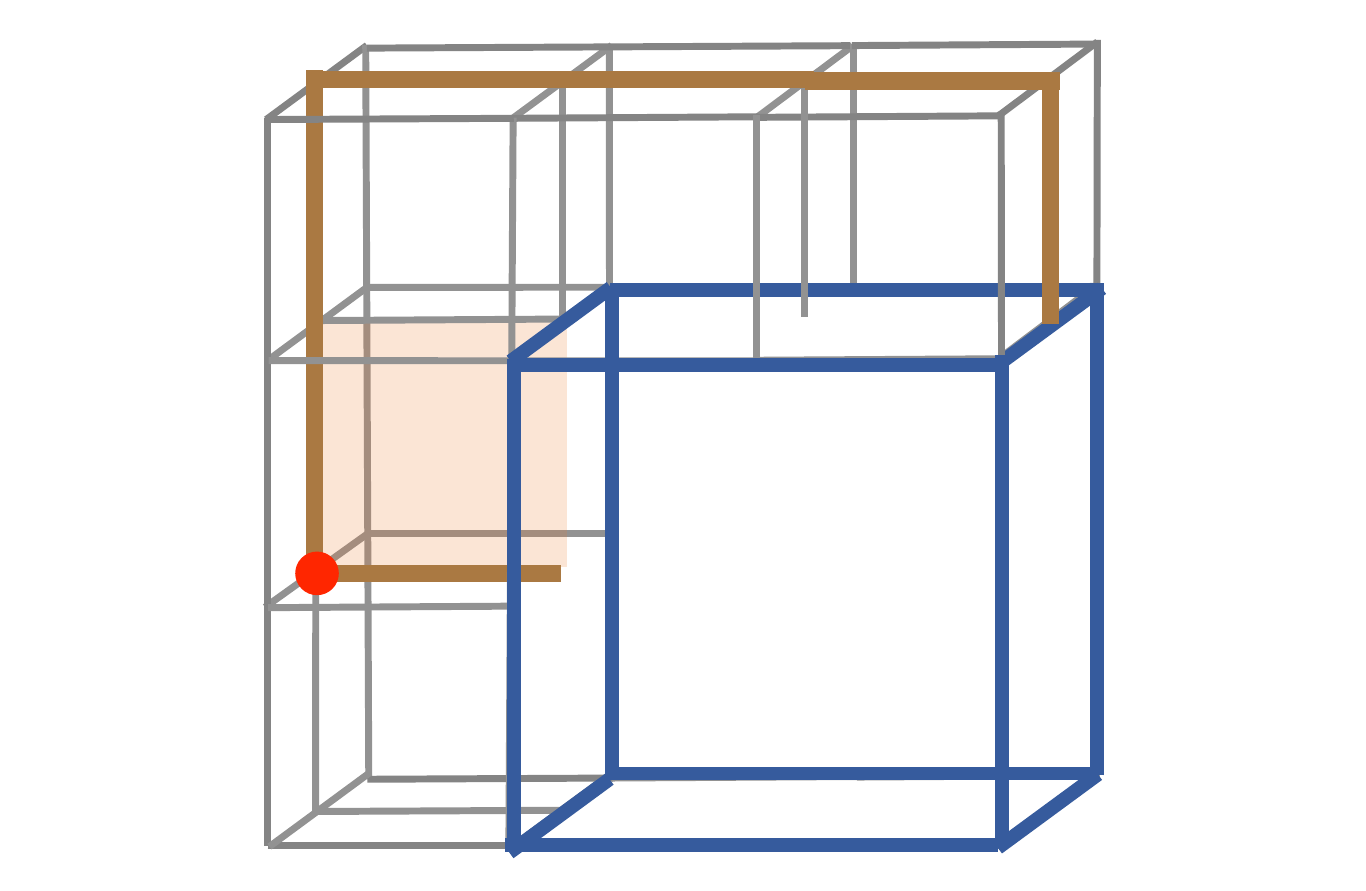}}\\
\sidesubfloat[]{
         \includegraphics[width=0.3\textwidth, trim={10cm 5cm 10cm 5cm},clip]{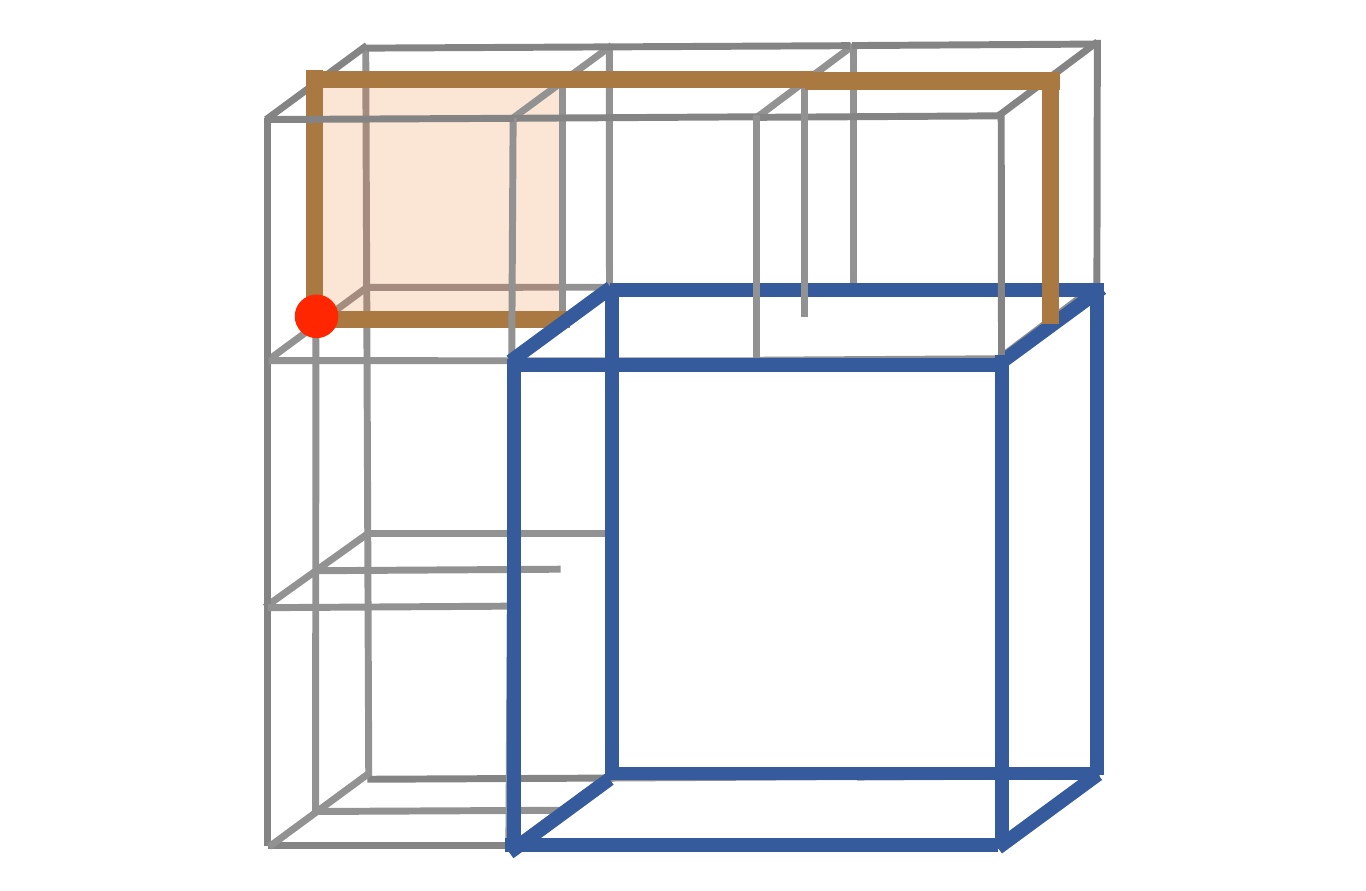}}
\sidesubfloat[]{
         \includegraphics[width=0.3\textwidth, trim={10cm 5cm 10cm 5cm},clip]{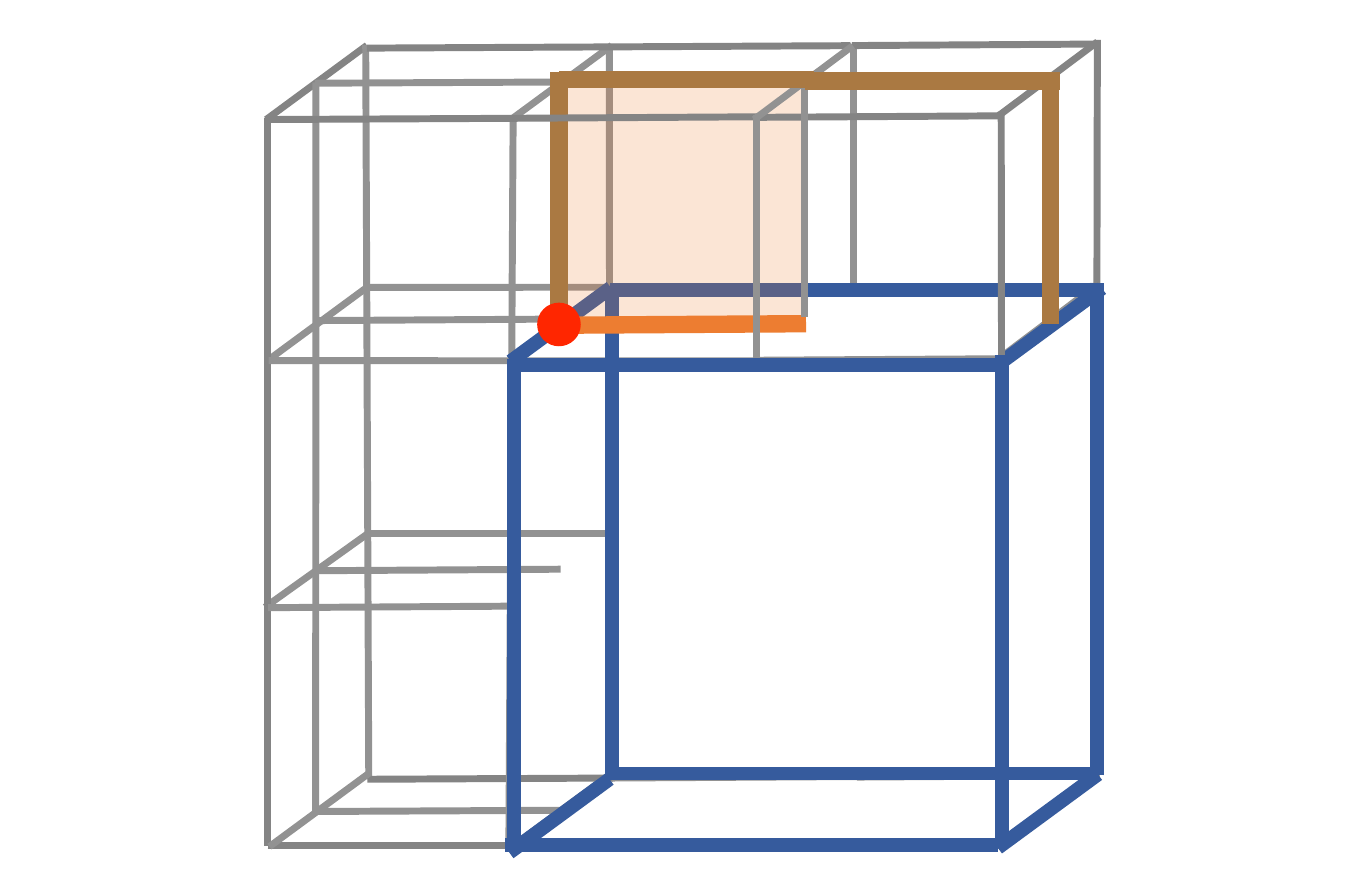}}
\caption{Example of an error sweeping connected to a hole. Initial error syndrome given in brown in~(a) with sweep direction in the (+1,+1,+1) direction. The hole is shown by the blue edges. Any sweep correction cannot take one outside the set of edges given since they are all within a cuboid surrounding the error syndrome and hole. The extremal vertices are shown by red points at each step, imaginary syndromes are introduced along the surface of the hole in orange to make the appropriate vertices extremal as described in the modified sweep decoder section.}
\label{fig:SweepHole}
\end{figure*}

In order to simplify the discussion of the sweep correction of errors, we provide a few useful definitions. Suppose we have an error $E \in \mathcal{E}$, we define the \textit{error envelope} $V(E)$ of~$E$ to be the smallest cuboid that encases~$E$. This definition will be useful for upper-bounding the sweep time of errors. We begin by presenting the following result for the sweeping of an individual error membrane without the presence of holes; we also give an illustration of a sweeping in Fig~\ref{fig:SweepEx}.

\begin{lemma}
Consider an error membrane $E \in \mathcal{E}$ with error envelope $V(E)$ whose linear dimensions are: $l_x,\ l_y, \ l_z$. The sweep decoder will correct for such an error in at most $(l_x + l_y +l_z - 1)$ sweep steps. 
\end{lemma}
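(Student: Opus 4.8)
The plan is to track a single scalar ``height'' functional of the syndrome that is forced to increase by at least one at every synchronous application of the sweep rule, until the syndrome disappears; the bound $l_x+l_y+l_z-1$ then falls out of the geometry of the envelope. This is the rigorous version of the ``moving plane / light cone'' picture sketched in Fig.~\ref{fig:SweepEx}.

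First I would fix coordinates on the dual lattice so that the sweep direction is $\hat s=(1,1,1)$ and the error envelope is $V(E)=[0,l_x]\times[0,l_y]\times[0,l_z]$. For an axis-aligned edge $\mathsf e^*$ let its \emph{past endpoint} be the endpoint whose coordinates are componentwise smaller, and set $\mu(\mathsf e^*)$ to the sum of those coordinates; for a nonempty syndrome $\sigma=\partial E$ define its height $\Phi=\min_{\mathsf e^*\in\sigma}\mu(\mathsf e^*)$. Since $E\subseteq V(E)$ we have $\sigma\subseteq V(E)$, hence $\Phi\ge 0$ initially. Two observations drive everything. (i) Any vertex $\mathsf v^*$ of coordinate sum $\Phi$ incident to a syndrome edge is extremal (a past edge at $\mathsf v^*$ would have $\mu=\Phi-1$, contradicting minimality), and since $\sigma$ is a $\mathbb{Z}_2$-cycle (because $\partial^2=0$) such a $\mathsf v^*$ has exactly two incident syndrome edges, both in its future, spanning a unique face, so the sweep rule acts unambiguously there. (ii) The face flipped at such a $\mathsf v^*$ has its ``base corner'' at $\mathsf v^*$: its two base edges are exactly the two future syndrome edges at $\mathsf v^*$ (so $\mu=\Phi$ and they get turned off), and its other two edges have $\mu=\Phi+1$. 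By the same componentwise reasoning every correction the rule applies anywhere flips only faces contained in $V(E)$, so $\sigma$ stays inside $V(E)$ at all times.

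The core step is the progress lemma: one synchronous round sends $\Phi\mapsto\Phi'\ge\Phi+1$ whenever $\sigma\neq\emptyset$. Corrections at extremal vertices of coordinate sum $>\Phi$ touch only edges with $\mu>\Phi$, so they are irrelevant at the minimal plane. Among the coordinate-sum-$\Phi$ vertices, I would check that faces flipped by two distinct such vertices can never share a base edge (a shared base edge would force the two base corners, hence the two vertices, to coincide) and can only share one of their ``$\mu=\Phi+1$'' edges, where a double flip is harmless; consequently every edge with $\mu=\Phi$ is turned off and none is turned on, so $\Phi'\ge\Phi+1$, while $\partial^2=0$ keeps the new $\sigma$ a cycle. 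Iterating, after $t$ rounds either $\sigma=\emptyset$ or $\Phi\ge t$. Finally I would close with envelope geometry: if $\Phi\ge l_x+l_y+l_z-1$ then, since $\sigma\subseteq V(E)$, every syndrome edge joins one of $(l_x-1,l_y,l_z),\ (l_x,l_y-1,l_z),\ (l_x,l_y,l_z-1)$ to $(l_x,l_y,l_z)$; these three edges form a claw $K_{1,3}$ centered at $(l_x,l_y,l_z)$, on which no nonempty $\mathbb{Z}_2$-cycle lives (the leaves would have odd degree), so $\sigma=\emptyset$. Hence $\sigma$ is cleared in at most $l_x+l_y+l_z-1$ rounds.

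The step I expect to be the real work is the progress lemma for the \emph{parallel} application of the rule: showing that simultaneous corrections at all extremal vertices of the minimal plane cannot conflict so as to leave, or re-create, a $\mu=\Phi$ syndrome edge. This is a finite case analysis of which faces of the cubic lattice can share edges under the coordinate-sum constraints, and it is exactly where the informal moving-plane picture has to be pinned down. A secondary subtlety is when $V(E)$ abuts a global boundary of the code, where $\sigma$ is only a relative cycle; I would handle it by noting that a smooth ($m$-)boundary can only absorb syndrome and thus only speeds up the cleanup, while the rough boundary does not meet the $X$-syndrome, so the bound is unchanged (alternatively one restricts the statement to errors supported away from the global boundary, the remaining case being covered by the alternating-direction variant of Ref.~\cite{vasmer2020cellular}).
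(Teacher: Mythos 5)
Your proof is correct and follows essentially the same moving-plane argument as the paper's proof (which is exactly the picture in Fig.~\ref{fig:SweepEx}): both arguments show the syndrome is swept monotonically away from the near corner, one Manhattan level per synchronous round, so the bound is the diameter $l_x+l_y+l_z-1$ of the envelope. The difference is one of rigor rather than route --- where the paper simply asserts that in the $n$-th step the error is swept past all points at Manhattan distance $n-1$ from the origin, you have pinned this down with the explicit height functional $\Phi$, the non-interference check for parallel corrections at the minimal plane, and the terminal claw argument, all of which the paper leaves implicit.
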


\begin{proof}
Without loss of generality, suppose the error envelope $V(E)$ is formed by the cube whose linear coordinates are given by $[0, l_i]$, where $i \in \{x,y,z\}$ and the sweep direction follows the vector~$(+1,+1,+1)$. Then, in the first step, the boundary of the error will be swept away from the corner~$(0,0,0)$, and thus the error can no longer touch this point. In the second step, the error will be swept away from the three points distance~1 away from the origin in Manhattan distance, that is $(1,0,0),\ (0,1,0), \ (0,0,1)$. In an iterative manner, in the $n$-th time step, the error will be necessarily swept away from all points that are distance $(n-1)$ in from the origin. Since the furthest point from the origin is at the opposite corner of the cuboid $(l_x,l_y,l_z)$ and any error containing that point must have at least one point that distance~2 closer to the origin, we are guaranteed to clean up the error after $(l_x+l_y+l_z -1)$ time steps. 
\end{proof}

In the presence of holes, errors may take longer to be corrected; however, we can also bound the number of time steps an individual error membrane will take to be corrected. This is pictorially shown in Fig.~\ref{fig:SweepHole}, where the error envelope must also surround the hole.

\begin{corollary}
Given an error membrane $E \in \mathcal{E}$ that is connected to holes $H_1, \ \cdots, H_j$. Consider the error envelope $V(E\cup H_1 \cup \cdots H_j)$ that encompasses the error and all holes whose linear dimensions are $l_x,\ l_y, \ l_z$. Then, the sweep decoder will correct for such an error in at most $(l_x + l_y +l_z - 1)$ sweep steps.
\end{corollary}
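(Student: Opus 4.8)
The plan is to obtain this as a corollary of the preceding (hole-free) lemma, with the cuboid enlarged to $V(E\cup H_1\cup\cdots\cup H_j)$; the only new ingredient to verify is that the modifications of the sweep rule at hole-boundary vertices preserve the causal structure driving that proof. I would place $V(E\cup H_1\cup\cdots\cup H_j)=[0,l_x]\times[0,l_y]\times[0,l_z]$, fix the sweep direction $\hat s=(+1,+1,+1)$, and grade each vertex of $\widetilde{\mathcal{L}}^*$ by $s(\mathsf{v}^*)=x+y+z$, so that every edge is a future edge of its lower endpoint. Since $E$ is connected only to the holes $H_1,\dots,H_j$ and these all lie inside the envelope by hypothesis, every syndrome edge, every face on which a correction is ever applied, and every hole-surface edge that ever receives an imaginary syndrome has both endpoints inside $[0,l_x]\times[0,l_y]\times[0,l_z]$; thus the whole decoding process stays inside the envelope, which is the containment illustrated in the caption of Fig.~\ref{fig:SweepHole}.

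The heart of the argument is the light-cone invariant of the hole-free lemma, now checked against the modified rule: after $n$ applications of the modified sweep rule, no syndrome edge — real or imaginary — is incident to a vertex with $s\le n-1$, and every correction and imaginary-syndrome assignment performed at step $n$ acts only on edges both of whose endpoints have $s\ge n-1$. I would prove this by induction on $n$. A vertex $\mathsf{v}^*$ with $s(\mathsf{v}^*)=n-1$ has all of its past edges landing in the already-cleared region $\{s\le n-2\}$, so all of its syndrome edges are future edges and $\mathsf{v}^*$ is extremal at step $n$. If $\mathsf{v}^*$ is a bulk vertex, or a hole-boundary vertex with zero or at least two syndrome edges, the unmodified rule is applied and the inductive step is exactly as in the hole-free case. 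The only genuinely new case is a hole-boundary vertex carrying a single real future syndrome edge: the modified rule assigns an imaginary syndrome to one of its future hole-surface edges — necessarily a future edge, hence compatible with the grading — and the regular rule then corrects the qubit-carrying face spanned by this pair of future edges, removing both syndrome edges from $\mathsf{v}^*$ and flipping only edges with endpoints of value $\ge n-1$. Hence the invariant is preserved, and the same counting as in the hole-free lemma shows that the syndrome of $E$ is cleared in at most $l_x+l_y+l_z-1$ sweep steps.

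I expect the main obstacle to be precisely this last case: showing that a hole-boundary vertex, once extremal, is emptied in a single step just as a bulk vertex is, so that hole boundaries never stall the sweep front. Two points require care: (i) a hole-boundary vertex with a single real future syndrome edge always has a future edge lying on the hole surface available to carry the imaginary syndrome, and (ii) the face spanned by that real edge and the chosen imaginary edge genuinely carries a qubit, i.e.\ it is not one of the faces of the $m$-hole. Point (ii) follows because a real syndrome edge is never a hole-surface edge (hole-surface edges carry no stabilizer), so the four edges of that face cannot all be hole-surface edges; point (i) follows from the geometry of the outermost layer of cubes of the $m$-hole depicted in Fig.~\ref{fig:FSC_and_mholes}(b). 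With (i)--(ii) established, the modified rule makes hole-boundary vertices behave, for the purposes of the sweep front, exactly like bulk vertices; in particular the ``trapped'' syndrome strings spanning between two holes are relieved and cleared by step $l_x+l_y+l_z-1$, no differently from the hole-free analysis.
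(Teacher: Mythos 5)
Your proposal takes essentially the same approach as the paper: enlarge the error envelope to $V(E\cup H_1\cup\cdots\cup H_j)$ and rerun the corner-sweep / Manhattan-distance argument of the preceding Lemma. The paper's own proof of the Corollary is little more than a pointer to that Lemma; you go further and make the underlying light-cone invariant explicit, then check by induction that the modified hole-boundary rule preserves it. That is the right thing to do, and your two ``points requiring care,'' (i) that a future hole-surface edge is always available to carry the imaginary syndrome and (ii) that the face spanned by the real and imaginary edges actually carries a qubit, are exactly the places the paper glosses over.

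One caveat is worth flagging on your point (i). At the hole-boundary vertex sitting at the ``far corner'' of an $m$-hole relative to the sweep direction $(+1,+1,+1)$ --- i.e.\ the corner cube with maximal $x+y+z$ --- all three future edges point away from the hole, so there is no future edge lying on the hole surface to which an imaginary syndrome could be attached. If such a vertex can carry exactly one real future syndrome edge (which is not obviously excluded, since vertex stabilizers inside the hole are absent and the syndrome need not have even parity there), the modified rule as stated specifies no correction and the light-cone front stalls. Either that configuration must be ruled out by a parity argument specific to the $m$-hole surface, or the modified rule needs an extra clause for such corner vertices; your appeal to ``the geometry of the outermost layer of cubes'' does not resolve it. To be fair, the paper's own proof is silent on this as well, so your write-up is at least as complete as the original --- but since you explicitly isolated (i) as the crux, it deserves a proof rather than an assertion.
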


\begin{proof}
The proof of the previous Lemma can be generalized to account for the case where the error membrane is connected to multiple holes. As stated, if we choose an error envelope that encompasses all of the holes connected to the membrane (as well as the membrane itself), then the resulting correction will never leave the envelope. Thus, when sweeping from one of the corners in the sweep direction, we are guaranteed to clean the error in the number of sweep steps given.
\end{proof}

The upshot of the above Lemma and Corollary is that a connected component of radius~$R$ will be corrected in time~$\mathcal{O}(R)$ if not connected to any holes, while it will be corrected in time at most~$\mathcal{O}(R+R_h)$ if connected to a hole of radius~$R_h$. If connected to multiple holes, the corresponding correction time will be linear in the radius of the largest hole or the error itself, whichever is larger.

Let $E \in \mathcal{E}$ be an error instance. Consider all $H_0$-connected components of $F_0$ that are distance greater than $H_0$ from any hole, denoted by $Q_0$. Given the diameter of these components is at most $H_0$, we know the decoder will correct them in time linear to the diameter, assuming there is no interference from other errors or holes. Given a connected component~$Q_0$ is the distance at least $H_0$ away from any hole and distance $H_{1}/3$ away from any other error, this connected component will successfully be corrected. As such, all errors $F_0$ will be successfully corrected unless they were within distance~$H_0$ from any hole; we label the remaining errors as $T_0 = F_0 \backslash Q_0$.

Consider now the elements from $F_1$. Again, we break them into two classes, the elements $Q_1$ which are in $H_1$-connected components whose distance is greater than $H_1$ from any hole of size $H_{k>1}$, and the errors $T_1$ which are~$F_1 \backslash Q_1$. Any error in $Q_1$ will be corrected in time linear in~$H_1$ as these elements are either independently corrected or they are affected by either the uncorrected errors~$T_0$ or holes of linear size~$H_1$. However, since both these holes and errors are of size at most~$H_1$, they will only affect the cleanup time and size of the correction bubble of~$Q_1$ by a constant factor in~$H_1$. The critical point here is that any error from $Q_1$ can only see their correction time increase by~$\mathcal{O}(H_1)$ due to the uncorrected errors from $F_0$ or holes of size~$H_1$. Since neither of these objects is larger than the element from $Q_1$, which is also of size~$\mathcal{O}(H_1)$, such objects will not affect the decoder's ability to clean up an element from~$Q_1$.

We iterate this process for any $n$. Consider elements from $F_n$, breaking them into two classes: the elements labeled $Q_n$, which are $H_n$-connected components whose distance is greater than $H_n$ from any hole of larger size $H_{k > n}$ and the complement of such elements $T_n = F_n \backslash Q_n$. Any element of $Q_n$ will be cleaned up in time linear in $H_n$ as it will either attach itself to a hole of similar size (or smaller) or will be affected by smaller uncorrected errors~$T_{j<n}$, in either case this will not change the cleanup time for such an error, and it will be corrected. Thus by induction any level-$n$ error that is smaller than the system size will be corrected in time at most linear in the size of the largest chunk. Therefore, to summarize, errors from $Q_n$ are corrected in time~$\mathcal{O}(H_n) = \mathcal{O}(D^n)$, these errors are either connected components of this size or connected to a hole of the given size.

\subsection{Sweep decoder simulation algorithm}
\label{app:sweepdecoder_algorithm}
We now list the steps of the numerical algorithm for simulating the sweep decoder as follows-
\begin{enumerate}
    \item  $N-1$ rounds of a) generating data errors $\vec{E}$ with probability $p_X$, b) performing syndrome measurements with error probability $q$, and c) applying sweep rule (modified version in case of FSC) $x$ times. We set $x=1$ in our simulations. We change the sweep direction after every $y$ rounds on the lattice with boundaries. We set $y=\log L$ where $L$ is the linear size, measured in terms of the number of cubes in the original cubic lattice (vertices in the dual lattice) along one dimension. Each implementation of the sweep rule updates the data error $\vec{E}$ to the data error times correction operator.
    \item After the above $N-1$ rounds, generate data errors again with probability $p_X$ and perform syndrome measurements. Assume perfect syndrome measurements for this last round. 
    \item Timeout session: sweep rule is implemented for $T$ steps, and sweep direction is changed after every $t$ steps. We set $T=32 L$ and $t=L$. 
    \item If the syndrome is not cleaned or if the product of total error and correction acts nontrivially on the logical subspace, the decoder fails.  
\end{enumerate}

\subsection{Sweep Decoder performance for FSC}
We implemented the sweep decoder for the FSC on the fractal lattices $\widetilde{\mathcal{L}}^*$ with boundaries. We study the performance of the sweep decoder for an error model with both phase-flip and measurement errors. We consider the measurement error rate $q$ to be the same as the physical error rate $p_X$,  \textit{i.e.}, $q=p_X$. The detailed algorithm is presented in Methods. 

We show the performance of the sweep decoder for $N=1025$ rounds in Fig.~\ref{fig:Sweep_plots} (a)-(b) and for $N=1,33$ rounds in Sec. III in Supplementary. We also summarize the results for the thresholds of the sweep decoder for different levels and number of rounds of stabilizer measurements, $N$ in Fig.~\ref{fig:Sweep_plots} (c). The last round is assumed to have perfect measurements~\footnote{This is a standard assumption in quantum error correction studies, which is valid at the readout stage where measurement errors become equivalent to data qubit errors.} while $N-1$ rounds before the last round involve noisy measurements. $N=1$ means only one round of stabilizer measurements, and those are perfect. We tabulate the numerical values of the obtained sweep decoder thresholds below,

\begin{center}
        \begin{tabular}{c|c|c|c}
     $\ell$ & 0 & 1 & 2\\
    \specialrule{.2em}{.2em}{.2em}
     $N=1$ & 15.625(8)\% & 15.59(2)\% & 15.57(2)\% \\ \hline
     $N=33$ & 2.400(1)\% & 2.471(3)\% & 2.455(2)\% \\
     \hline 
     $N=1025$ & 1.727(3)\% & 1.7331(7)\% & 1.7262(7)\% \\
     \hline
    \end{tabular}
\end{center}

\begin{figure}[H]
    \centering
  \sidesubfloat[]{ \includegraphics[width=0.7\columnwidth]{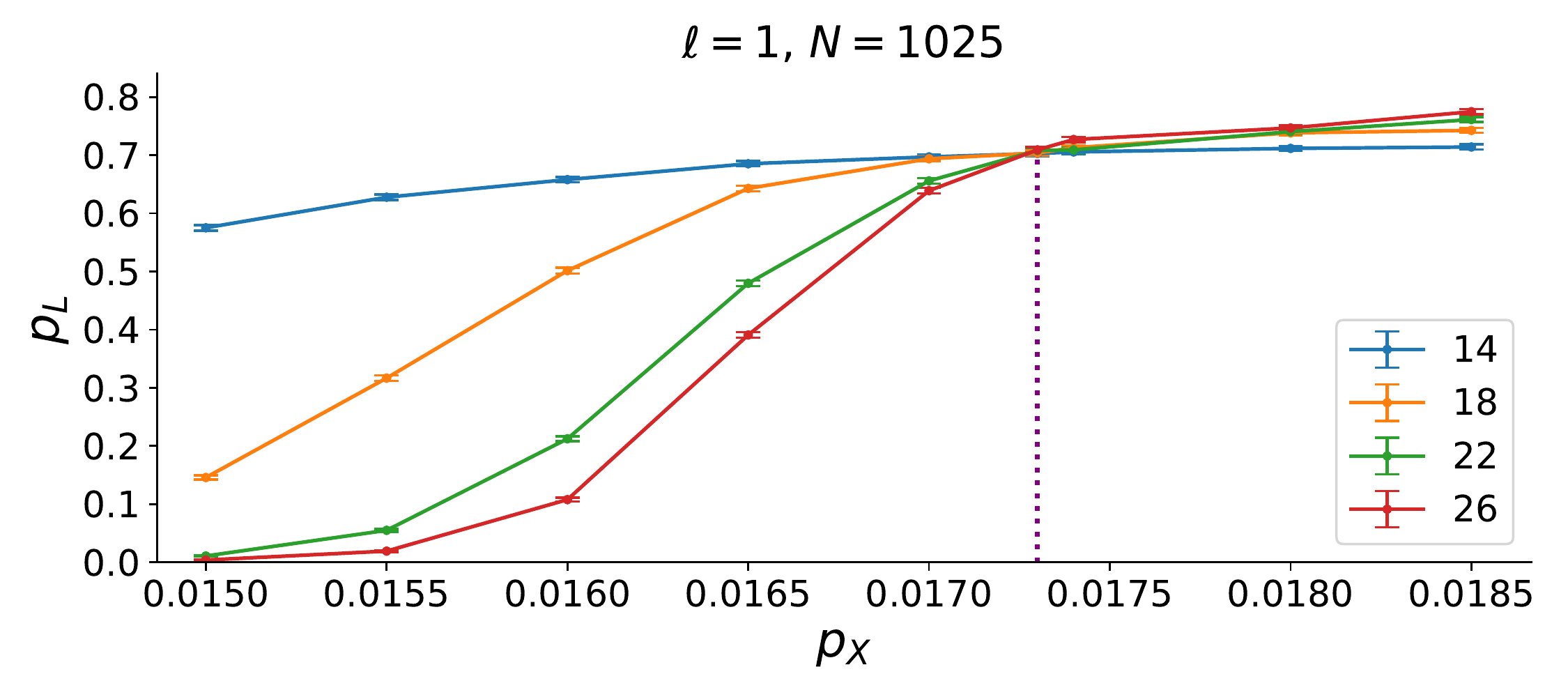}}\\
  \sidesubfloat[]{ \includegraphics[width=0.7\columnwidth]{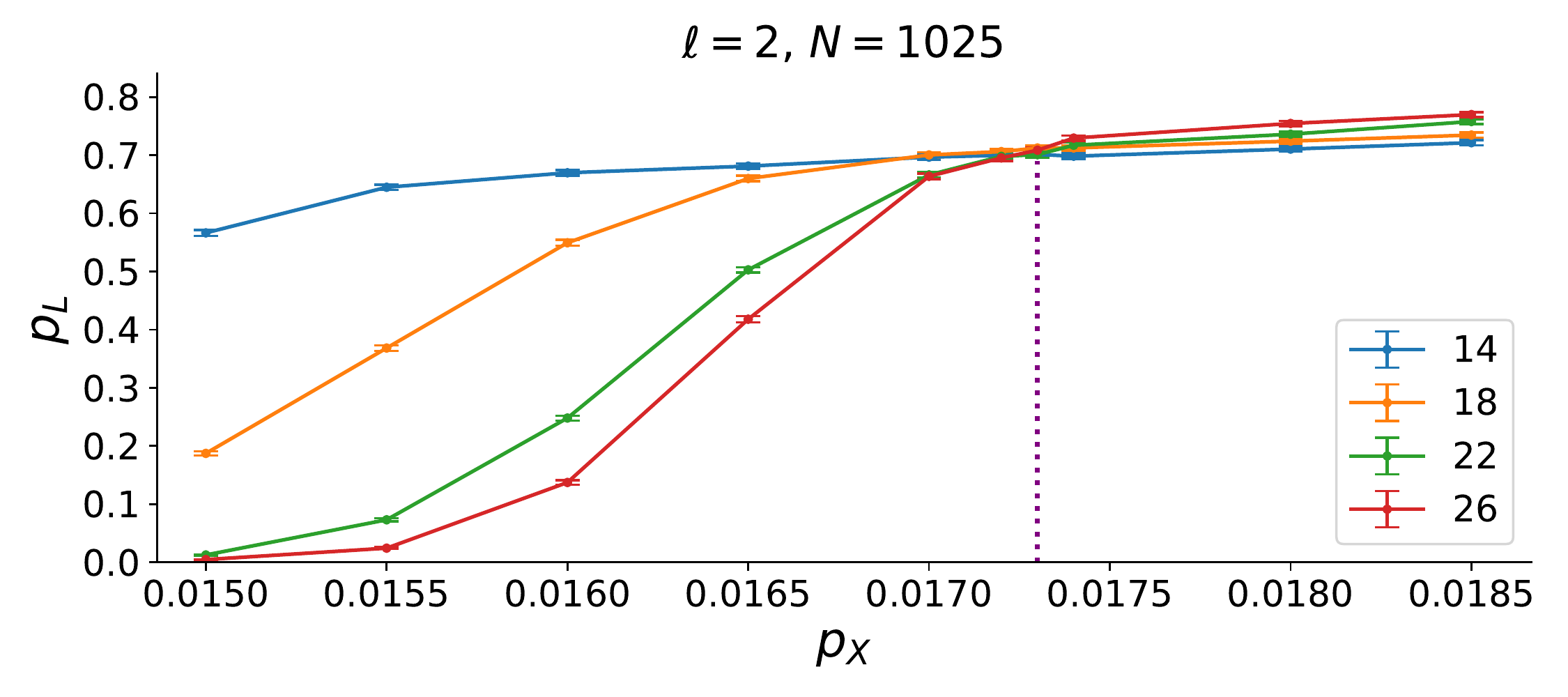}}\\
    \sidesubfloat[]{ \includegraphics[width=0.7\columnwidth]{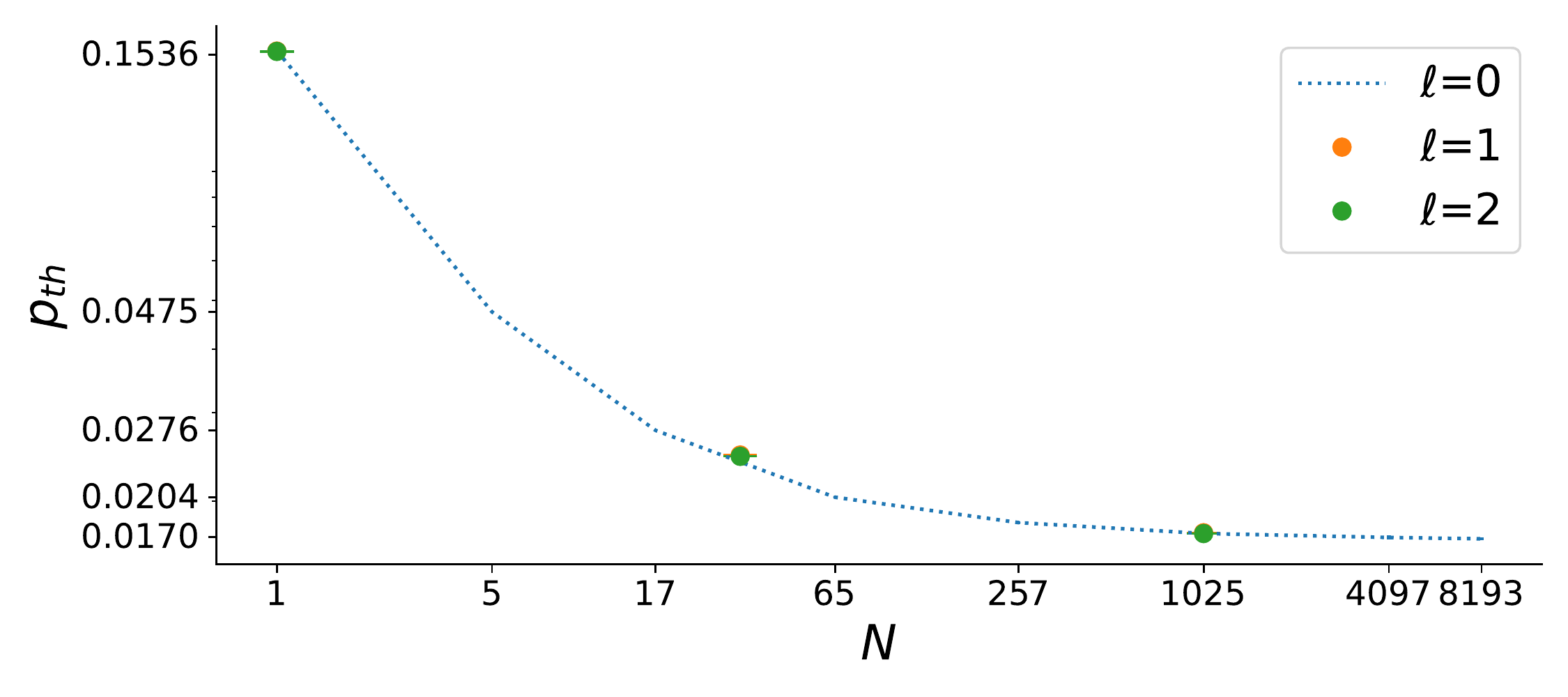}}
    \caption{(a)-(b) Sweep decoder performance for $N=1025$ rounds of stabilizer measurements for levels $\ell=1,2$ of the FSC. Logical failure rate $p_L$ for different system sizes (legends) is plotted as a function of the physical $X$ error rate $p_X=q$ where $q$ is the measurement error rate. Each data point in the simulations was obtained with 10,000 Monte Carlo runs. (c) Sweep decoder threshold $p_\text{th}$ vs number of rounds of stabilizer measurements $N$. The threshold values were obtained using the critical exponents method~\cite{wang2003confinement}. Thresholds for $\ell=0$ (3D surface code) are taken from Ref.~\cite{vasmer2020cellular}. Our results for points $\ell=0$, $N=1,33,1025$ were consistent with the curve.}
    \label{fig:Sweep_plots}
\end{figure}

The threshold obtained in the limit $N\rightarrow \infty$ is the sustainable threshold. Since the $N=1025$ thresholds for $\ell=0,1,2$ are all around $1.73\%$, we expect the sustainable threshold for both levels $\ell=1,2$ to be 1.7\%, which is the result obtained in Ref.~\cite{vasmer2020cellular} for level $\ell=0$. Moreover, due to this result, we expect this sustainable threshold value to be independent of $\ell$; hence, the sustainable threshold for the asymptotic fractal cube geometry $FC(3,1)$ is expected to be 1.7\%.

Since the $X$-distance of the code scales with the Hausdorff dimension, $D_H$ of the fractal lattice, as $L^{D_H}$, we expect the subthreshold failure rates to increase for pure (or biased) Pauli $Z$ noise as $D_H$ is lowered. In our simulations, we only looked at $D_H$ close to 3 due to the limitations on scaling up the system sizes. Hence, such differences in the subthreshold failure rates are not noticeable. 

On the other hand, the $Z$-distance of the code is unaffected under the punching of the holes; however, the effective $Z$-distance is increased, as there are fewer paths for the point-like syndrome after the smooth boundary holes are punched in. This is expected to have a positive effect on the threshold and the subthreshold failure rate for pure (or biased) Pauli $X$-noise. Below, we discuss the performance of the MWPM decoder under Pauli $X$-noise and show an improvement in the threshold error rates with increasing $D_H$. For a substantial improvement in the subthreshold failure rates, we would need to go to higher $D_H$.   

\section{MWPM decoder performance}We now discuss the thresholds of the MWPM decoder used to decode the $Z$ errors. We focus here on the case without measurement errors, which corresponds to the code-capacity error threshold. We prove the existence of the fault-tolerant threshold for the FSC using the MWPM decoder in Sec. I of Supplementary. 

As mentioned, the FSC is obtained by punching $m$-holes in the 3D surface code. For the MWPM decoder, the input decoding graph $\widetilde{\mathcal{L}}$, whose each node corresponds to a vertex stabilizer, is a subgraph of the decoding graph ${\mathcal{L}}$ of the 3D surface code, \textit{i.e.,} $\widetilde{\mathcal{L}}\subset\mathcal{L}$ since a subset of stabilizers is removed due to the $m$-holes. The main difference from the matching of the 3D surface code is that the logical string $\overline{Z}$ and error chain $E_1$ need to circumvent all the $m$-holes. 

We show the performance of the MWPM decoder, measured in terms of logical failure rate, for levels $\ell=0,1,2$ of the FSC in Fig.~\ref{fig:MWPM_lf_plots}.
We apply the critical exponents method~\cite{wang2003confinement} to the logical failure rate data to obtain the code-capacity thresholds of the MWPM decoder for each level $\ell$ of the FSC defined on $FC(3,1)$ with Hausdorff dimension $D_H=2.966$ as follows,\\
\begin{center}
    \begin{tabular}{c|c|c|c}
     $\ell$  & 0 & 1 & 2\\
    \specialrule{.2em}{.2em}{.2em}
     MWPM threshold & 2.886(4)\% & 2.931(4)\% & 2.947(5)\%  
    \end{tabular}
\end{center}

Interestingly, the threshold $p_\text{th}$ increases in the case of the FSC compared to the 3D surface code case ($\ell$$=$$0$). Based on the trend at $\ell$$=$$0,1,2$, we expect the threshold to be lower bounded by $2.95\%$. This increase, relative to $\ell=0$ is due to the fact that one can upper bound the logical error rate to be proportional to the number of self-avoiding non-contractible cycles $N_\text{SAP}$ (see Sec. I in Supplementary). The input decoding graph for the FSC, $\widetilde{\mathcal{L}}$ has strictly lower $N_\text{SAP}$ than that for the 3D surface code, $\mathcal{L}$ since  $\widetilde{\mathcal{L}}\subset\mathcal{L}$. The threshold of FSC can hence be proven to be strictly higher than the 3D surface code. Moreover, the fractal decoding graph is asymptotically approaching a 2D graph in the limit of $D_H=2+\epsilon$. The code-capacity threshold of FSC is approaching that of the 2D surface code, \textit{i.e.}, $\lim_{\epsilon \rightarrow 0} p^{(2+\epsilon)}_\text{th} $$\approx$$10.31\%$ and the phenomenological fault-tolerant threshold approaches $2.9\%$  \cite{Dennis_2002} (see Sec. I in Supplementary).
\vspace{5mm}
\begin{figure}[H]
    \centering
 \includegraphics[width=0.7\columnwidth]{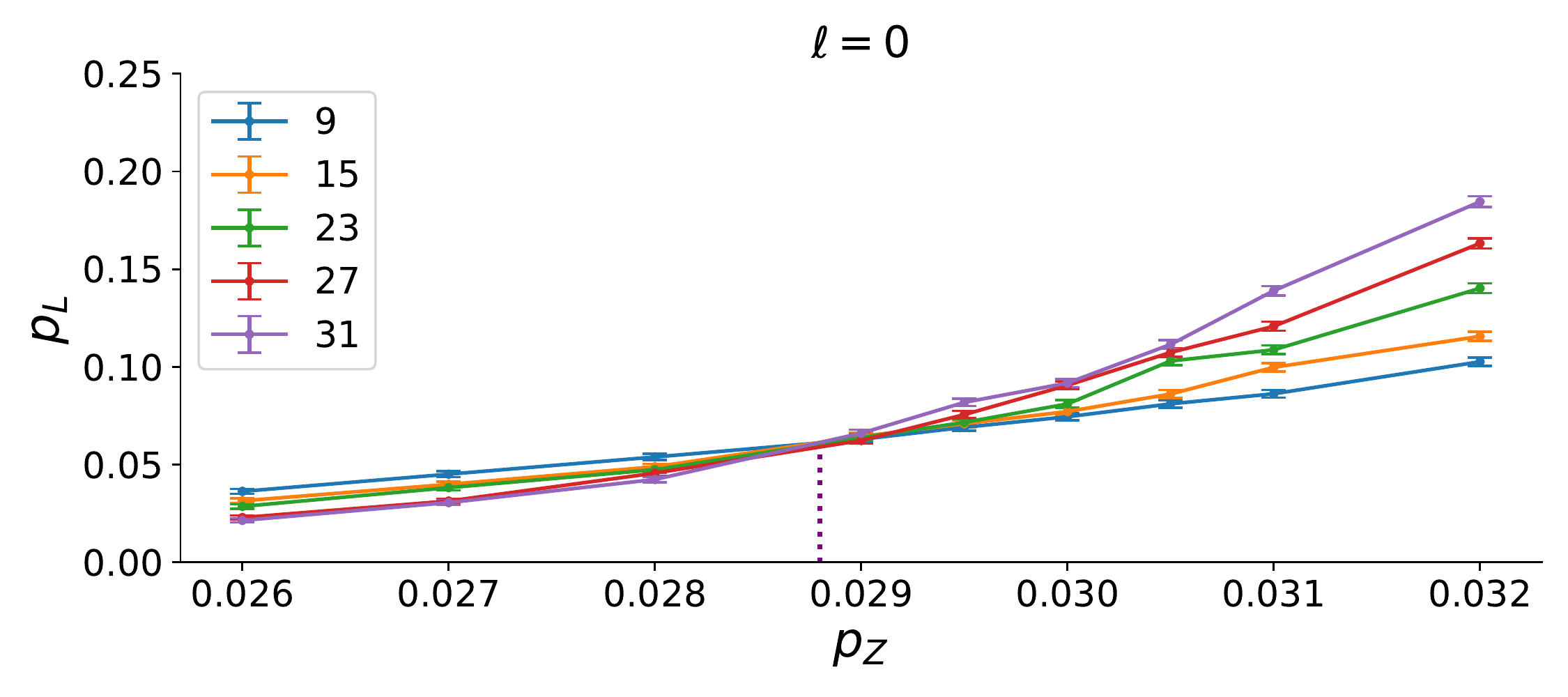}\\
\includegraphics[width=0.7\columnwidth]{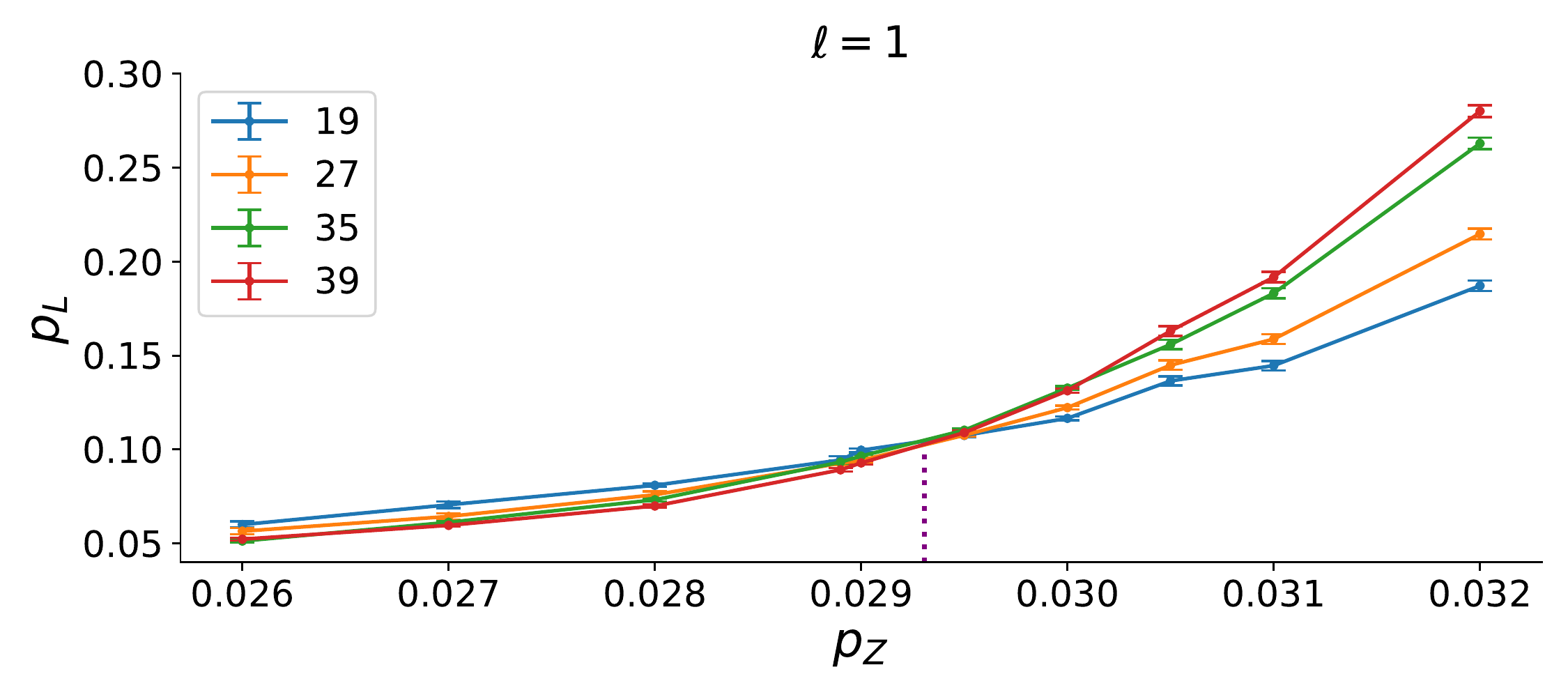}\\
 \includegraphics[width=0.7\columnwidth]{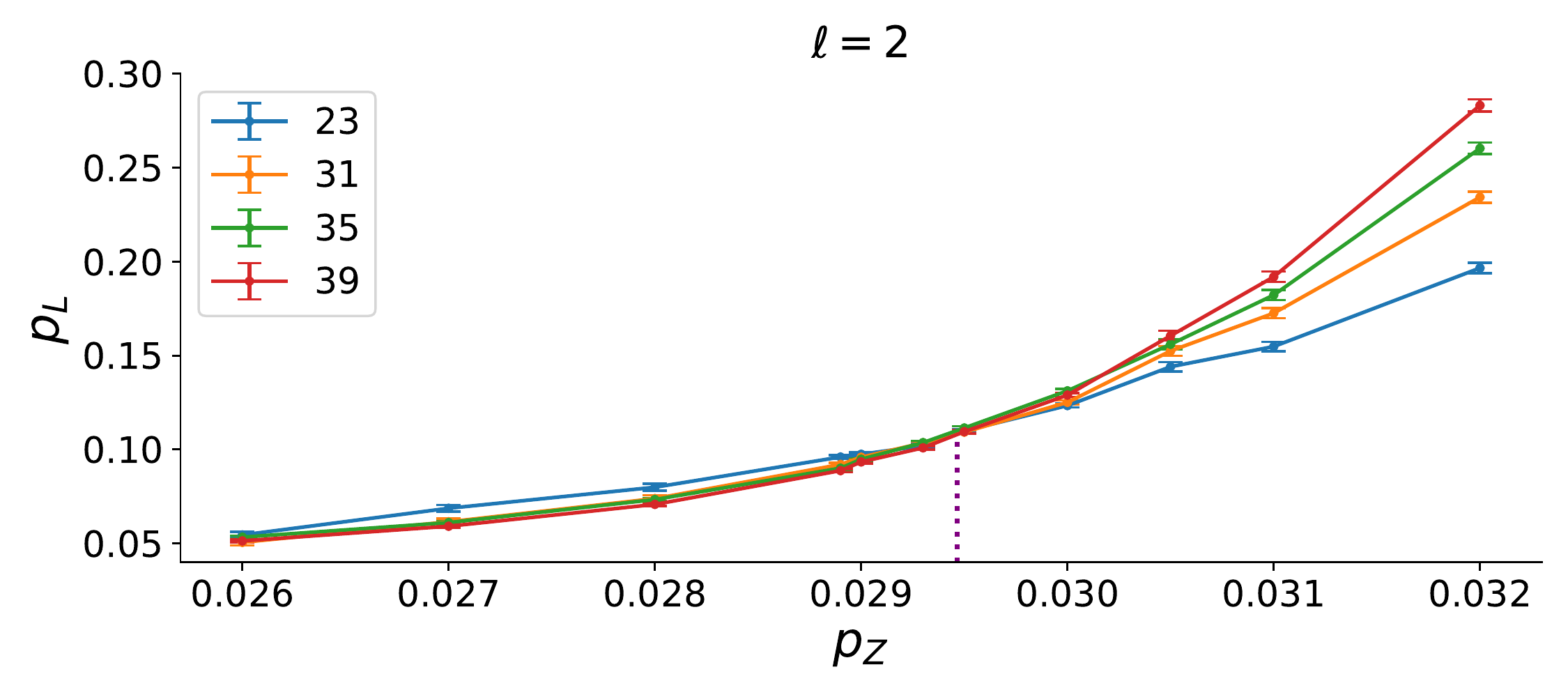}
\caption{Code capacity performance of the MWPM decoder for pure $Z$ noise for different \textit{levels} of FSC. Logical failure rate $p_L$ is plotted as a function of the physical Z error rate $p_Z$. The data points close to the threshold of the MWPM decoder were obtained with 100,000 Monte Carlo runs. Other points used 20,000 Monte Carlo runs.}
    \label{fig:MWPM_lf_plots}
\end{figure}

\begin{figure}[H]
    \centering
\sidesubfloat[]{\includegraphics[scale=0.2925]{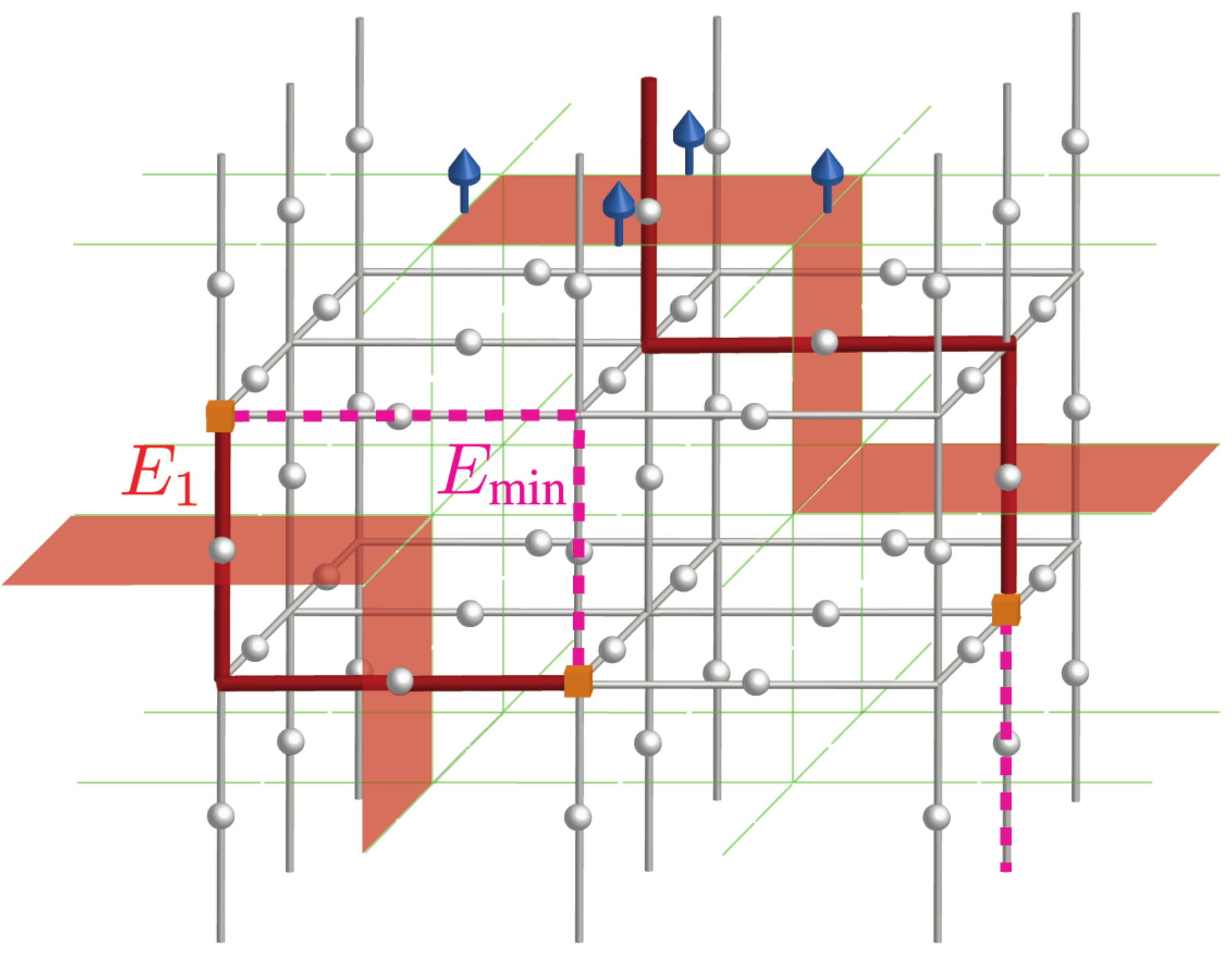}}
\sidesubfloat[]{\includegraphics[scale=0.525]{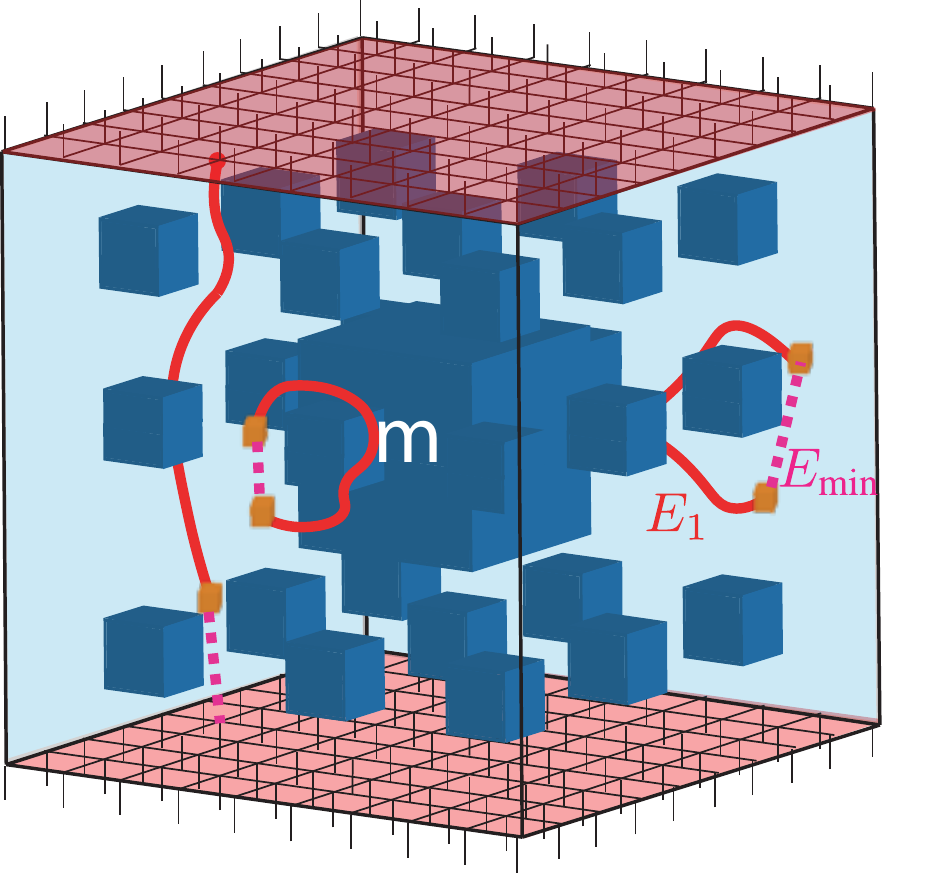}}
\caption{(a) Illustration of the RPGM. The gauge spins (blue) live on the edges of the dual lattice $\mathcal{L}^*$. Bold red edges represent the error chain $E_1$ (flux tubes) on the original lattice $\mathcal{L}$ which penetrates the ``wrong-sign" plaquettes (red) on $\mathcal{L}^*$. The endpoints of $E_1$ are monopoles (orange cube). The MWPM algorithm finds the recovery chain $E_\text{min}$ (pink dashed). A logical error occurs since $E_1+E_\text{min}$ contains a nontrivial relative cycle. (b) Illustration of the MWPM decoder on the fractal lattice $\widetilde{\mathcal{L}}$, where all flux tubes circumvent the $m$-holes.}
\label{fig:RPGM}
\end{figure}

\section{Phase transition on a fractal}
Quite interestingly, the code-capacity threshold of the MWPM decoder for the FSC can be mapped to the zero-temperature phase transition in a random-plaquette gauge model (RPGM) on a fractal lattice, using the statistical-mechanical mapping~\cite{Dennis_2002,wang2003confinement,BombinSM2009} of error correction in stabilizer codes. The disordered stat-mech Hamiltonian of the fractal RPGM can be written as:
\be
H=-\sum_{\mathsf{f}^*} \tau_{\mathsf{f}^*} U_{\mathsf{f}^*}, \quad \text{with} \quad U_{\mathsf{f}^*} = \prod_{\mathsf{e}^* \in \mathsf{f}^*} \sigma_{\mathsf{e}^*}.
\label{eqn:SM_RPGM}
\ee
Here, $U_{\mathsf{f}^*}$ is the $\mathbb{Z}_2$-valued ``gauge flux" penetrating the plaquette (face) $f^*$ on the dual fractal lattice $\widetilde{\mathcal{L}}^*$, whose value is determined by all the gauge Ising spins $\sigma_{\mathsf{e}^*} = \pm 1$ living on the links (edges) belonging to the plaquette $f^*$, as illustrated in Fig.~\ref{fig:RPGM}(a). Each Ising spin $\sigma_{\mathsf{e}^*}$ in the fractal RPGM is associated with a stabilizer on the edge $\widetilde{\mathsf{e}}^*$ in the code. Hence, the plaquette terms in Eq.~\eqref{eqn:SM_RPGM} are 4-body when in the bulk, while they can be 3-body or 2-body when they share an edge(s) with an $m$-boundary or two $m$-boundaries respectively. The random coupling strength $\tau_{f^*}=\pm 1$ on the plaquette $f^*$ represents the quenched disorder in this model. Its sign is determined by the $Z$ error on the qubit of the fractal code: with probability $p_Z$ an error occurs corresponding to the ``wrong-sign" choice $\tau_{f^*}=-1$ (favoring nontrivial flux $-1$) as shown in the highlighted plaquette (red) in Fig.~\ref{fig:RPGM}(a); with probability $1-p_Z$ no error occurs corresponding to the ``right-sign" choice  $\tau_{f^*}=+1$ (favoring trivial flux $+1$).  The collection of ``wrong-sign" plaquettes constitute flux tubes. 
Note that the flux tube can terminate on the $e$-boundary since the dual lattice $\widetilde{\mathcal{L}}^*$ contains plaquettes parallel to the $e$-boundary to allow the flux to penetrate through. At the same time, it is impossible for the flux tube to terminate on the $m$-boundary and hence $m$-holes, since no plaquette parallel to the $m$-boundary exists on $\widetilde{\mathcal{L}}^*$ [see in Fig.~\ref{fig:RPGM}(a)]. 

The flux tubes on $\widetilde{\mathcal{L}}^*$ correspond to the error chain $E_1$  (1-chain) living on the edges of the original lattice $\widetilde{\mathcal{L}}$, whose endpoints $S_0=\partial E_1$ are $\mathbb{Z}_2$ monopoles living on the cube $\mathsf{c}^*$ in $\widetilde{\mathcal{L}}^*$ (vertex $\mathsf{v}$ in $\widetilde{\mathcal{L}}$), corresponding to the syndromes ($e$-excitations) in the FSC. The MWPM decoder \cite{edmonds_1965} finds a recovery 1-chain $E_{\text{min}}$ on $\widetilde{\mathcal{L}}$ ending at the same set of monopoles (syndromes), \textit{i.e.}, $\partial E_{\text{min}}=S_0$, such that $E_{\text{min}}$ has the minimal length.  This is equivalent to finding $E_{\text{min}}$ such that the sum of disjoint closed flux tubes  $E_1 + E_{\text{min}}$ has minimal energy at zero temperature. On the other hand, the optimal (maximal-likelihood) decoder effectively finds the recovery chain $E_{\text{min}}$ with minimal free energy at finite temperature along the ``Nishimori line" \cite{wang2003confinement} and corresponds to an optimal threshold $p^*_\text{th}$.  In either case, the recovery succeeds if $E_{\text{min}}$ and $E_1$ belongs to the same homology class, \textit{i.e.}, $E_1 + E_{\text{min}}=\partial F$, where $F$ is a collection of faces on $\widetilde{\mathcal{L}}$, and fails if the closed flux tubes $E_1 + E_{\text{min}}$ contains a  homologically nontrivial relative cycle connecting the upper and lower $e$-boundaries.   In the Higgs (ordered) phase where the quenched disorder strength is low ($p_Z$$<$$p^*_\text{th}$), the failure probability is zero in the thermodynamic limit ($L$$\rightarrow$$\infty$) since the free energy cost of such a non-contractible relative cycle diverges linearly with cycle length and become unfavorable.  In contrast, in the confined (disordered) phase  ($p_Z$$>$$p^*_\text{th}$), the failure probability approaches one in the thermodynamic limit.   The error threshold $p_\text{th}$ of the MWPM decoder corresponds to the zero-temperature phase transition point between the Higgs phase and a gauge glass phase, which provides a lower bound of the confinement-Higgs transition occurring exactly at the optimal threshold $p^*_\text{th}$ along the ``Nishmori line" \cite{wang2003confinement}.
Note that the main difference from the case of the RPGM on a cubic lattice is that the flux tubes need to circumvent the $m$-holes [shown in Fig.~\ref{fig:MWPM_lf_plots}(b)].  Interestingly, the phase transition is tunable via the fractal (Hausdorff) dimension $D_H$.

\section{Discussion}
In this work, we proved that there exist decoders with nonzero thresholds for fractal surface codes (FSCs) on lattices with Hausdorff dimensions $D_H=2+\epsilon$. We noted that even in fractal dimensions, there exists a local decoder for the string-like syndromes. We also proved that the fault-tolerant MWPM threshold for the FSC is strictly greater than the same in the 3D surface code. Moreover, for a particular FSC with Hausdorff dimension $D_H\approx 2.966$, we demonstrated sweep decoder thresholds for bit-flip noise that are roughly the same as those for the regular 3D surface code. For the same FSC, we demonstrated code capacity MWPM threshold in the presence of phase-flip noise that is enhanced in comparison to the regular 3D surface code due to the suppression of the number of non-contractible cycles.  

The MWPM threshold provides a lower bound on the confinement-Higgs transition of the RPGM on a fractal lattice. In future work, one can continue studying the optimal threshold of a maximum-likelihood decoder and the exact value of the confinement-Higgs transition along the ``Nishimori line" \cite{wang2003confinement, Dennis_2002}.

In this work, we only studied the performance of the FSC as a quantum memory. Our work demonstrating thresholds on par with the 3D surface code motivates studies of the FSC for quantum computation. In upcoming work, we address implementation of the non-Clifford CCZ gate in a single-shot manner in the FSC, which implies that the space-time overhead scales like $\mathcal{O}(d^{2+\epsilon})$~\footnote{Manuscript in preparation}, which would be a fundamental space-time improvement on techniques such as those from Refs.~\cite{Bombin18,Brown_2020} which use 2D~lattices to simulate the action of 3D~topological codes for computation. 

Lastly, our work inspires further studies of quantum codes on fractal lattices embedded in three dimensions. For instance, the subsystem surface code~\cite{kubica2021singleshot} and the gauge color code~\cite{bombin2015gauge}, which gauge-fix to the 3D surface code and 3D color code, respectively, have properties of confinement and single-shot error correction, could be studied on fractal lattices. We leave this to a forthcoming work.

\textbf{Acknowledgements.---} 
We thank John Smolin and Andrew Cross for the valuable discussions and the help with the large-scale cluster simulations. AD thanks Michael Vasmer for valuable discussions and for sharing the sweep decoder data for the 3D surface code. AD thanks the IBM internship program for support, during which a part of this work was completed. AD is also supported by the Simons Foundation through the collaboration on Ultra-Quantum Matter (651438, AD) and by the Institute for Quantum Information and Matter, an NSF Physics Frontiers Center (PHY-1733907). TJO and GZ are supported by the U.S. Department of Energy, Office of Science, National Quantum Information Science Research Centers, and Co-design Center for Quantum Advantage (C2QA) under contract number DE-SC0012704. The code used for the simulations is publicly available at \href{https://github.com/dua-arpit/single_shot_decoding_fractal}{GitHub}.

\textbf{Author contributions.---}
All authors instigated the study and developed the decoding protocols. A.D. implemented the modified Sweep Decoder, while A.D. and G.Z. implemented the MWPM decoder. T.J., G.Z., and A.D. developed the proofs for the decoders. All authors contributed to the scientific discussion, running simulations, and writing of the manuscript.

\textbf{Corresponding authors.---}
Correspondence to any author. 
\bibliographystyle{quantum}
\bibliography{bib,bib2,bibtex_jochym}

\newpage
\newpage
\pagebreak

\appendix
\section{Proof of the fault-tolerant threshold of a matching decoder for correcting point-like syndromes}
\label{app:matching_proof}
In this section, we prove the existence of the fault-tolerant and code-capacity thresholds of the $Z$ errors in fractal surface codes, for the minimum-weight perfect matching (MWPM) decoder, following the methods from Ref.~\onlinecite{Dennis_2002}. In particular, we consider the presence of measurement errors under a phenomenological noise model: the Pauli-$Z$ error rate and measurement error rate are denoted by $p$ and $q$ respectively~\footnote{We suppress the subscript $Z$ of the notation $p_Z$ used for the $Z$-error rate in the main text}.  We also put bounds on the fault-tolerant and code-capacity thresholds of FSCs and show that they are strictly higher than the counterparts of the 3D surface code. Moreover, we show that in the asymptotic limit of $D_H=2+\epsilon$, both types of thresholds approach those of the 2D surface code. 

\subsection{Fault-tolerant MWPM thresholds}

The FSC is defined on a 3D fractal lattice $\widetilde{\mathcal{L}}^{(D_H)}$ embedded in (being a subset of) a 3D cubic lattice $\mathcal{L}^{(3)}$, \textit{i.e.}, $\widetilde{\mathcal{L}}^{(D_H)} \subset \mathcal{L}^{(3)}$, with $m$-holes being removed from $\mathcal{L}^{(3)}$. Here, $D_H$ represents the Hausdorff dimension of the fractal (space) lattice, and the tilde symbols will always indicate the fractal case from here on. In order to correct the measurement noise, we need to perform $d$ rounds~\cite{Dennis_2002} of error correction, where $d$ is the code distance. The corresponding space-time code is defined on the space-time lattice $\widetilde{\mathcal{L}}^{(D_H)} \times \mathsf{l}_t$ embedded in a 4D hypercubic lattice $\mathcal{L}^{(4)}\equiv \mathcal{L}^{(3)} \times \mathsf{l}_t$. Here, $\mathsf{l}_t$ represents a 1D lattice along the time direction with $d$ edges and $d+1$ vertices. 

On the 4D space-time lattice $\widetilde{\mathcal{L}}^{(D_H)}\times \mathsf{l}_t$, a qubit error event occurs on a space edge $l_\text{s.p.}$ at a certain time step $t$. These space edges are just edges belonging to the space lattice $\widetilde{\mathcal{L}}^{(D_H)}$. On the other hand, a syndrome ($X$-stabilizer measurement)  is located at a vertex $(v,t)$ on the space-time lattice, where $v$ corresponds to the vertex label on the space lattice $\mathcal{L}_{FSC}$ and $t$ labels the time step.  This syndrome is hence denoted by $s(v, t) \in \{0,1\}$.  Now we define the modified syndrome at time $t$ to be the difference between syndromes at time $t+1$ and time $t$ on the same vertex $v$ on the space lattice, \textit{i.e.}, $s'(v, t) = s(v, t+1) - s(v, t)$.  For the special case of $t=d$, we assign $s'(v, d) = s(v, d)$.   Therefore, a measurement error at time $t$ and vertex $v$ of the space lattice corresponds to a time edge $l_\text{T}$ connecting the space-time vertex $(v, t+1)$ and $(v, t)$.  For example, if a single measurement error occur on this time edge $l_\text{T}$, both its neighboring  syndromes $s'(v, t)$ and $s'(v, t+1)$ will be highlighted, \textit{i.e.}, $s'(v, t)=s'(v, t+1)=1$.   

As we see, both the qubit and measurement errors occur on the edges (1-cells) of the 4D space-time lattice. In the absence of measurement errors ($q=0$), qubit errors just reside on the edges of the 3D space lattice. Therefore, we can describe generic errors by an error chain $E$, which mathematically corresponds to a 1-chain of the cell complex, \textit{i.e.}, the 4D or 3D lattice \footnote{From now on, we suppress the subscript of the error chain $E_1$ used in the main text for clarity.}. The error chain $E$ is characterized by a function $n_E(e)$ that takes an edge $e$ to the $\mathbb{Z}_2$-coefficient $n_E(e) \in \{0, 1\}$,  where $n_E(e)=1$ corresponds to the edge being occupied by the error chain $E$. For simplicity, we just consider the isotropic case that qubit and measurement error rates are the same, \textit{i.e.}, $p=q$.  The probability  that error chain $E$ occurs is 
\begin{align}\label{eq:chain_probability}
\nonumber \text{Pr}(E) =& \prod_e (1-p)^{1-n_E}  \\
 =& \bigg[\prod_e (1-p)\bigg] \prod_e\left(\frac{p}{1-p}\right)^{n_E(e)}.
\end{align}
The boundary of the error chain $E$ is the collection of highlighted modified syndromes on the vertices (0-cells) denoted by $S'$, \textit{i.e.}, $S'=\partial E$, which is a 0-chain of the cell complex.  

Given the measured  syndrome information $S'$, the MWPM decoder needs to guess a recovery chain $E_\text{min}$ which is in the same homology class as the actual error chain $E$. First, the recovery chain $E_\text{min}$ should also have the highlighted syndromes $S'$ as its boundary, \textit{i.e.},
\begin{equation}
\partial E_\text{min} = S' = \partial E.   
\end{equation}
Moreover, the recovery succeeds if $E_\text{min}$ is in the same homology class as $E$, which means the following condition needs to be satisfied:
\begin{align}
E + E_\text{min} = \partial F,
\end{align}
where the 1-chain $E + E_\text{min}$ is a cycle, \textit{i.e.}, with no boundary: $\partial (E + E_\text{min} )=0$. In addition, this 1-cycle $E + E_\text{min} $ also needs to be the boundary of a collection of faces $F$ (2-cells), \textit{i.e.}, $\partial F$, and is hence contractible, \textit{i.e.}, homologically trivial. The recovery fails if $E + E_\text{min}$ contains homologically non-trivial cycles which wrap around a 3-torus or connect two different $e$-boundaries on $\widetilde{\mathcal{L}}^{(D_H)}$. From now on, we focus on the case of a 3-torus, \textit{i.e.}, with periodic boundary condition.  The threshold with external $e$-boundaries is expected to be similar.  

Apart from an overall normalization, the error chain $E$ occurs with probability $(\frac{p}{1-p})^{|E|}$ according to Eq.~\eqref{eq:chain_probability}, where $|E|$ denotes the total number of edges on the error chain $E$, \textit{i.e.}, the chain length. The MWPM decoder aims to find $E_\text{min}$ that maximizes this probability, which is equivalent to minimizing $|E| \log(\frac{1-p}{p})$ and effectively the chain length $|E|$ for fixed $p$.

Now we consider bounding the likelihood of homologically non-trivial cycles being contained in $E$$+$$E_\text{min}$, which corresponds to the logical failure rate of the MWPM decoder.  

We consider a particular cycle $C$ on the space-time lattice   $\widetilde{\mathcal{L}}^{(D_H)}\times \mathsf{l}_t$ with $|C| \equiv l$ edges.  The actual error chain $E$ contains $|E|$ edges, and the estimated recovery chain $E_\text{min}$ from the MWPM algorithm contains $|E_\text{min}|$ edges. We then ask the probability that the cycle $C$ is contained in $E+E_\text{min}$.

First, we can get the following inequality for the chain length:   
\begin{equation}
|E| + |E_\text{min}| \ge |E+E_\text{min}| \ge |C| \equiv l.	
\end{equation}
The first $``\ge"$ is due to the fact that some of the edges of $E$ and $E_\text{min}$ can overlap and cancel at their binary ($\mathbb{Z}_2$) sum $E + E_\text{min}$, while the equality holds when there is no overlap between $E$ and $E_\text{min}$, \textit{i.e.}, $\textbf{supp}(E) \cap \textbf{supp}(E_\text{min})=0$, where $\textbf{supp}(E)$ denotes the support of the error chain $E$, \textit{i.e.}, the set of edges $\{l\}$ with $n_E(l)=1$. The second $``\ge"$ simply attributes to the fact that $C$ is a subset of $E+E_\text{min}$. 
Therefore, we can get the following inequality between the chain probability (up to a normalization constant) of $E$, $E_\text{min}$  and $C$:
\begin{align}\label{eq:chain_inequality1}
    \left(\frac{p}{1-p}\right)^{|E_\text{min}|} \left(\frac{p}{1-p}\right)^{|E|} \le \left(\frac{p}{1-p}\right)^{l}.
\end{align}
Moreover, since we have taken $E_\text{min}$ to be the minimal-length chain with the boundary being the highlighted syndrome set $S'$, we have $|E_\text{min}| \le |E|$ and hence 
\begin{align}\label{eq:chain_inequality2}
    \left(\frac{p}{1-p}\right)^{|E|} \le \left(\frac{p}{1-p}\right)^{|E_\text{min}|}.
\end{align}
The combination of Eq.~\eqref{eq:chain_inequality1} and Eq.~\eqref{eq:chain_inequality2} leads to the following inequality:
\begin{align}\label{eq:chain_inequality3}
    \left(\frac{p}{1-p}\right)^{|E|} \le \left[\left(\frac{p}{1-p}\right)^{l}\right]^{\frac{1}{2}}.
\end{align}

Now we consider the probability $P(l)$ that a particular cycle with $l$ edges is contained in $E+E_\text{min}$. There are in total $2^{l}$ ways to distribute errors (edges contained in $E$) at locations on the specified chain since each edge either has the error or not.  For specified error locations, the probability for these errors to occur is 
\begin{align}
p^{|E|}(1-p)^{l-|E_C|} = (1-p)^{l}\left(\frac{p}{1-p}\right)^{|E_C|}. 
\end{align}
where $E_C$ is the part of $E$ that overlaps with $C$. Therefore, we get the following probability 
\begin{align}\label{eq:chain_inequality4}
\text{Pr}(l) &= 2^{l} (1-p)^{l}\left(\frac{p}{1-p}\right)^{|E_C|}\nonumber \\
&\le 2^{l} (1-p)^{l}\left(\frac{p}{1-p}\right)^{|E|}\le 2^{l} p^{\frac{l}{2}}(1-p)^{\frac{l}{2}}, 
\end{align}
where the inequality follows from  Eq.~\eqref{eq:chain_inequality3}.  
 
Next, we can bound the probability of any cycle with $l$ edges contained in the chain $E+E_\text{min}$ via path counting.  The cycle $C$ can be considered as a walk on a lattice which begins and ends at a randomly chosen point on the cycle.  We need to estimate the likelihood that the closed walk is homologically nontrivial (non-contractible). The walks corresponds to connected chain of errors and visit any given edge at most once. It will be convenient to further restrict the walks to be self-avoiding walks (SAWs) which visit any given vertex at most once, with the exception of the starting/ending point which is revisited.  Given any homologically non-trivial closed walk, one can obtain a closed SAW (self-avoiding polygon: SAP) by eliminating some homologically trivial cycles from the walk, which does not change the homology of the cycle and hence the presence or absence of the logical error. 

In order to estimate the logical error rate, we consider SAPs lying between two time slices separated by time steps $T$, and assume $T=O(d)$, where $d=L$ is the code distance which also equals the linear system size in this code family. We denote the number of SAPs with $H$ space (horizontal) edges and $V$ time (vertical) edges by $N_\text{SAP}(H,V)$, and one can express the total number of edges as the sum of the number of the two types of edges: $|C|=H+V$. A self-avoiding random walk can start at any of the $d^{n-1} T$ sites on the $n$-dimensional space-time lattice. In the case of SAP, the starting point can be chosen to be any of the points on the SAP. The probability that $E+E_\text{min}$ contains any SAP with $H$ space edges and $V$ time edges obeys the following inequality:
\begin{equation}
\text{Pr}_\text{SAP}(H, V) \le d^{n-1} T N_\text{SAP}(H, V)2^{H+V} p^{\frac{H+V}{2}}(1-p)^{\frac{H+V}{2}}.    
\end{equation}
 The minimal homologically nontrivial cycle needs to contain at least $d$ space (horizontal) edges, \textit{i.e.}, $H \ge d$.  Therefore, we can bound the logical failure  rate as
\begin{align}\label{eq:failure_probability}
\nonumber \text{Pr}_\text{fail} &\le \sum_V \sum_{H \ge d} \text{Pr}_\text{SAP}(H,V) \\ 
&\le d^{n-1} T \sum_V \sum_{H \ge d} N_\text{SAP}(H,V)[4p(1-p)]^{\frac{H+V}{2}}.
\end{align}

Now we aim to obtain a bound for the fault-tolerant threshold. The number of SAPs does not distinguish the difference between space (horizontal) and time (vertical) edges, and we hence have $N_\text{SAP}(H,V) \equiv N^{(n)}_\text{SAP}(l)$, where $n$ denotes the dimension of the space-time lattice and $l$ denotes the total number of edges on the SAP. For a (hyper)cubic lattice in $n$ dimensions, the first step of the SAP can choose any of the $2n$ directions,  while the subsequent steps will have at most $2n-1$ directions due to self avoidance.  Therefore, one can get the following naive bound
\begin{equation}
N^{(n)}_\text{SAP}(l) \le 2n(2n-1)^{l-1}.    
\end{equation}
However, there exist tighter bounds, found using results in self-avoiding polygons for the (hyper)cubic lattice in $n=2$, $n=3$ and $n=4$ \cite{Dennis_2002}:
\begin{align}
N^{(2)}_\text{SAP}(l) &\le P_2(l)(\mu_2)^l, \quad \mu_3 \approx 2.638, \\ 
N^{(3)}_\text{SAP}(l) &\le P_3(l)(\mu_3)^l, \quad \mu_3 \approx 4.684,  \\ 
N^{(4)}_\text{SAP}(l) &\le P_4(l)(\mu_4)^l, \quad \mu_4 \approx 6.77,     \label{eq:SAP4D}
\end{align}
where $P_2(l)$, $P_3(l)$ and $ P_4(l)$ are polynomials.  

In the fault-tolerant case, we first consider the 4D hypercubic space-time lattice $\mathcal{L}^{(4)}\equiv \mathcal{L}^{(3)} \times \mathsf{l}_t$, and we can hence substitute Eq.~\eqref{eq:SAP4D} into 
Eq.~\eqref{eq:failure_probability} and obtain
\begin{equation}\label{eq:failure_probability2}
\text{Pr}^{(3+1)}_\text{fail} \le  d^3 T \sum_V \sum_{H \ge d} P_4(H+V) [4\mu_4^2p(1-p)]^{\frac{H+V}{2}}.
\end{equation}
Using $H+V \ge d$ and imposing the following condition 
\begin{equation}\label{eq:threshold_condition}
p(1-p) < (4\mu_4^2)^{-1},
\end{equation} we obtain the following inequality
\begin{align}\label{eq:failure_probability3}
\nonumber \text{Pr}^{(3+1)}_\text{fail} &\le  d^3 T \sum_V \sum_{H \ge d} P_4(H+V) [4\mu_4^2p(1-p)]^{d/2} \\
& < Q_4(d, T) [4 \mu_4^2 p(1-p) ]^{d/2},
\end{align}
where $Q_4(d, T)$ is some polynomial of $d$ and $T$.  The second inequality in Eq.~\eqref{eq:failure_probability3} comes from the fact that there are in total $2d^3 T$ space edges and $d^3 T$ time edges in the 4D hypercubic space-time lattice, so there are at most $2d^6 T$ terms in the sum in the first line of Eq.~\eqref{eq:failure_probability3} which is absorbed into $Q_4(d, T)$.  Therefore, as long as Eq.~\eqref{eq:threshold_condition} is satisfied, the logical failure rate $\text{Pr}_\text{fail}$ decays exponentially with the code distance $d$ and hence approaches 0 in the thermodynamic limit $d \rightarrow \infty$. One can then obtain an analytic lower bound of the fault-tolerant threshold of the usual 3D surface code based on Eq.~\eqref{eq:threshold_condition} \cite{Dennis_2002}:
\begin{equation}\label{eq:bound_4D}
  p^{(3+1)}_\text{th} > 0.00548.  
\end{equation}

Now we switch to the case of the fractal surface code defined on the space-time lattice $\widetilde{\mathcal{L}}^{(D_H)} \times \mathsf{l}_t$ which is embedded in the 4D hypercubic space-time lattice $\mathcal{L}^{(4)}$$\equiv$$\mathcal{L}^{(3)}\times \mathsf{l}_t$.   The only difference from the hypercubic case  is that we need to replace $N^{(4)}_\text{SAP}(l)$ in Eq.~\eqref{eq:failure_probability} with  $\widetilde{N}^{(D_H+1)}_\text{SAP}(l)$, which represents the total number of self-avoiding polygon on the fractal space-time lattice with Hausdorff dimension $D_H+1$.   Since the fractal space-time lattice is just a subset of the 4D hypercubic lattice, \textit{i.e.}, $\widetilde{\mathcal{L}}^{(D_H)} \times \mathsf{l}_t 
\subset \mathcal{L}^{(4)}$ with edges in the hole regions being removed, some SAPS present in the hypercubic lattice is hence removed in the fractal lattice case. Therefore, the number of SAPs in the fractal space-time lattice must be strictly smaller than the SAPs in the 4D hypercubic lattice, \textit{i.e.}, 
\begin{equation}\label{eq:SAP_inequality}
\widetilde{N}^{(D_H+1)}_\text{SAP}(l) < N^{(4)}_\text{SAP}(l).
\end{equation}
Based on Eq.~\eqref{eq:SAP_inequality} and the bound in Eq.~\eqref{eq:failure_probability}, one gets the following bound:
\begin{equation}\label{eq:failure_bound}
\widetilde{\text{Pr}}^{(D_H + 1)}_\text{fail} < \text{Pr}^{(3+1)}_\text{fail}. 
\end{equation}
where $\widetilde{\text{Pr}}^{(D_H + 1)}_\text{fail}$ and $\text{Pr}^{(3+1)}_\text{fail}$ represent the logical failure rate of the fractal surface code with Hausdorff dimension $D_H$ and the 3D surface code respectively in the fault-tolerance context.   Since one has  $\text{Pr}^{(3 + 1)}_\text{fail} \rightarrow 0$ in the thermodynamic limit ($d\rightarrow \infty$) and  below the fault-tolerant error threshold, \textit{i.e.},  $p<p^{(3+1)}_\text{th}$,  one will also get $\widetilde{\text{Pr}}^{(D_H + 1)}_\text{fail} \rightarrow 0$ in the thermodynamic limit due to the bound in Eq.~\eqref{eq:failure_bound}.   This proves that the fractal surface code is also fault-tolerant below a certain threshold. Furthermore, the fault-tolerant threshold for the fractal surface code must be strictly larger than the 3D surface code, \textit{i.e.},
\begin{equation}\label{eq:threshold_bound}
\widetilde{p}^{(D_H + 1)}_\text{th} > p^{(3+1)}, 
\end{equation}
since for any $p<p^{(3+1)}$ one always has $\widetilde{\text{Pr}}^{(D_H + 1)}_\text{fail} \rightarrow 0$ in the thermodynamic limit. When the Hausdorff dimension of the fractal code is asymptotically approaching 3D, the threshold will also approach to that of the 3D surface code, \textit{i.e.}, $\lim_{D_H \rightarrow 3} \widetilde{p}^{(D_H + 1)}_\text{th} \rightarrow p^{(3+1)} $.  
According to the derived bound in Eq.~\eqref{eq:bound_4D}, we can also analytically bound the error threshold for the fractal surface code, \textit{i.e.}, 
 \boxedeq{eq:LBFTDHa}{\widetilde{p}^{(D_H + 1)}_\text{th} > 0.00548 \quad    \text{(analytic)}.}

Meanwhile, existing numerical simulation in the literature gives an estimation of the fault-tolerant threshold
of the 3D surface code  $p^{(3 + 1)} \approx 0.0125$ \cite{vasmer2019fault_thesis} under the phenomenological noise model. Therefore, according to the bound in Eq.~\eqref{eq:threshold_bound}, we can improve the lower bound for the fractal code under phenomenological noise to be
 \boxedeq{eq:LBFTDHn}{\widetilde{p}^{(D_H + 1)}_\text{th} > 0.0125  \quad \text{(numerical)}.}

Next, we consider the upper bound and asymptotic limit of the family of fractal surface codes with Hausdorff dimension $D_H$ asymptotically approaching $2+\epsilon$. We first consider the fault-tolerant threshold of a 2D surface code, which  effectively corresponds to a matching problem on a 3D cubic space-time lattice $\mathcal{L}^{(3)}\equiv \mathcal{L}^{(2)} \times \mathsf{l}_t$, where $\mathcal{L}^{(2)}$ represents the 2D space lattice.  Similar to the derivation of  Eq.~\eqref{eq:failure_probability3},  we can obtain the following bound:
\begin{equation}
    \text{Pr}^{(2+1)}_\text{fail} < Q_3(d, T) [4 \mu_3^2 p(1-p) ]^{d/2},
\end{equation}
given the following condition:
\begin{equation}\label{eq:threshold_condition_2}
p(1-p) < (4\mu_3^2)^{-1}. 
\end{equation}
Here, $Q_3(d, T)$ is some polynomial of $d$ and $T$.
This in turn leads to the following analytic bound on the fault-tolerant threshold of the 2D surface code.
\begin{equation}\label{eq:bound_3D}
  p^{(2+1)}_\text{th} > 0.0114.  
\end{equation}
Now when we gradually reduce the fractal dimension $D_H$ in the family of fractal surface codes defined on the fractal cube geometry, the fractal space-time lattice $\widetilde{\mathcal{L}}^{D_H}\times \mathsf{l}_t$  is asymptotically approaching a 3D cubic lattice, \textit{i.e.}, 
\begin{equation}
  \widetilde{\mathcal{L}}^{(2+\epsilon)}\times \mathsf{l}_t \rightarrow  \mathcal{L}^{(2)}\times \mathsf{l}_t   = \mathcal{L}^{(3)},
\end{equation}
which means all the quantitative properties of the fractal lattice is asymptotically approaching those of the cubic lattice, including the scaling of the number of SAP:
\begin{equation}\label{eq:asymptotic}
    \lim_{\epsilon \rightarrow 0} \widetilde{N}^{(2+\epsilon+1)}_\text{SAP}(l) \sim N^{(3)}_\text{SAP}(\ell)
\end{equation}
This in turn leads to the following asymptotic threshold: 
\begin{equation}
\lim_{\epsilon \rightarrow 0}  \widetilde{p}^{(2+\epsilon+1)}_\text{th} = p^{(2+1)}_\text{th},
\end{equation}
and the upper bound for the whole family of fractal surface code
\begin{equation}\label{eq:upper_bound}
\widetilde{p}^{(D_H+1)}_\text{th} < p^{(2+1)}_\text{th}.
\end{equation}
According to Eq.~\eqref{eq:bound_3D} and \eqref{eq:asymptotic}, we can get the following strict bound on the asymptotic threshold:
 \boxedeq{eq:LBFTDH2a}{\lim_{\epsilon \rightarrow 0}  \widetilde{p}^{(2+\epsilon+1)}_\text{th} > 0.0114 \quad  \text{(analytic)}.}

Meanwhile, existing numerical MWPM simulation gives an estimation of the fault-tolerant threshold of the 2D surface code $p^{(2 + 1)} \approx 0.029$ \cite{Dennis_2002, wang2003confinement} under the phenomenological noise model. 
Therefore, we can get an estimation based on the numerical value as:
 \boxedeq{eq:LBFTDH2n}{\lim_{\epsilon \rightarrow 0}  \widetilde{p}^{(2+\epsilon+1)}_\text{th} \approx 0.029 \quad  \text{(numerical)}.}
In sum, we have shown that the fault-tolerant threshold of the fractal surface codes under pure Pauli-$Z$ and measurement errors is strictly higher than the 3D surface code, and asymptotically approaching the value of the 2D surface code, which implies a significant practical advantage due to its additional ability of performing logical CCZ gate. 

\subsection{Code capacity MWPM thresholds}
Finally, we also discuss the code-capacity error threshold, \textit{i.e.}, in the absence of measurement error ($q=0$),  which is numerically studied in the main text.  In this case, we only consider the space lattice.  For the 3D and 2D surface codes, the thresholds correspond to the matching problems on a 3D lattice $\mathcal{L}^{(3)}$ and 2D lattice $\mathcal{L}^{(2)}$ respectively.  Similar to the derivation of  Eq.~\eqref{eq:failure_probability3},  we can obtain the following bounds:
\begin{align}
    \text{Pr}^{(3)}_\text{fail} <& Q_3(d, T) [4 \mu_3^2 p(1-p) ]^{d/2}, \\
    \text{Pr}^{(2)}_\text{fail} <& Q_2(d, T) [4 \mu_2^2 p(1-p) ]^{d/2},
\end{align}
given the following conditions respectively:
\begin{align}
p(1-p) <&  (4\mu_3^2)^{-1},  \\
p(1-p) <&  (4\mu_2^2)^{-1}. 
\end{align}
We hence get the following analytic bound on the code-capacity error thresholds for 3D and 2D surface codes:
\begin{align}
  p^{(3)}_\text{th} >& 0.0114,   \\
  p^{(2)}_\text{th} >& 0.0373. 
\end{align}
Similar to Eq.~\eqref{eq:threshold_bound} and Eq.~\eqref{eq:upper_bound} and the corresponding arguments in the fault-tolerant scenario, we obtain the upper and lower bound of the code-capacity error threshold of the family of fractal codes with Hausdorff dimension $2<D_H<3$ in terms of the thresholds of the 2D and 3D surface codes, \textit{i.e.}, 
\begin{equation}
    p^{(3)}_\text{th} < \widetilde{p}^{(D_H)}_\text{th} < p^{(2)}_\text{th}.
\end{equation}
We hence get the following analytic lower bound:
 \boxedeq{eq:LBCCDHa}{\widetilde{p}^{(D_H)}_\text{th} > 0.0114 \quad    \text{(analytic)}}
and the analytic bound on the asymptotic code-capacity threshold at $D_H=2+\epsilon$:
 \boxedeq{eq:LBCCDH2a}{\lim_{\epsilon \rightarrow 0}  \widetilde{p}^{(2+\epsilon)}_\text{th} > 0.0373 \quad  \text{(analytic)}.}
Meanwhile, existing numerical simulation gives the estimate of the code-capacity threshold of the 3D surface code $p^{(3)}\approx 0.029$ under the phenomenological noise model \cite{wang2003confinement, vasmer2019fault_thesis}, which is exactly the same as the fault-tolerant threshold of the 2D surface code $p^{(2+1)}$ mentioned above.  On the other hand, the numerical estimate of the code-capacity threshold of the 2D surface code is $p^{(2)}\approx 0.1031$ \cite{wang2003confinement}.  Therefore, we get the following numerical lower bound of the class of fractal surface codes:
 \boxedeq{eq:LBCCDHn}{\widetilde{p}^{(D_H)}_\text{th} > 0.029 \quad    \text{(numerical)},}
and the numerical estimation on the asymptotic code-capacity threshold:
 \boxedeq{eq:LBCCDH2n}{\lim_{\epsilon \rightarrow 0}  \widetilde{p}^{(2+\epsilon)}_\text{th} \approx 0.1031 \quad  \text{(numerical)}.}
Indeed, our numerical simulation on the fractal $FC(3,1)$ in the main text shows that $\widetilde{p}^{(D_H=2.966)}_\text{th} > p^{(3)}_\text{th} \approx 0.029$.

\section{Counting argument for topological degeneracy from $m$-holes}
\label{app:gsd_mhole_counting}
In the main text, we presented a homological argument to show that the $m$-holes do not encode any logical information. Here, we confirm that via a simple counting argument. 

A single $L\times L\times L$ $m$-hole changes the number of $Z$ stabilizers $N_Z$ to $N_Z^\prime=N_Z-3L^2(L-1)$ since there are $3(L-1)$ planes from which $L^2$ plaquette operators are removed. The number of $X$ stabilizers changes from $N_X$ to $N_X^\prime=N_X-(L-1)^3$ because $(L-1)^3$ operators lose all their qubits inside the hole. The number of relations among the $Z$ and $X$ stabilizers also changes from $R_Z$ and $R_X$ to $R_Z^\prime=R_Z-(L^3-1)$ and $R_X^\prime=R_X$ respectively. The $L^3-1$ comes from the relations on $L^3$ cubes that are removed but leaving a single relation coming from plaquette operators on the surface of the $m$-hole. The number of physical qubits changes from $N_q$ to $N_q^\prime=N_q-3L(L-1)^2$ since there are $3L$ planes, each with $(L-1)^2$ physical qubits, inside the $L\times L\times L$ $m$-hole. Thus, we have the number of encoded qubits $k^\prime$ as 
\begin{align*}
k^\prime&=N_q^\prime -(N_S^\prime-R^\prime)\\
&=N_q^\prime -(N_Z^\prime+N_X^\prime-R_Z^\prime-R_X^\prime)\\
&=N_q-3L(L-1)^2\\
&~-(N_Z-3L^2(L-1)+N_X-(R_Z-L^3+1)-R_X)\\
&=N_q-(N_S-R)\\
&=k
\end{align*}
where $N_S=N_Z+N_X$, $R=R_Z+R_X$ and $k$ is the number of encoded qubits before making the $m$-hole. The argument generalizes to $L_x\times L_y \times L_z$ $m$-holes and an arbitrary number of them. 
\newpage
\section{Sweep decoder threshold plots for rounds $N=0,32$}
\label{app:sweep_thresholds_lowN}
\vspace{-3mm}

Below, we show the threshold plots for levels $\ell=1,2$ of the FSC for the number of rounds of measurements as $N=1,33$. 
\vspace{-3mm}

\begin{figure}[H]
    \centering
    \vspace{2mm}
    \sidesubfloat[]{ \includegraphics[width=0.63\columnwidth]{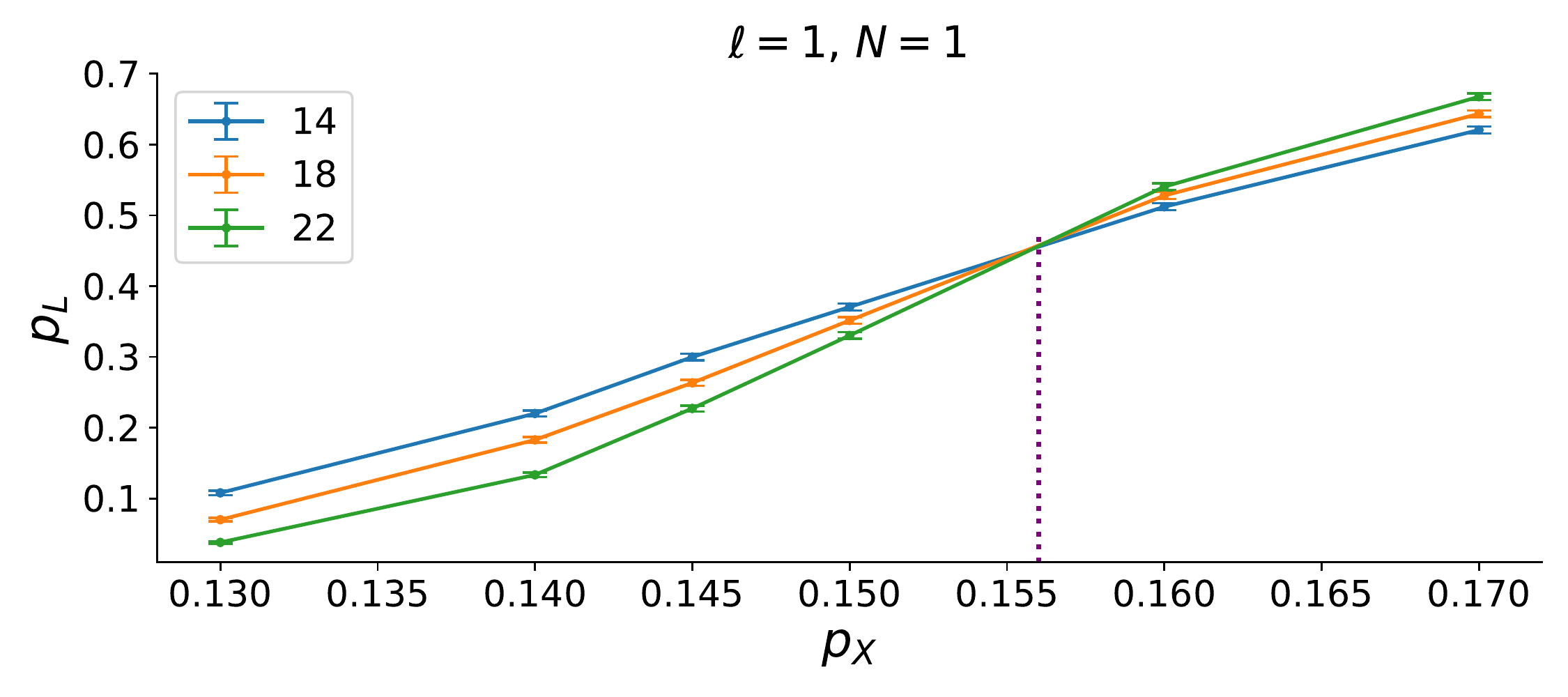}}\\
   \sidesubfloat[]{ \includegraphics[width=0.63\columnwidth]{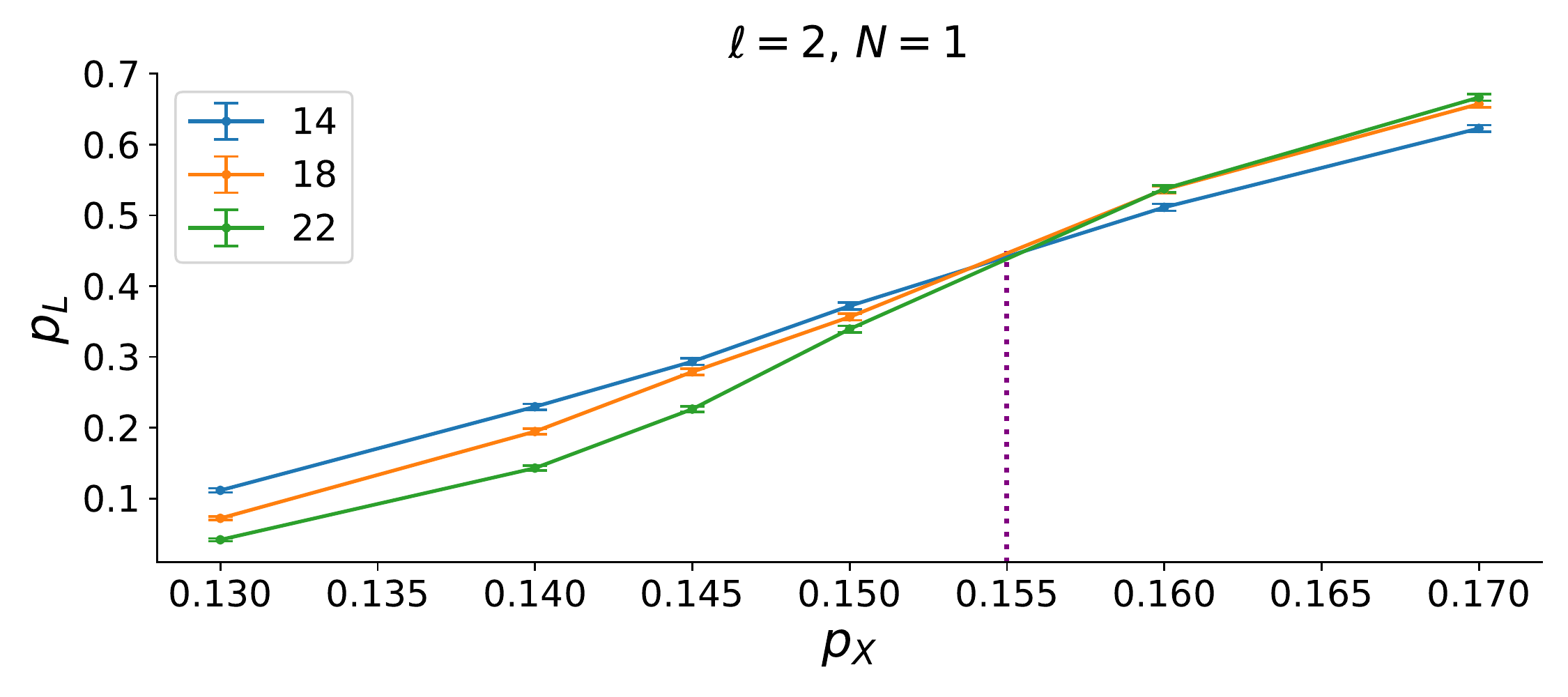}}\\
   \sidesubfloat[]{ \includegraphics[width=0.63\columnwidth]{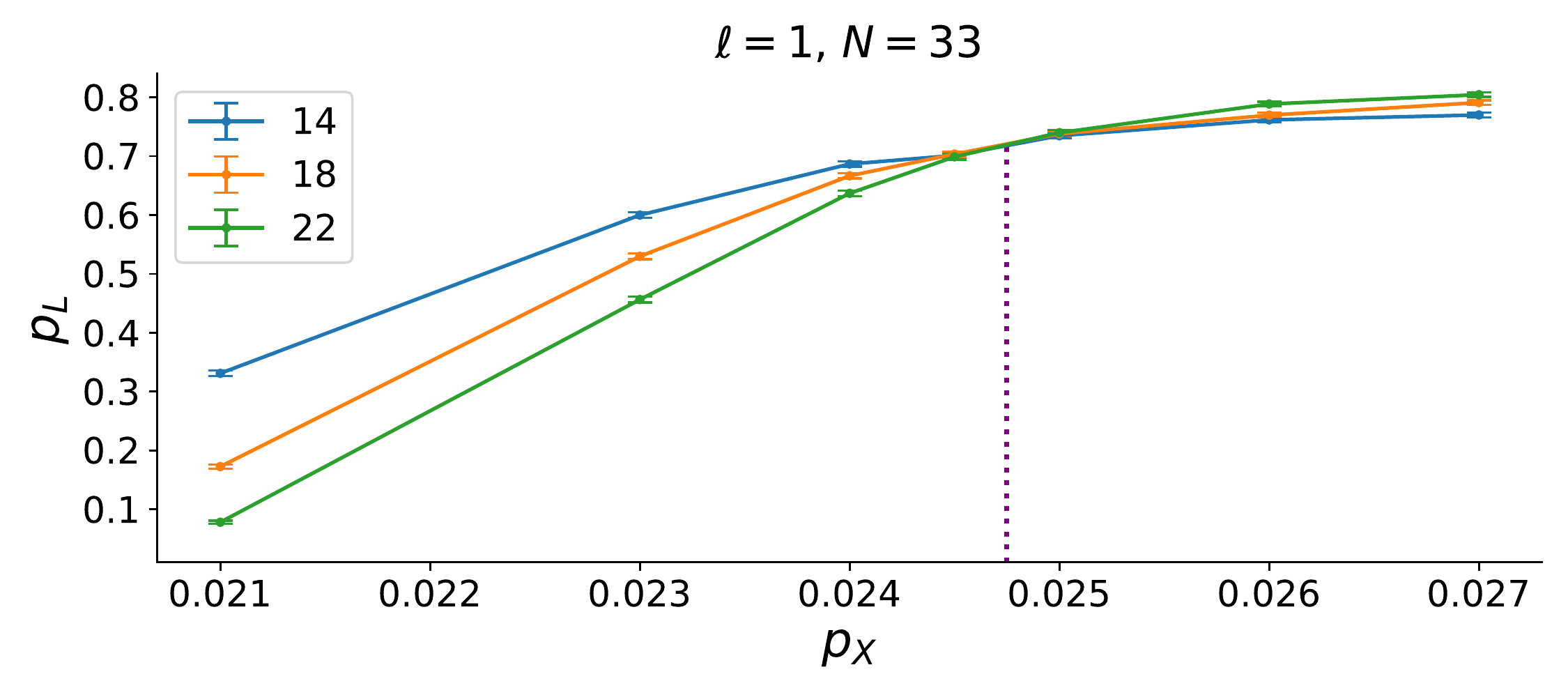}}\\
   \sidesubfloat[]{ \includegraphics[width=0.63\columnwidth]{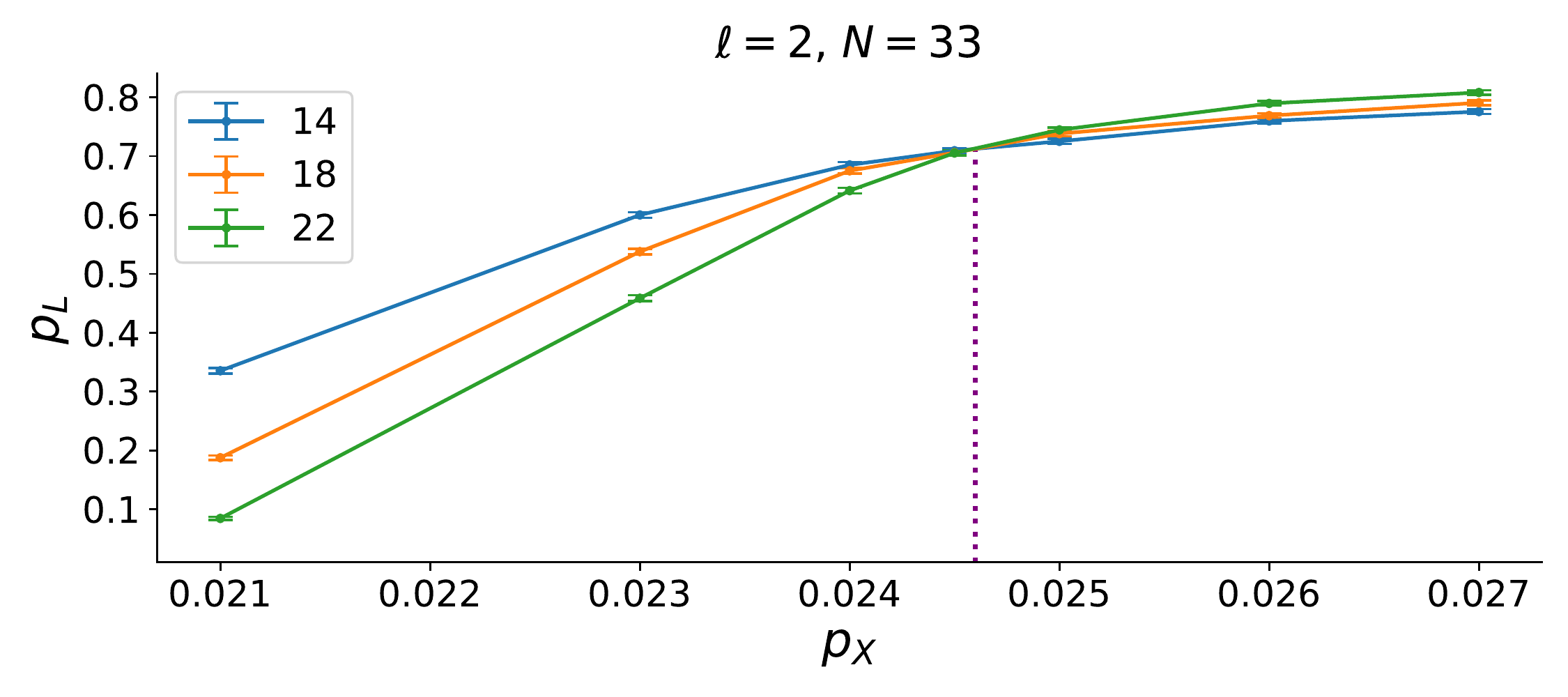}}\\
    \caption{Sweep decoder performance for $N=0,32$ rounds of stabilizer measurements for levels $l=1,2$ of the FSC. Logical failure rate is plotted as a function of the physical error rate $p_X$. The measurement error rate $q$ is set to be same as the physical error rate $p_X$.}
    \label{fig:Sweep_plots_thrN133}
\end{figure}
\bibliographystyle{quantum}
\bibliography{bib,bib2,bibtex_jochym}

\begin{thebibliography}{10}

\bibitem{zhu2021topological}
Guanyu Zhu, Tomas Jochym-O'Connor, and Arpit Dua.
\newblock ``Topological order, quantum codes and quantum computation on fractal geometries''~(2021).
\newblock  \href{http://arxiv.org/abs/2108.00018}{arXiv:2108.00018}.

\bibitem{bravyi1998quantum}
S.~B. Bravyi and A.~Yu. Kitaev.
\newblock ``Quantum codes on a lattice with boundary''~(1998).
\newblock  \href{http://arxiv.org/abs/quant-ph/9811052}{arXiv:quant-ph/9811052}.

\bibitem{Kitaev03}
Alexei~Y. Kitaev.
\newblock ``Fault-tolerant quantum computation by anyons''.
\newblock \href{https://dx.doi.org/10.1016/S0003-4916(02)00018-0}{Annals of Physics {\bf 303}, 2--30}~(2003).

\bibitem{Dennis_2002}
Eric Dennis, Alexei Kitaev, Andrew Landahl, and John Preskill.
\newblock ``Topological quantum memory''.
\newblock \href{https://dx.doi.org/10.1063/1.1499754}{Journal of Mathematical Physics {\bf 43}, 4452--4505}~(2002).

\bibitem{Bombin_2006}
H.~Bombin and M.~A. Martin-Delgado.
\newblock ``Topological quantum distillation''.
\newblock \href{https://dx.doi.org/10.1103/physrevlett.97.180501}{Physical Review Letters{\bf 97}}~(2006).

\bibitem{Fowler_2012_large_scale}
Austin~G. Fowler, Matteo Mariantoni, John~M. Martinis, and Andrew~N. Cleland.
\newblock ``Surface codes: Towards practical large-scale quantum computation''.
\newblock \href{https://dx.doi.org/10.1103/physreva.86.032324}{Physical Review A{\bf 86}}~(2012).

\bibitem{Bravyi_2013_PRL}
Sergey Bravyi and Robert König.
\newblock ``Classification of topologically protected gates for local stabilizer codes''.
\newblock \href{https://dx.doi.org/10.1103/physrevlett.110.170503}{Physical Review Letters{\bf 110}}~(2013).

\bibitem{JOKY18}
Tomas Jochym-O'Connor, Aleksander Kubica, and Theodore~J. Yoder.
\newblock ``Disjointness of stabilizer codes and limitations on fault-tolerant logical gates''.
\newblock \href{https://dx.doi.org/10.1103/PhysRevX.8.021047}{Phys. Rev. X {\bf 8}, 021047}~(2018).

\bibitem{bravyi2005}
Sergey Bravyi and Alexei Kitaev.
\newblock ``Universal quantum computation with ideal clifford gates and noisy ancillas''.
\newblock \href{https://dx.doi.org/10.1103/PhysRevA.71.022316}{Phys. Rev. A {\bf 71}, 022316}~(2005).

\bibitem{litinski19}
Daniel Litinski.
\newblock ``A game of surface codes: Large-scale quantum computing with lattice surgery''.
\newblock \href{https://dx.doi.org/10.22331/q-2019-03-05-128}{Quantum {\bf 3}, 128}~(2019).

\bibitem{levin2005}
Michael~A. Levin and Xiao-Gang Wen.
\newblock ``String-net condensation: A physical mechanism for topological phases''.
\newblock \href{https://dx.doi.org/10.1103/PhysRevB.71.045110}{Phys. Rev. B {\bf 71}, 045110}~(2005).

\bibitem{Koenig_2010}
Robert Koenig, Greg Kuperberg, and Ben~W. Reichardt.
\newblock ``Quantum computation with turaev{\textendash}viro codes''.
\newblock \href{https://dx.doi.org/10.1016/j.aop.2010.08.001}{Annals of Physics {\bf 325}, 2707--2749}~(2010).

\bibitem{schotte2020quantum}
Alexis Schotte, Guanyu Zhu, Lander Burgelman, and Frank Verstraete.
\newblock ``Quantum error correction thresholds for the universal fibonacci turaev-viro code''.
\newblock \href{https://dx.doi.org/10.1103/PhysRevX.12.021012}{Phys. Rev. X {\bf 12}, 021012}~(2022).

\bibitem{Zhu:2020_constant_depth}
Guanyu Zhu, Ali Lavasani, and Maissam Barkeshli.
\newblock ``Universal logical gates on topologically encoded qubits via constant-depth unitary circuits''.
\newblock \href{https://dx.doi.org/10.1103/PhysRevLett.125.050502}{Phys. Rev. Lett. {\bf 125}, 050502}~(2020).

\bibitem{Lavasani2019universal}
Ali Lavasani, Guanyu Zhu, and Maissam Barkeshli.
\newblock ``Universal logical gates with constant overhead: instantaneous dehn twists for hyperbolic quantum codes''.
\newblock \href{https://dx.doi.org/10.22331/q-2019-08-26-180}{Quantum {\bf 3}, 180}~(2019).

\bibitem{Zhu:2018CodeLong}
Guanyu Zhu, Ali Lavasani, and Maissam Barkeshli.
\newblock ``Instantaneous braids and dehn twists in topologically ordered states''.
\newblock \href{https://dx.doi.org/10.1103/PhysRevB.102.075105}{Phys. Rev. B {\bf 102}, 075105}~(2020).

\bibitem{Zhu:2017tr}
Guanyu Zhu, Mohammad Hafezi, and Maissam Barkeshli.
\newblock ``Quantum origami: Transversal gates for quantum computation and measurement of topological order''.
\newblock \href{https://dx.doi.org/10.1103/PhysRevResearch.2.013285}{Phys. Rev. Research {\bf 2}, 013285}~(2020).

\bibitem{KYP15}
Aleksander Kubica, Beni Yoshida, and Fernando Pastawski.
\newblock ``Unfolding the color code''.
\newblock \href{https://dx.doi.org/10.1088/1367-2630/17/8/083026}{New Journal of Physics {\bf 17}, 083026}~(2015).

\bibitem{VB19}
Michael Vasmer and Dan~E. Browne.
\newblock ``Three-dimensional surface codes: Transversal gates and fault-tolerant architectures''.
\newblock \href{https://dx.doi.org/10.1103/PhysRevA.100.012312}{Physical Review A {\bf 100}, 012312}~(2019).

\bibitem{Bombin15a}
H{\'e}ctor Bomb{\'\i}n.
\newblock ``Gauge color codes: optimal transversal gates and gauge fixing in topological stabilizer codes''.
\newblock \href{https://dx.doi.org/10.1088/1367-2630/17/8/083002}{New J. Phys. {\bf 17}, 083002}~(2015).

\bibitem{Bombin15b}
H{\'e}ctor Bomb{\'\i}n.
\newblock ``Single-shot fault-tolerant quantum error correction''.
\newblock \href{https://dx.doi.org/10.1103/PhysRevX.5.031043}{Phys. Rev. X {\bf 5}, 031043}~(2015).

\bibitem{kubica_ca}
Aleksander Kubica and John Preskill.
\newblock ``Cellular-automaton decoders with provable thresholds for topological codes''.
\newblock \href{https://dx.doi.org/10.1103/PhysRevLett.123.020501}{Phys. Rev. Lett. {\bf 123}, 020501}~(2019).

\bibitem{vasmer2020cellular}
Michael Vasmer, Dan~E. Browne, and Aleksander Kubica.
\newblock ``Cellular automaton decoders for topological quantum codes with noisy measurements and beyond''~(2020).
\newblock  \href{http://arxiv.org/abs/2004.07247}{arXiv:2004.07247}.

\bibitem{Brown_ROMP_2016}
Benjamin~J. Brown, Daniel Loss, Jiannis~K. Pachos, Chris~N. Self, and James~R. Wootton.
\newblock ``Quantum memories at finite temperature''.
\newblock \href{https://dx.doi.org/10.1103/RevModPhys.88.045005}{Rev. Mod. Phys. {\bf 88}, 045005}~(2016).

\bibitem{Fowler_PRL2012}
Austin~G. Fowler, Adam~C. Whiteside, and Lloyd C.~L. Hollenberg.
\newblock ``Towards practical classical processing for the surface code''.
\newblock \href{https://dx.doi.org/10.1103/physrevlett.108.180501}{Physical Review Letters{\bf 108}}~(2012).

\bibitem{memories_diss_2011}
Fernando Pastawski, Lucas Clemente, and Juan~Ignacio Cirac.
\newblock ``Quantum memories based on engineered dissipation''.
\newblock \href{https://dx.doi.org/10.1103/PhysRevA.83.012304}{Phys. Rev. A {\bf 83}, 012304}~(2011).

\bibitem{mallek2021fabrication}
Justin~L. Mallek, Donna-Ruth~W. Yost, Danna Rosenberg, Jonilyn~L. Yoder, Gregory Calusine, Matt Cook, Rabindra Das, Alexandra Day, Evan Golden, David~K. Kim, Jeffery Knecht, Bethany~M. Niedzielski, Mollie Schwartz, Arjan Sevi, Corey Stull, Wayne Woods, Andrew~J. Kerman, and William~D. Oliver.
\newblock ``Fabrication of superconducting through-silicon vias''~(2021).
\newblock  \href{http://arxiv.org/abs/2103.08536}{arXiv:2103.08536}.

\bibitem{2017_3DISC}
D.~Rosenberg, D.~Kim, R.~Das, D.~Yost, S.~Gustavsson, D.~Hover, P.~Krantz, A.~Melville, L.~Racz, G.~O. Samach, and et~al.
\newblock ``3d integrated superconducting qubits''.
\newblock \href{https://dx.doi.org/10.1038/s41534-017-0044-0}{npj Quantum Information{\bf 3}}~(2017).

\bibitem{IBM3D}
Jerry Chow, Oliver Dial, and Jay Gambetta.
\newblock ``$\text{IBM Quantum}$ breaks the 100‑qubit processor barrier''~(2021).

\bibitem{bartolucci2021}
Sara Bartolucci, Patrick Birchall, Hector Bombin, Hugo Cable, Chris Dawson, Mercedes {Gimeno-Segovia}, Eric Johnston, Konrad Kieling, Naomi Nickerson, Mihir Pant, Fernando Pastawski, Terry Rudolph, and Chris Sparrow.
\newblock ``Fusion-based quantum computation''~(2021).
\newblock  \href{http://arxiv.org/abs/2101.09310}{arXiv:2101.09310}.

\bibitem{bombin2021}
H{\'e}ctor Bomb{\'i}n, Isaac~H. Kim, Daniel Litinski, Naomi Nickerson, Mihir Pant, Fernando Pastawski, Sam Roberts, and Terry Rudolph.
\newblock ``Interleaving: Modular architectures for fault-tolerant photonic quantum computing''~(2021).
\newblock  \href{http://arxiv.org/abs/2103.08612}{arXiv:2103.08612}.

\bibitem{BravyiHaah2013_3Dcubic}
Sergey Bravyi and Jeongwan Haah.
\newblock ``Quantum self-correction in the 3d cubic code model''.
\newblock \href{https://dx.doi.org/10.1103/PhysRevLett.111.200501}{Phys. Rev. Lett. {\bf 111}, 200501}~(2013).

\bibitem{wang2003confinement}
Chenyang Wang, Jim Harrington, and John Preskill.
\newblock ``Confinement-higgs transition in a disordered gauge theory and the accuracy threshold for quantum memory''.
\newblock \href{https://dx.doi.org/10.1016/s0003-4916(02)00019-2}{Annals of Physics {\bf 303}, 31--58}~(2003).

\bibitem{BombinSM2009}
Helmut~G. Katzgraber, H.~Bombin, and M.~A. Martin-Delgado.
\newblock ``Error threshold for color codes and random three-body ising models''.
\newblock \href{https://dx.doi.org/10.1103/PhysRevLett.103.090501}{Phys. Rev. Lett. {\bf 103}, 090501}~(2009).

\bibitem{edmonds_1965}
Jack Edmonds.
\newblock ``Paths, trees, and flowers''.
\newblock \href{https://dx.doi.org/10.4153/CJM-1965-045-4}{Canadian Journal of Mathematics {\bf 17}, 449--467}~(1965).

\bibitem{Bombin18}
Hector Bombin.
\newblock ``2d quantum computation with 3d topological codes''~(2018).
\newblock  \href{http://arxiv.org/abs/1810.09571}{arXiv:1810.09571}.

\bibitem{Brown_2020}
Benjamin~J. Brown.
\newblock ``A fault-tolerant non-clifford gate for the surface code in two dimensions''.
\newblock \href{https://dx.doi.org/10.1126/sciadv.aay4929}{Science Advances{\bf 6}}~(2020).

\bibitem{kubica2021singleshot}
Aleksander Kubica and Michael Vasmer.
\newblock ``Single-shot quantum error correction with the three-dimensional subsystem toric code''~(2021).
\newblock  \href{http://arxiv.org/abs/2106.02621}{arXiv:2106.02621}.

\bibitem{bombin2015gauge}
H.~Bombin.
\newblock ``Gauge color codes: Optimal transversal gates and gauge fixing in topological stabilizer codes''~(2015).
\newblock  \href{http://arxiv.org/abs/1311.0879}{arXiv:1311.0879}.

\bibitem{vasmer2019fault_thesis}
Michael John~George Vasmer.
\newblock ``Fault-tolerant quantum computing with three-dimensional surface codes''.
\newblock PhD thesis.
\newblock UCL (University College London).
\newblock ~(2019).

\end{thebibliography}
\end{document}